\newcommand{\lexico}{\text{lex}}
\newcommand{\leqlex}{\leq_{\lexico}}
\newcommand{\ltlex}{<_{\lexico}}
\newcommand{\obracew}[2]{{\overset{#2}{\overbrace{#1}}}}
\newcommand{\Id}{{\mathit{Id}}}
\newcommand{\Ccal}{\mathcal{C}}
\newcommand{\Nat}{{\mathbb{N}}} 
\renewcommand{\emptyset}{\varnothing}
\renewcommand{\epsilon}{\varepsilon} 
\renewcommand{\setminus}{\smallsetminus} 
\newcommand{\sat}{\models}      
\newcommand{\egdef}{\stackrel{\mbox{\begin{scriptsize}def\end{scriptsize}}}{=}}
\newcommand{\equivdef}{\stackrel{\mbox{\begin{scriptsize}def\end{scriptsize}}}{\Leftrightarrow}}
\newcommand{\tuple}[1]{\langle #1 \rangle}
\newcommand{\size}[1]{{\mathopen{\mid}#1\mathclose{\mid}}}
\newcommand{\subword}{\sqsubseteq}
\newcommand{\invsubword}{\sqsupseteq}
\theoremstyle{plain}
\newtheorem{fact}[theorem]{Fact}
\newdimen\arrowsize
\def\ins{\mathrel{\tikz[baseline]\path[-arcs,line width=0.4pt](0pt,.6ex) edge (1em,.6ex);}}
\def\del{\mathrel{\tikz[baseline]\path[-arcs,dash pattern=on 1.5pt off 1.5pt,line width=0.4pt](0pt,.6ex) edge (1em,.6ex);}}
\newcommand{\INS}{{\mathop{\mathbf{ins}}}}
\newcommand{\DEL}{{\mathop{\mathbf{del}}}}
\newcommand{\Dom}{{\mathop{\mathit{Dom}}}}
\newcommand{\stepgram}{\Rightarrow}
\newcommand{\renaming}{\mathsf{R}}
\newcommand{\B}{\boxempty}
\begin{document}


\title{Toward a Compositional Theory of Leftist Grammars and Transformations\thanks{Work supported by the Agence Nationale
  de la Recherche, grant ANR-06-SETIN-001.}}
\author{P.\ Chambart \and Ph.\ Schnoebelen}
\institute{LSV, ENS Cachan, CNRS\\
	61, av.\ Pdt.\ Wilson, F-94230 Cachan, France
}
\maketitle



\begin{abstract}
Leftist grammars [Motwani \textit{et al.}, STOC 2000] are special semi-Thue
systems where symbols can only insert or erase to their left. 
We develop a theory of leftist grammars \emph{seen as word transformers}
as a tool toward rigorous analyses of their computational
power. 
Our main contributions in this first paper are (1) constructions proving
that leftist transformations are closed under compositions and transitive
closures, and (2) a proof that bounded reachability is $\NP$-complete even
for leftist grammars with acyclic rules.
\end{abstract}



\section{Introduction}
\label{sec-intro}

Leftist grammars were introduced by Motwani \textit{et al.} to study
accessibility and safety in protection systems~\cite{motwani2000}. In this
framework, leftist grammars are used to show that restricted accessibility
grammars have decidable accessibility problems (unlike the more general
access-matrix model).

Leftist grammars are both surprisingly simple and surprisingly complex.
Simplicity comes from the fact that they only allow rules of the form
``$a\rightarrow ba$'' and ``$cd\rightarrow d$'' where a symbol inserts,
resp.\	 erases, another symbol to its left \emph{while remaining unchanged}.
But the combination of insertion and deletion rules makes leftist grammars
go beyond context-sensitive grammars, and the decidability result comes
with a high complexity-theoretical price~\cite{jurdzinski2008}. Most of
all, what is surprising is that apparently leftist grammars had not been
identified as a relevant computational formalism until 2000.

The known facts on leftist grammars and their computational and expressive
power are rather scarce. Motwani \textit{et al.} show that it is decidable
whether a given word can be derived (accessibility) and whether all
derivable words belong to a given regular language
(safety)~\cite{motwani2000}. Jurdzi\'nski and Lory\'s showed that leftist
grammars can define languages that are not
context-free~\cite{jurdzinski2007c} while leftist grammars restricted to
acyclic rules are less expressive since they can only recognize regular
languages. Then Jurdzi\'nski showed a $\PSPACE$ lower bound for
accessibility in leftist grammars~\cite{jurdzinski2007b}, before improving
this to a nonprimitive-recursive lower bound~\cite{jurdzinski2008}.

Jurdzi\'nski's results rely on encoding classical computational structures
(linear-boun\-ded automata~\cite{jurdzinski2007b} and Ackermann's
function~\cite{jurdzinski2008}) in leftist grammars. Devising such
encodings is difficult because leftist grammars are very hard to control.
Thus, for computing Ackermann's function, devising the encoding is actually
not the hardest part: the harder task is to prove that the constructed
leftist grammar cannot behave in unexpected ways. In this regard, the
published proofs are necessarily incomplete, hard to follow, and hard to
fully acknowledge. The final results and intermediary lemmas cannot easily be
adapted or reused.

\paragraph{Our Contribution.}
We develop a compositional theory of leftist grammars and leftist
transformations (i.e., operations on strings that are computed by leftist
grammars) that provides fundamental tools for the analysis of their
computational power. Our main contributions are effective constructions for
the composition and the transitive closure of leftist transformations. The
correctness proofs for these constructions are based on new definitions
(e.g., for greedy derivations) and associated lemmas.

A first application of the compositional theory is given
in Section~\ref{sec-np-hard} where we prove the $\NP$-completeness of bounded
reachability questions, even when restricted to acyclic leftist grammars.

A second application, and the main reason for this paper, is our
forthcoming construction proving that leftist grammars can simulate lossy
channel systems and ``compute'' all multiply-recursive transformations and
nothing more (based on~\cite{CS-lics08}), thus providing a precise measure
of their computational power. Finally, after our introduction of Post's
Embedding Problem~\cite{CS-fsttcs07,CS-omegapep}, leftist grammars are
another basic computational model that will have been shown to capture
exactly the notion of multiply-recursive computation.

As further comparison with earlier work, we observe that, of course, the
complex constructions in~\cite{jurdzinski2007b,jurdzinski2008} are built
modularly. However, the modularity is not made fully explicit in these
works, the interfacing assumptions are incompletely stated, or are mixed with
the details of the constructions, and correctness proofs cannot be given in
full.

\paragraph{Outline of the Paper.} Basic notations and definitions are
recalled in Section~\ref{sec-basics}. Section~\ref{sec-leftist} defines
leftist grammars and proves a generalized version of the completeness of
greedy derivations. Sections~\ref{sec-ltr} introduces leftist transformers
and their sequential compositions.
Section~\ref{sec-simple-transformer} specializes on the ``simple''
transformers that we use in Section~\ref{sec-np-hard} for our encoding of
$\3SAT$. Finally Section~\ref{sec-transitive} shows that so-called
``anchored'' transformers are closed under the transitive closure
operation, this in an effective way. 
\ifthenelse{\boolean{short_version}}{
For lack of space, several proofs have been omitted in this extended abstract: they can be found in
the long version of this  paper, freely available at the \url{arXiv}.
}{
}




\section{Basic Definitions and Notations}
\label{sec-basics}

\paragraph{Words.}
We use $x,y,u,v,w,\alpha,\beta,\ldots$ to denote words, i.e., finite
strings of symbols taken from some alphabet. Concatenation is denoted
multiplicatively with $\epsilon$ (the empty word) as neutral element, and
the length of $x$ is denoted $\size{x}$.

The congruence on words generated by the equivalences $a\approx aa$ (for
all symbols $a$ in the alphabet) is called the \emph{stuttering equivalence} and
is also denoted $\approx$: every word $x$ has a minimal and canonical
stuttering-equivalent $x'$ obtained by repeatedly eliminating symbols in
$x$ that are adjacent to a copy of themselves.

We say that $x$ is a \emph{subword} of $y$, denoted $x\subword y$, if $x$
can be obtained by deleting some symbols (an arbitrary number, at arbitrary
positions) from $y$. We further write $x\subword_{\Sigma}y$ when all the
symbols deleted from $y$ belong to $\Sigma$ (NB: we do not require
$y\in\Sigma^*$), and let $\invsubword$ denote the inverse relation $\subword^{-1}$.

\paragraph{Relations and Relation Algebra.}
We see a relation $\mathrel{R}$ between two sets $X$ and $Y$ as a set of
pairs, i.e., some $R\subseteq X\times Y$. We write $x\mathrel{R}y$
rather than $(x,y)\in R$. Two relations $R$ and $R'$ can be composed,
denoted multiplicatively with $R.R'$, and defined by
$
     x\mathrel{(R.R')}y \equivdef \exists z. \bigl(x\mathrel{R}z\;\wedge\; z\mathrel{R'}y\bigr).
$

The union $R+R'$, also denoted $R\cup R'$, is just the set-theoretic union.
$R^n$ is the $n$-th power $R.R\ldots R$ of $R$ and $R^{-1}$ is the inverse
of $R$: $x\mathrel{R^{-1}}y\equivdef y \mathrel{R}x$. The transitive closure
$\bigcup_{n=1,2,\ldots}R^n$ of $R$ assumes $Y=X$ and is denoted $R^+$,
while its reflexive-transitive closure is $R^+\cup \Id_X$, denoted $R^*$.

Below we often use notations from relation algebra to state simple
equivalences. E.g., we write ``$R=R'$'' and ``$R\subseteq S$'' rather than
``$x\mathrel{R}y$ iff $x\mathrel{R'}y$'' and ``$x\mathrel{R}y$ implies $x\mathrel{S}y$''. Our proofs often
rely on well-known basic laws from relation algebra, like
$(R.R')^{-1}=R'^{-1}.R^{-1}$, or $(R+R').R''=R.R''+ R'.R''$, without
explicitly stating them.



\section{Leftist Grammars}
\label{sec-leftist}

A \emph{leftist grammar} (an LGr) is a triple $G=(\Sigma,P,g)$ where
$\Sigma\cup\{g\}=\{a,b,\ldots\}$ is a finite \emph{alphabet},
$g\not\in\Sigma$ is a \emph{final symbol} (also called ``\emph{axiom}''),
and $P=\{r,\ldots\}$ is a set of production rules that may be
\emph{insertion rules} of the form $a\rightarrow ba$, and \emph{deletion
  rules} of the form $cd\rightarrow d$. For simplicity, we forbid rules
that insert or delete the axiom $g$ (this is no loss of
generality~\cite[Prop.\  3]{jurdzinski2007c}).

Leftist grammars are not context-free (deletions are contextual), or even
context-sensitive (deletions are not length-preserving). For our purposes,
we consider them as string rewrite systems, more precisely semi-Thue
systems. Writing $\Sigma_g$ for $\Sigma\cup\{g\}$, the rules of $P$ define
a 1-step rewrite relation in the standard way: for $u,u'\in\Sigma_g^*$, we
write $u\stepgram^{r,p} u'$ whenever $r$ is some rule
$\alpha\rightarrow\beta$, $u$ is some $u_1\alpha u_2$ with
$\size{u_1\alpha}=p$ and $u'=u_1\beta u_2$. We often write shortly
$u\stepgram^{r} u'$, or even $u\stepgram u'$, when the position or the rule
involved in the step can be left implicit. On the other hand, we sometimes
use a subscript, e.g., writing $u\stepgram_G v$, when the
underlying grammar has to be made explicit.

A \emph{derivation} is a sequence $\pi$ of consecutive rewrite steps, i.e.,
is some $u_0\stepgram^{r_1,p_1} u_1 \stepgram^{r_2,p_2} u_2
\cdots\stepgram^{r_n,p_n} u_n$, often abbreviated as $u_0\stepgram^n u_n$,
or even $u_0\stepgram^* u_n$.
A subsequence $(u_{i-1}\stepgram^{r_i,p_i}u_i)_{i=m,m+1,\ldots,l}$ of $\pi$ is a
\emph{subderivation}.
As with all semi-Thue systems, steps (and derivations) are closed
under adjunction: if $u\stepgram u'$ then $vuw\stepgram vu'w$.

Two derivations $\pi_1=(u\stepgram^* u')$ and $\pi_2=(v\stepgram^* v')$ can be
concatenated in the obvious way (denoted $\pi_1.\pi_2$) if $u'=v$. They are
\emph{equivalent}, denoted $\pi_1\equiv \pi_2$, if they have same
extremities, i.e., if $u=v$ and $u'=v'$.

We say that $u\in\Sigma^*$ is \emph{accepted by $G$} if there is a
derivation of the form $ug\stepgram^*g$ and we write $L(G)$ for the set of
accepted words, i.e., the language recognized by $G$.

We say that $I\subseteq \Sigma^*$ is an \emph{invariant} for an LGr
$G=(\Sigma,P,g)$ if $u\in I$ and $ug\stepgram vg$ entail $v\in I$. Knowing
that $I$ is an invariant for $G$ is used in two symmetric ways: (1) from
$u\in I$ and $ug\stepgram^* vg$ one deduces $v\in I$, and (2) from
$ug\stepgram^* vg$ and $v\not\in I$ one deduces $u\not\in I$.

\subsection{Graphs and Types for Leftist Grammars}
\label{ssec-types}

When dealing with LGr's, it is convenient to write insertion rules under
the simpler form ``$a\ins b$'', and deletion rules as ``$d\del c$'',
emphasizing the fact that $a$ (resp.\ $d$) is not modified during the
insertion of $b$ (resp.\  the deletion of $c$) on its left. For $a\in
\Sigma_g$, we let $\INS(a)\egdef\{b~|~P\ni (a\ins b)\}$ and
$\DEL(a)\egdef\{b~|~P\ni (a\del b)\}$ denote the set of symbols that can be
inserted (respectively, deleted) by $a$. We write $\INS^+(a)$ for the
smallest set that contains $b$ and $\INS^+(b)$ for all $b\in\INS(a)$, while
$\DEL^+(b)$ is defined similarly. We say that $a$ is \emph{inactive} in a
LGr if $\DEL(a)\cup\INS(a)=\emptyset$.
\\

It is often convenient to view LGr's in a
graph-theoretical way.
Formally, the \emph{graph} of $G=(\Sigma,P,g)$ is the directed graph
$\tau_G$ having the symbols from $\Sigma_g$ as vertices and the rules from
$P$ as edges (coming in two kinds, insertions and deletions).
Furthermore, we often
decorate such graphs with extra bookkeeping annotations.

We say that $G$ ``\emph{has type $\tau$}'' when $\tau_G$ is a sub-graph of
$\tau$. Thus a ``type'' is just a restriction on what are the allowed
symbols and rules between them. Types are often given schematically,
grouping symbols that play a similar role into a single vertex. 
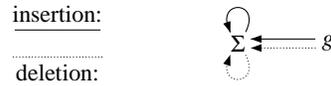
\begin{figure}[bthp]
\centering
{\setlength{\unitlength}{0.60mm}
\begin{picture}(80,10)(0,0)

\gasset{Nframe=n,Nw=6,Nh=6,Nmr=999,loopdiam=5}

\node(s)(55,5){$\Sigma$}
\node(sg)(75,5){$g$}

\drawedge[syo=1,eyo=1](sg,s){}
\drawedge[syo=-1,eyo=-1,dash={0.2 0.5}0](sg,s){}
\drawloop[loopangle=90,loopCW=n](s){}
\drawloop[loopangle=-90,dash={0.2 0.5}0](s){}

\node(aux1)(5,5){}
\node(aux2)(25,5){}
\gasset{AHnb=0}
\drawedge[syo=3,eyo=3,ELside=r](aux2,aux1){\text{insertion:}}
\drawedge[syo=-3,eyo=-3,dash={0.2 0.5}0,ELside=l](aux2,aux1){\text{deletion:}}

\end{picture}}
\caption{Universal type (schematically).}
\label{fig-type-universal}
\end{figure}
For
example, Fig.~\ref{fig-type-universal} displays schematically the type
(parametrized by the alphabet) observed by all LGr's.

\subsection{Leftmost, Pure and Eager Derivations}
\label{ssec-greedy}

We speak informally of a ``\underline{letter}'', say $a$, when we really mean
``an occurrence of the \underline{symbol} $a$'' (in some word).
Furthermore, we follow letters along steps $u\stepgram v$, identifying the
letters in $u$ and the corresponding letters in $v$. Hence a ``letter'' is
also a sequence of occurrences in consecutive words along a derivation.

A letter $a$ is a $n$-th \emph{descendant} of another letter $b$ (in the
context of a derivation) if $a$ has been inserted by $b$ (when $n=1$), or
by a $(n-1)$-th descendant of $b$.

Given a step $u\stepgram^{r,p}v$, we say that the $p$-th letter in $u$,
written $u[p]$, is the \emph{active letter}: the one that inserts, or
deletes, a letter to its left. This is often emphasized by writing the step
under the form $(u=)u_1au_2\stepgram u'_1au_2(=v)$ (assuming $u[p]=a$).

A letter is \emph{inert} in a derivation if it is not active \emph{in
any step} of the derivation. A set of letters is inert if it only contains
inert letters. A derivation is \emph{leftmost} if every step
$u_1au_2\stepgram u'_1au_2$ in the derivation is such that $u_1$ is inert
in the rest of the derivation. 

A letter is \emph{useful} in a derivation $\pi=(u\stepgram^* v)$ if it
belongs to $u$ or $v$, or if it inserts or deletes a useful letter along
$\pi$. This recursive definition is well-founded: since letters only insert
or delete to their left, the ``inserts-or-deletes'' relation between
letters is acyclic. A derivation $\pi$ is \emph{pure} if all letters in
$\pi$ are useful. Observe that if $\pi$ is not pure, it necessarily inserts
at some step some letter $a$ (called a \emph{useless letter}) that stays
inert and will eventually be deleted.

A derivation is \emph{eager} if, informally, deletions occur as soon as
possible. Formally, $\pi=(u_0\stepgram^{r_1,p_1} u_1 \cdots
\stepgram^{r_n,p_n} u_n)$ is not eager if there is some	$u_{i-1}$ of  the form
$w_1baw_2$ where $b$ is inert in the rest of $\pi$ and is eventually
deleted,  where $P$ contains the rule $a \del b$, and where  $r_i$ is not a
deletion rule.\footnote{Eagerness does not require that $r_i$ deletes $b$: 
other deletions are allowed, only insertions are forbidden.}

A derivation is \emph{greedy} if it is leftmost, pure and eager. Our
definition generalizes~\cite[Def.~4]{jurdzinski2007b}, most notably because
it also applies to derivations $ug\stepgram^* vg$ with nonempty $v$. Hence
a subderivation $\pi'$ of $\pi$ is leftmost, eager, pure, or greedy,
when $\pi$ is. 

The following proposition generalizes~\cite[Lemma~7]{jurdzinski2007b}.
\begin{proposition}[Greedy derivations are sufficient]
\label{prop-greedy}
Every derivation $\pi$ has an equivalent greedy derivation $\pi'$.
\end{proposition}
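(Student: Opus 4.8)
The plan is to rewrite $\pi$ into an equivalent greedy derivation by a finite sequence of local, extremity-preserving transformations, attacking purity, leftmostness and eagerness in that order while checking that later repairs do not undo earlier ones. Throughout I follow the paper's convention of tracking letters as persistent objects along the derivation, so that ``the active letter of a step'', ``is active later'', ``is inert in the rest'' and ``to the left of'' all refer to letter identities rather than raw positions; note that since leftist rules only insert or delete, the relative left-to-right order of any two coexisting letters never changes, which makes ``to the left of'' a stable relation.

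First I would establish purity. If $\pi$ is not pure it inserts a useless letter $a$ which, by definition, stays inert and is eventually deleted; hence exactly two steps of $\pi$ mention $a$ (its insertion and its deletion) and $a$ is active in neither. Excising those two steps and the occurrence of $a$ from all intermediate words yields a genuine derivation with the same endpoints $u,v$ (since $a$ lies in neither) and two fewer steps. Because each excision strictly shortens the derivation, iterating terminates in a pure $\pi_1\equiv\pi$.

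Next I would turn $\pi_1$ into a greedy derivation by swapping adjacent steps, the workhorse being a commutation lemma: two consecutive steps whose active letters differ and whose inserted/deleted letters do not interfere can be exchanged while preserving the endpoints of the pair, and such a swap preserves purity (it neither creates nor destroys letters, and leaves the ``inserts-or-deletes'' relation untouched). A leftmostness defect is a step with active letter $a$ having to its left some letter $c$ that is active later; I repair it by commuting $c$'s earliest later action back in front of $a$'s. An eagerness defect is an insertion step blocking a pending deletion $a\del b$ with $b$ inert; here one checks that nothing can be inserted between $b$ and $a$ before the deletion fires (otherwise $a$ would no longer be deleting $b$), so that deletion can be commuted backward past the intervening insertions. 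To guarantee termination I attach to a pure derivation a well-founded measure that vanishes exactly on leftmost-and-eager derivations and that each repair strictly decreases: a natural candidate sums, over all steps, a leftmost-penalty (the number of letters left of the active letter that are still active later) and an eager-penalty (the number of non-deletion steps separating an enabled deletion from the moment it fires), compared in $\Nat$. Every repair above moves an action strictly earlier and thereby lowers this measure, so well-founded induction delivers a pure, leftmost, eager $\pi'\equiv\pi$, i.e.\ a greedy one.

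The main obstacle is the commutation lemma together with the proof that the two kinds of repair cooperate under a single measure. One must verify case by case that every non-greedy pure derivation actually exhibits a defect in a shape the commutation lemma can fix, that the bookkeeping of shifting positions across insertions and deletions is sound, and --- most delicately --- that pulling a deletion earlier to restore eagerness never manufactures a fresh leftmostness defect that the measure would fail to dominate (and symmetrically that a leftmost repair does not increase the eager-penalty uncontrollably). Getting these interactions to fit one monotone measure, rather than an ad hoc lexicographic juggling of the three properties, is where the real work lies.
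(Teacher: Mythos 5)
Your plan has two genuine gaps, and both sit exactly at the places you yourself defer as ``the real work.''

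First, the eagerness repair is not a commutation and cannot be made one. The paper's definition of an eagerness defect at step $i$ (word $w_1baw_2$, rule $a\del b$ in $P$, $b$ inert and eventually deleted, $r_i$ an insertion) does \emph{not} say that $b$ is eventually deleted by $a$, nor via the rule $a\del b$. Take $u_{i-1}=w_1\,b\,a\,x\,w_2$ with rules $a\del b$, $x\del a$, $x\del b$ in $P$, and let $\pi$ continue with an insertion (the defect), then $x\del a$, then $x\del b$. The step that actually kills $b$ uses the rule $x\del b$, which is not even applicable at time $i$ because $a$ separates $b$ from $x$; moreover the intervening steps include a deletion, not only insertions. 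So ``commuting the deletion backward'' is impossible here; the correct repair (the paper's) is to \emph{insert a fresh step} using $a\del b$ at position $i$ and drop the old deletion of $b$. But that operation changes the inserts-or-deletes relation (now $a$, not $x$, erases $b$), so it can render $x$ useless and destroy purity --- precisely the invariant your purity-first-then-swap architecture relies on. Your claim that all later repairs are letter-preserving, relation-preserving swaps is therefore false for eagerness.

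Second, your termination measure does not strictly decrease under your own leftmost repair. The leftmost-penalty of a step counts \emph{letters} (left of the active letter, active later), not actions. If the offending letter $c$ acts twice after step $i$, moving its earliest action in front of step $i$ leaves step $i$'s penalty unchanged, since $c$ still acts later. Three steps with $x$ left of $y$ --- (1) $y$ acts, (2) $x$ acts, (3) $x$ acts --- give total penalty $1$ both before and after your repair, so the claimed strict decrease, and with it the well-founded induction, collapses. A measure that does work for the swaps is the number of pairs of steps whose active letters are out of left-to-right order (your repair removes at least one such inversion), or simply the paper's $\mu(\pi)=\tuple{n,p_1,\ldots,p_n}$ compared lexicographically --- i.e., exactly the lexicographic comparison you set out to avoid. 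This is also where the paper's route is genuinely cheaper than yours: it never runs a repair \emph{process} at all. It picks a $\mu$-minimal derivation equivalent to $\pi$ (which exists because the ordering is well-founded on $\bigcup_{n\in\Nat}\{n\}\times\Nat^n$) and shows that each kind of defect --- non-leftmost, non-eager, non-pure --- yields an equivalent derivation of strictly smaller measure, contradicting minimality. Since the three arguments are applied independently to one minimal witness, no interaction or preservation analysis between repairs is ever needed; your proposal re-creates that difficulty and then leaves it unresolved.
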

\begin{proof}
With a derivation $\pi$ of the form $u_0\stepgram^{r_1,p_1} u_1
\stepgram^{r_2,p_2} u_2 \cdots \stepgram^{r_n,p_n}u_n$, we
associate its \emph{measure} $\mu(\pi)\egdef\tuple{n,p_1,\ldots,p_n}$, 
a $(n+1)$-tuple of numbers.
Measures are linearly ordered with the lexicographic ordering,
giving rise to a quasi-ordering, denoted $\leq_\mu$, between derivations.
A derivation is called \emph{$\mu$-minimal} if any
equivalent derivation has greater or equal measure.

We can now prove Prop.~\ref{prop-greedy} 
\ifthenelse{\boolean{short_version}}{
along the following lines:
first prove that
every derivation has a $\mu$-minimal equivalent,
then show that $\mu$-minimal derivations are
greedy.
}{
along the following lines
(see Appendix~\ref{app-prop-greedy} for full details): 
first prove that
every derivation has a $\mu$-minimal equivalent
(Lemma~\ref{lem-has-equiv-mini}), then show that $\mu$-minimal derivations are
greedy (Lemma~\ref{lem-mini-is-greedy}).
}
\qed
\end{proof}

Observe that $\leq_\mu$ is compatible with concatenation of derivations: if
$\pi_1\leq_\mu\pi_2$ then $\pi.\pi_1.\pi'\leq_\mu \pi.\pi_2.\pi'$ when
these concatenations are defined. Thus
any subderivation of a $\mu$-minimal derivation is
$\mu$-minimal, hence also greedy.
\\

$\mu$-minimality is stronger than greediness, and is a powerful and
convenient tool for proving Prop.~\ref{prop-greedy}. However, greediness is
easier to reason with since it only involves local properties of
derivations, while $\mu$-minimality is ``global''. These intuitions are
reflected by, and explain, the following complexity results.
\begin{theorem}
1.\ \emph{Greediness} (deciding whether a given derivation $\pi$ in the
context of a given LGr $G$ is greedy) is in $\L$.
\\  
2.\ \emph{$\mu$-Minimality} (deciding whether it is $\mu$-minimal) is
$\coNP$-complete, even if we restrict to acyclic LGr's.
\end{theorem}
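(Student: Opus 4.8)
The plan is to treat the two complexity claims separately, since they rest on quite different ideas.

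\textbf{Part 1 (Greediness is in $\L$).} Greediness is the conjunction of three properties—leftmost, pure, and eager—each of which, by the definitions in Section~\ref{ssec-greedy}, is a purely \emph{local} condition on the derivation once one has tracked, for every letter occurrence, (a) whether it is active in some later step and (b) who inserts or deletes whom. First I would argue that the ancestry relations can be recovered in logarithmic space: a derivation $\pi=(u_0\stepgram^{r_1,p_1}\cdots\stepgram^{r_n,p_n}u_n)$ is given explicitly as its sequence of $(r_i,p_i)$, and following a single letter occurrence forward or backward through the steps only requires maintaining a constant number of position counters, each of size $O(\log(n+\max_i\size{u_i}))$. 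To check \emph{leftmost}, I verify for each step $i$ that every letter strictly to the left of the active position $p_i$ in $u_{i-1}$ is never the active letter in any later step $j>i$; this is a nested loop over positions and steps, using only counters. To check \emph{pure}, I observe that a violation is witnessed by a letter that is inserted, stays inert, and is later deleted (the ``useless letter'' of the excerpt), which is again detectable by tracking one occurrence forward. To check \emph{eager}, I look for the forbidden pattern: a configuration $u_{i-1}=w_1baw_2$ with $a\del b$ in $P$, $b$ inert and eventually deleted in the rest of $\pi$, and $r_i$ an insertion rule. Each conjunct is a logspace-decidable predicate, and $\L$ is closed under such finite Boolean combinations, so greediness is in $\L$.

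\textbf{Part 2, upper bound ($\mu$-Minimality is in $\coNP$).} By definition $\pi$ is $\mu$-minimal iff \emph{no} equivalent derivation has a strictly smaller measure $\mu$. The natural certificate for \emph{non}-minimality is an equivalent derivation $\pi'$ with $\mu(\pi')<_{\lexico}\mu(\pi)$. The key point I must establish is a polynomial bound on the size of a smallest such witness: since the first component of $\mu$ is the length $n$, any strictly smaller measure either has $n'<n$ steps or has $n'=n$ and a lexicographically smaller position vector. In either case a witnessing $\pi'$ need be no longer than $\pi$ itself (its length is at most $n$), and each intermediate word along $\pi'$ is bounded using the structure of leftist derivations—useful letters in a pure derivation are ancestors of the endpoints, so their number is controlled. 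Hence $\pi'$ has size polynomial in $\size{\pi}$, can be guessed, and verified (same endpoints, valid steps, smaller measure) in polynomial time. This places $\mu$-Minimality in $\coNP$.

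\textbf{Part 2, hardness ($\coNP$-hardness, acyclic case).} The main obstacle, and the most delicate part of the argument, is the lower bound: I must reduce a $\coNP$-complete problem to $\mu$-minimality while staying within \emph{acyclic} LGr's. The plan is to reduce from the complement of $\3SAT$ (i.e.\ from validity/unsatisfiability), building on the $\NP$-hardness machinery that Section~\ref{sec-np-hard} develops for bounded reachability of acyclic leftist grammars. Concretely, from a $\3SAT$ instance $\varphi$ I would construct an acyclic LGr $G$ together with a specific derivation $\pi$ so that $\pi$ is \emph{not} $\mu$-minimal exactly when $\varphi$ is satisfiable: a satisfying assignment should correspond to a shorter or leftmost-earlier equivalent derivation (a ``shortcut'') between the same endpoints, while unsatisfiability forces $\pi$ to already be the cheapest derivation with those extremities. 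The acyclicity of the rules must be preserved throughout, which is exactly what makes the encoding subtle—one cannot reuse symbols in loops to compress the construction. I expect the bulk of the work, and the genuine difficulty, to lie in verifying that \emph{every} equivalent derivation respects the intended correspondence, i.e.\ that no unintended shortcut exists when $\varphi$ is unsatisfiable; this is precisely the kind of ``the grammar cannot behave unexpectedly'' obligation highlighted in the introduction, and here the completeness of greedy derivations (Prop.~\ref{prop-greedy}) is the main tool, since it lets me restrict attention to greedy equivalent derivations when ruling out spurious witnesses.
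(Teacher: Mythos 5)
Your Part~1 and your $\coNP$ upper bound are correct and essentially identical to the paper's own (very terse) argument: in particular, your purity check via an inserted letter that stays inert and is eventually deleted is exactly the paper's ``last inserted useless letter'' observation, and your witness for non-minimality is an equivalent derivation of at most the same length, hence of polynomial size.

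The gap is in the hardness part. You correctly identify the target --- an acyclic LGr $G$ and a derivation $\pi$ such that $\pi$ fails to be $\mu$-minimal iff $\Phi$ is satisfiable --- but you never construct $G$ or $\pi$, and the construction is not routine, because of a difficulty your plan does not address: the reduction must output a derivation $\pi$ that \emph{exists unconditionally}, in particular when $\Phi$ is unsatisfiable; yet in the Section~\ref{sec-np-hard} grammar $G_\Phi$, unsatisfiability of $\Phi$ means there is \emph{no} derivation at all from $U_1^0\cdots U_m^0g$ to $T_1^n\cdots T_m^ng$ (Coro.~\ref{coro-np-red-2}), so there is nothing to hand to the $\mu$-minimality oracle. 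The paper's solution (Coro.~\ref{coro-mu-min-hard}) is a bypass gadget: take $G_\Phi$ and add $k$ fresh symbols $a_1,\ldots,a_k$ with the acyclic rules $a_{i-1}\ins a_i$ and $a_{i-1}\del a_i$ (with the convention $a_0=T_1^n$) together with $a_k\del U_i^0$ for all $i$; this yields a canonical derivation $\pi$ of length $2m+2k$ which first inserts $T_1^n\cdots T_m^n$, then the chain $a_k\cdots a_1$, then erases the input using $a_k$, and finally erases the chain itself. Choosing $k>m(n-1)$ makes $2m+2k>2mn$, so $\pi$ is undercut by the honest $2mn$-step derivation of Coro.~\ref{coro-np-red-1} exactly when $\Phi$ is satisfiable, while unsatisfiability leaves $\pi$ $\mu$-minimal. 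Without some such gadget guaranteeing both the unconditional existence of $\pi$ and a quantitative length gap to compare against, your reduction cannot even define its output instance; note also that ruling out unintended shortcuts rests on the correctness of the $\3SAT$ encoding (Corollaries~\ref{coro-np-red-1} and~\ref{coro-np-red-2}, proved via the compositional framework), not primarily on Prop.~\ref{prop-greedy} as you suggest.
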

\begin{proof}
1.\  Being leftmost or eager is easily checked in logspace (i.e., is in
$\L$). Checking non-purity can be done by looking for a \emph{last}
inserted useless letter, hence is in $\L$ too.
\\
2.\ $\mu$-minimality is obviously in $\coNP$. Hardness is proved as
Coro.~\ref{coro-mu-min-hard} below, as a byproduct of the reduction we use
for the $\NP$-hardness of Bounded Reachability.
\qed
\end{proof}



\section{Leftist Grammars as Transformers}
\label{sec-ltr}

Some leftist grammars are used as computing devices rather than recognizers of
words. For this purpose, we require a strict separation between input and
output symbols and speak of \emph{leftist transformers}, or shortly
LTr's.

\subsection{Leftist Transformers}
\label{ssec-LTr}

Formally, an LTr is a LGr $G=(\Sigma,P,g)$ where $\Sigma$ is partitioned as
$A\uplus B\uplus C$, and where symbols from $A$ are inactive in $P$ and
are not inserted by $P$ (see Fig.~\ref{fig-type-ltr}). This is denoted
$G:A\vdash C$. 
Here $A$ contains the \emph{input symbols}, $B$ the
\emph{temporary symbols}, and $C$ the
\emph{output symbols}, and $G$ is more conveniently written as
$G=(A,B,C,P,g)$.
When there is no need to distinguish between temporary and
output symbols, we write $G$ under the form $G=(A,D,P,g)$, where $D\egdef
B\cup C$ contains the \emph{``working'' symbols},
%
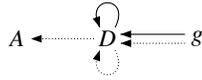
\begin{figure}[htbp]
\centering
{\setlength{\unitlength}{0.60mm}
\begin{picture}(50,10)(0,0)

\gasset{Nframe=n,Nw=6,Nh=6,Nmr=999,loopdiam=5}

\node(s1)(5,5){$A$}
\node(s2)(25,5){$D$}
\node(sg)(45,5){$g$}

\drawedge[syo=1,eyo=1](sg,s2){}
\drawedge[syo=-1,eyo=-1,dash={0.2 0.5}0](sg,s2){}
\drawedge[dash={0.2 0.5}0](s2,s1){}
\drawloop[loopangle=90,loopCW=n](s2){}
\drawloop[loopangle=-90,dash={0.2 0.5}0](s2){}

\end{picture}}
\caption{Type of leftist transformers.}
\label{fig-type-ltr}
\end{figure}

A consequence of the restrictions imposed on LTr's is the following:
\begin{fact}
\label{LTr-basic-invariant}
$A^*D^*$ is an invariant in any LTr $G=(A,D,P,g)$.
\end{fact}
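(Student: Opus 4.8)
The plan is to work directly from the single-step definition of invariant: it suffices to show that whenever $u\in A^*D^*$ and $ug\stepgram vg$, the word $v$ is again in $A^*D^*$; the two $\stepgram^*$ uses of invariants mentioned above then follow by iteration. The first step is to extract from the LTr restrictions the only two structural facts that will be needed. Since the symbols of $A$ are inactive, the active letter of the step lies in $D\cup\{g\}$ (an $A$-symbol can never be the one inserting or deleting). Since $A$-symbols are never inserted, and the axiom $g$ is never inserted either, any letter produced by an insertion rule $x\ins y$ belongs to $D$.

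Next I would split on the kind of rule applied. For a deletion step, exactly one letter of $ug$ is erased, and it is not $g$ (the axiom is never deleted), so $v$ is obtained from $u$ by deleting a single letter. The class $A^*D^*$ is closed under letter deletion --- removing a letter cannot create an occurrence of a $D$-symbol preceding an $A$-symbol --- hence $v\in A^*D^*$. This case is immediate and holds regardless of which letter is erased; in particular, the case where a $D$-symbol deletes an $A$-symbol to its left only shrinks the $A$-block and causes no trouble.

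The only case requiring a short argument is insertion. Here $v$ is $u$ with a fresh letter $y\in D$ spliced in immediately to the left of the active letter $x\in D\cup\{g\}$. Because $u\in A^*D^*$, the letter $x$ sits after the entire $A$-block, so the insertion point lies inside the $D^*g$ suffix; inserting a $D$-symbol there keeps every $A$-letter before every $D$-letter, giving $v\in A^*D^*$. I expect this to be the main --- and essentially only --- obstacle, although it is a mild one: the crux is merely to observe that the insertion position is pinned to the active letter, and that this active letter can never sit inside the $A$-block.
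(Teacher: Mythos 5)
Your proof is correct: the paper states this as a Fact without any proof, treating it as immediate from the type restrictions, and your argument (the active letter lies in $D\cup\{g\}$ since $A$ is inactive, inserted letters lie in $D$ since neither $A$-symbols nor $g$ can be inserted, then a split on insertion versus deletion) is exactly the routine type-checking the paper leaves implicit. No gaps; the observation that the insertion point is pinned to the left of an active letter, which can never sit inside the $A$-block, is indeed the whole content.
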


With $G=(A,B,C,P,g)$, we associate a \emph{transformation} (a relation
between words) $R_G\subseteq A^*\times C^*$ defined by
\[
u\mathrel{R_G}v \;\equivdef\; ug\stepgram_G^* vg \:\wedge\: u\in A^* \:\wedge\: v\in C^*
\]
and we say that $G$ \emph{realizes} $R_G$. Finally, a \emph{leftist
transformation} is any relation on words realized by some LTr. By necessity, a
leftist transformation can only relate words written using disjoint
alphabets (this is not contradicted by $\epsilon\mathrel{R_G}\epsilon$).

Leftist transformations respect some structural constraints. In this paper
we shall use the following properties:
\ifthenelse{\boolean{short_version}}{
\begin{proposition}[Closure for leftist transformations]
}{
\begin{proposition}[Closure for leftist transformations, see App.~\ref{app-prop-closure}]
}
\label{prop-closure}
If $G:A\vdash C$ is a leftist transformer, then
$
R_G \: = \: (\invsubword_A.\approx.R_G.\approx)
$.
\end{proposition}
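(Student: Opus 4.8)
The plan is to prove the equality by its two inclusions, the reverse one being essentially free. Both $\invsubword_A$ and $\approx$ are reflexive (one may delete no $A$-letter, and $\approx$ is an equivalence), so $\Id_{A^*}\subseteq\invsubword_A$ and $\Id\subseteq\approx$; by monotonicity of relational composition this gives $R_G=\Id.\Id.R_G.\Id\subseteq\invsubword_A.\approx.R_G.\approx$. All the content therefore sits in the reverse inclusion $\invsubword_A.\approx.R_G.\approx\subseteq R_G$. I would obtain it by peeling the factors off one at a time, that is, by proving three ``absorption'' lemmas: $R_G.\approx\subseteq R_G$ (output-side stuttering), $\approx.R_G\subseteq R_G$ (input-side stuttering), and $\invsubword_A.R_G\subseteq R_G$ (closure under the input-side $A$-subword relation). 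Chaining these with associativity and monotonicity collapses $\invsubword_A.\approx.R_G.\approx$ first to $\invsubword_A.\approx.R_G$, then to $\invsubword_A.R_G$, and finally to $R_G$.

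Each absorption lemma I would prove by surgery on a witnessing derivation, taken greedy by Prop.~\ref{prop-greedy} and kept inside the shape $A^*D^*$ by Fact~\ref{LTr-basic-invariant}. The leftist type of Fig.~\ref{fig-type-ltr} is exactly what makes the surgery possible: an input symbol is inactive, is never inserted, and can disappear only when some working symbol $d\in D$ sitting immediately to its right fires a rule $d\del a$. For the $A$-subword lemma the key observation is that $d$ is left unchanged by its own deletion step, so a single $d$ can erase an entire adjacent block of $A$-letters to its left; conversely, suppressing some input $A$-letters only turns the corresponding erasures into vacuous steps that are excised, and every surviving step keeps its left context and stays applicable. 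Hence the realizable outputs are insensitive to the $A$-subword structure of the input.

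The stuttering lemmas rely on the fact that two adjacent equal letters can shadow one another. On the output side, an output letter $c$ is produced by an insertion $e\ins c$ of some active letter $e$; firing this rule one extra time (immediately, so that the two copies stay adjacent) creates the stutter $cc$, while firing it one time fewer merges the copies, which lets $R_G$ absorb $\approx$ on the right. On the input side, two adjacent copies of an inactive $a$ are both erased by the same right-neighbour deleter, so duplicating or merging an input letter does not change what is derivable. The main obstacle I anticipate is not the existence of these local moves but the bookkeeping needed to keep the rewritten derivation globally well formed -- in particular greedy -- while several letters are tracked across the surgery: an insertion performed by a duplicated \emph{active} letter may land between its two copies, so the copies must be kept synchronised step by step, and the output-merge case is the delicate one. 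I would control this by renormalising to a greedy (indeed $\mu$-minimal) derivation after each local modification and arguing by induction on the measure $\mu$, re-establishing the synchronisation invariant before the next factor is peeled off.
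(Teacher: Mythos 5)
Your two stuttering absorptions are fine, and the ``conversely'' half of your subword argument (excising the now-vacuous erasure steps of suppressed input letters) is correct: it is essentially the paper's key fact (Fact~\ref{fact-cl1}) that a letter which is inert and eventually erased in a derivation $uau'g\stepgram^* vg$ can be replaced by $a^n$ for any $n\geq 0$. The genuine gap is the other half. Your claim that ``a single $d$ can erase an entire adjacent block of $A$-letters to its left'' is false: deletion is letter-specific, since $d$ erases only those $a$ with $d\del a\in P$, so a freshly inserted input letter need not be erasable by the symbol that erases its neighbours, nor by anything else. Concretely, take $A=\{a,b\}$, $B=\emptyset$, $C=\{c,d\}$ and $P=\{\,g\ins d,\; d\ins c,\; c\del a,\; d\del b\,\}$. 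This is a legal LTr, and $ag\stepgram adg\stepgram acdg\stepgram cdg$ gives $a\mathrel{R_G}cd$. But no derivation from $bag$ ever erases $b$: the letter immediately to the right of $b$ is always $a$ or $c$ (neither of these two letters ever inserts, and nothing erases $c$), so the only rule that could erase $b$, namely $d\del b$, is never enabled. Hence $ba\mathrel{R_G}cd$ fails although $ba\mathrel{(\invsubword_A.\approx.R_G.\approx)}cd$ holds. So the absorption your plan needs, $\invsubword_A.R_G\subseteq R_G$ --- which, with the paper's own conventions ($\invsubword\:=\:\subword^{-1}$ and diagrammatic composition, so $u\mathrel{\invsubword_A}z$ means $z\subword_A u$), is exactly the ``adding input letters'' direction --- is simply false, as is your conclusion that realizable outputs are insensitive to the $A$-subword structure of the input.

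What this reveals is that the proposition is only true in the opposite orientation, and that is the one the paper's proof establishes and the one used later (in Lemma~\ref{subword-phi} and Corollary~\ref{coro-np-red-2} the closure property is invoked to pass from $u\subword x$ and $x\mathrel{R_G}y$ to $u\mathrel{R_G}y$): the input side is closed under \emph{deleting} $A$-letters and stuttering, i.e.\ $R_G=\subword_A.\approx.R_G.\approx$; the printed $\invsubword_A$ must be read that way. For this statement the paper's argument is short and needs none of your greedy/$\mu$-minimal renormalisation: in any witnessing derivation $zg\stepgram^* vg$ of an LTr, \emph{every} letter of $z$ is inert (input symbols are inactive) and eventually erased (the end word lies in $C^*g$), so the pumping fact above applies to every input occurrence simultaneously and yields in one stroke both your input-stuttering lemma and the (correctly oriented) subword absorption; a second, easy fact --- duplicating or merging an output occurrence, your output-stuttering lemma (Fact~\ref{fact-cl2}) --- finishes the proof. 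So keep your peeling strategy, but reorient the subword factor, drop the insertion half entirely, and replace surgery-plus-renormalisation by the single pumping fact.
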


\subsection{Composition}

We say that two leftist transformations $R_1\subseteq A_1^*\times C_1^*$
and $R_2\subseteq A_2^*\times C_2^*$ are \emph{chainable} if $C_1=A_2$ and
$A_1\cap C_2=\emptyset$. Two LTr's are chainable if they realize chainable
transformations.

\begin{theorem}
\label{theo-composition}
The composition $R_1.R_2$ of two chainable leftist transformations is a
leftist transformation.
Furthermore, one can build effectively a linear-sized LTr realizing
$R_1.R_2$ from LTr's realizing $R_1$ and $R_2$.
\end{theorem}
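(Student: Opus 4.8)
The plan is to build the composite grammar by \emph{gluing} $G_1=(A_1,B_1,C_1,P_1,g_1)$ and $G_2=(A_2,B_2,C_2,P_2,g_2)$ along their common interface $C_1=A_2$. First I would rename symbols so that the five groups $A_1,B_1,C_1{=}A_2,B_2,C_2$ are pairwise disjoint (this is harmless and costs nothing) and identify the two axioms into a single $g$. The composite is then $G\egdef(A_1,\,B_1\cup C_1\cup B_2,\,C_2,\,P_1\cup P_2,\,g)$, i.e.\ the old interface symbols $C_1=A_2$ together with $B_1,B_2$ become the \emph{working} symbols, $A_1$ is the input and $C_2$ the output. By disjointness $A_1$ is still inactive and never inserted in $P_1\cup P_2$, so $G$ is a genuine LTr of type $A_1\vdash C_2$ (Fig.~\ref{fig-type-ltr}), and clearly $\size{G}=\size{G_1}+\size{G_2}$ up to renaming, which gives the linear-size claim. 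It then remains to prove the two inclusions $R_1.R_2\subseteq R_G$ and $R_G\subseteq R_1.R_2$.

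The inclusion $R_1.R_2\subseteq R_G$ is the easy half. If $u\mathrel{R_1}w$ and $w\mathrel{R_2}v$ with $w\in C_1^*=A_2^*$, then by definition there are derivations $ug_1\stepgram_{G_1}^*wg_1$ and $wg_2\stepgram_{G_2}^*vg_2$. Since $P_1\subseteq P$, $P_2\subseteq P$, and $g_1=g_2=g$, both are verbatim valid derivations of $G$, and concatenating them yields $ug\stepgram_G^*wg\stepgram_G^*vg$ with $u\in A_1^*$ and $v\in C_2^*$, i.e.\ $u\mathrel{R_G}v$.

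The reverse inclusion $R_G\subseteq R_1.R_2$ is the crux, and the main obstacle. The difficulty is that a derivation of $G$ need not respect the intended two-phase schedule ``first $G_1$, then $G_2$'': the axiom $g$ drives \emph{both} sub-grammars, and an interface letter $c\in C_1=A_2$ is active through its $P_1$-rules yet is meant to sit inert as $G_2$-input. Concretely, the layered shape $A_1^*\,(B_1\cup C_1)^*\,(B_2\cup C_2)^*$ that a clean schedule would maintain is \emph{not} an invariant of $G$, since $g$ may perform a $P_1$-insertion after it has already begun inserting $B_2,C_2$ letters to its left. The plan is therefore to reorganise an arbitrary accepting derivation into phase order. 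I would first replace it by an equivalent greedy, indeed $\mu$-minimal, derivation (Prop.~\ref{prop-greedy}), and then argue that the $P_2$-steps can be commuted to the end: pushing a $P_2$-step later never disables a $P_1$-step, because the two kinds of letters overlap only on the interface $C_1=A_2$ (which $P_2$ merely deletes), so a postponed $P_2$-insertion of a $B_2,C_2$ letter can only \emph{remove} an adjacency obstruction to a $P_1$-step, while a postponed $P_2$-deletion of a $C_1$ letter can only \emph{preserve} $P_1$-material; in either case the needed $P_1$-steps remain available. Once every $P_2$-step has been pushed past every $P_1$-step, the word $w$ reached at the phase boundary contains no $B_1,B_2,C_2$ letters and no $A_1$ letters — any leftover $B_1$ or $A_1$ could never be removed later, since nothing in $P_2$ deletes it, contradicting $v\in C_2^*$ — so $w\in C_1^*$, and the two halves of the reorganised derivation witness $u\mathrel{R_1}w$ and $w\mathrel{R_2}v$. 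Here I would lean on the $A^*D^*$ invariant of Fact~\ref{LTr-basic-invariant} (applied inside each phase) to constrain the shape of intermediate words and to certify that the extracted $w$ is a legal $G_1$-output and a legal $G_2$-input.

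The commutation/phase-separation argument of the last paragraph is where essentially all the work lies; the delicate cases are exactly the adjacency changes and the steps performed by the shared axiom, which is why greediness (Prop.~\ref{prop-greedy}) — and its stronger companion $\mu$-minimality, which is stable under taking subderivations — is the right tool. Should the plain union prove too hard to control directly, an equally linear-sized alternative is to keep the axioms separate and let $g\,(=g_2)$ \emph{spawn} a working copy $\hat g_1$ of $g_1$ (adding only the rules $g\ins\hat g_1$ and $g\del\hat g_1$): then $\hat g_1$ runs $G_1$ to its left while $g$ is blocked from reaching the $C_1$-output until it has deleted $\hat g_1$, which mechanically forces $G_1$ to terminate before $G_2$ consumes $w$ and makes the phase separation structural rather than something to be recovered by commutation.
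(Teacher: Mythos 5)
Your construction ($G_1.G_2$ with renamed-disjoint intermediate alphabets, working symbols $B_1\cup C_1\cup B_2$) and your easy inclusion $R_1.R_2\subseteq R_G$ are exactly the paper's. The gap is in the reverse inclusion, and it sits precisely where you flagged ``delicate cases'': your central claim that ``pushing a $P_2$-step later never disables a $P_1$-step'' is false as a local commutation statement, and the counterexample goes through the shared axiom. Take a word ending in $\cdots c\,x\,g$ with $c\in B_1\cup C_1$ and $x\in B_2\cup C_2$, and suppose the derivation performs the $P_2$-deletion $g\del x$ and then the $P_1$-deletion $g\del c$ (both rule shapes are perfectly legal: $P_1$ typically has cleanup rules $g\del b$ for $b\in B_1$, and $P_2$ has $g\del x$ for $x\in B_2$). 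These two adjacent steps cannot be swapped: before $x$ is erased, $c$ is not adjacent to $g$, so the $P_1$-step is simply not enabled. Thus a $P_2$-deletion by $g$ \emph{can} enable a later $P_1$-step; your case analysis only discusses $P_2$-insertions and $P_2$-deletions of interface ($C_1$) letters, and misses exactly this case. One can show that such patterns do not survive in \emph{greedy accepting} derivations (e.g., eagerness forces $g$ to fire the pending deletion $g\del c$ before inserting any $D_2$ letter next to $c$), but that argument is the real content of the proof, and your proposal asserts it rather than proves it. Your fallback construction (spawning a copy $\hat g_1$ of the axiom) does not repair this: nothing prevents $g$ from inserting $\hat g_1$, deleting it, doing some $G_2$ steps, and inserting a fresh $\hat g_1$ again, so phase separation is no more structural there than in the plain union.

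For contrast, the paper's proof of Lemma~\ref{lem-composition} does no reordering at all: it shows a greedy derivation is \emph{already} phase-ordered. It isolates the first step that uses a $G_2$-rule; at that point the word is $wg$ with $w\in A_1^*D_1^*$ (Fact~\ref{LTr-basic-invariant}), and since every symbol of $A_1\cup D_1$ is inactive in $G_2$, this first $G_2$-step must be an insertion $g\ins e$ with $e\in D_2$. Leftmost-ness then makes the entire prefix $w$ inert for the rest of the derivation, and an induction on the remaining steps shows every subsequent word has the shape (inert prefix of $w$)$\cdot D_2^*\cdot g$, so the remainder consists of $G_2$-steps acting only on the suffix. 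Hence $\pi$ factors as $ug\stepgram_{G_1}^*wg\stepgram_{G_2}^*vg$, and $w\in C_1^*$ because no $P_2$-rule erases $A_1\cup B_1$ --- this last point is the one part of your phase-boundary argument that does match the paper. In short: right construction, right easy half, but the phase-separation step needs the inertness/greediness analysis, not a step-by-step commutation, which provably cannot work in the presence of the shared~$g$.
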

For a proof, assume $G_1 = (A_1,B_1,C_1,P_1,g)$ and $G_2 =
(A_2,B_2,C_2,P_2,g)$ realize $R_1$ and $R_2$. Beyond chainability, we
assume that $A_1\cup B_1$ and $B_2\cup C_2$ are disjoint, which can be
ensured by renaming the intermediary symbols in $B_1$ and $B_2$. 
The composed LTr $G_1.G_2$ is given by 
\[
   G_1 . G_2 \egdef ( A_1, B_1 \cup C_1 \cup B_2, C_2, P_1 \cup P_2, g).
\]
This is indeed a LTr from $A_1$ to $C_2$. See
Fig.~\ref{fig-type-composition} for a schematics of its type. 
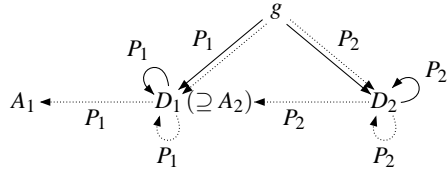
\begin{figure}[htbp]
\centering
{\setlength{\unitlength}{0.60mm}
\begin{picture}(90,30)(0,0)

\gasset{Nframe=n,Nw=7,Nh=7,Nmr=999,loopdiam=5}

\node(s1)(5,10){$A_1$}
\node[Nw=22](s2)(45,10){$D_1\:(\supseteq A_2)$}
\node(s2l)(37,10){} 
\node(s3)(85,10){$D_2$}
\node(sg)(61,30){$g$}

\drawedge[sxo=-1,exo=-1](sg,s3){}
\drawedge[sxo=-1,exo=-1,ELside=r](sg,s2l){{\footnotesize $P_1$}}
\drawedge[sxo=1,exo=1,dash={0.2 0.5}0](sg,s2l){}
\drawedge[sxo=1,exo=1,dash={0.2 0.5}0](sg,s3){{\footnotesize $P_2$}}
\drawedge[dash={0.2 0.5}0](s2l,s1){{\footnotesize $P_1$}}
\drawedge[dash={0.2 0.5}0](s3,s2){{\footnotesize $P_2$}}
\drawloop[loopangle=120,loopCW=n,ELside=r](s2l){{\footnotesize $P_1$}}
\drawloop[loopangle=-90,dash={0.2 0.5}0](s2l){{\footnotesize $P_1$}}
\drawloop[loopangle=30,loopCW=n,ELside=r](s3){{\footnotesize $P_2$}}
\drawloop[loopangle=-90,dash={0.2 0.5}0](s3){{\footnotesize $P_2$}}

\end{picture}}
\caption{The type of $G_1.G_2$.}
\label{fig-type-composition}
\end{figure}
Since $G_1.G_2$ has all rules from $G_1$ and $G_2$ it is clear that
$(\stepgram_{G_1}+ \stepgram_{G_2})\subseteq \stepgram_G$, from which we
deduce $R_{G_1}.R_{G_2}\subseteq R_{G_1.G_2}$. Furthermore, the inclusion in
the other direction also holds:
\ifthenelse{\boolean{short_version}}{
\begin{lemma}[Composition Lemma]
}{
\begin{lemma}[Composition Lemma, see Appendix~\ref{app-lem-composition}]
}
\label{lem-composition}
$R_{G_1 . G_2}= R_{G_1}.R_{G_2}$.
\end{lemma}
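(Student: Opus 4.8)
The inclusion $R_{G_1}.R_{G_2}\subseteq R_{G_1.G_2}$ being already established, the plan is to prove the converse $R_{G_1.G_2}\subseteq R_{G_1}.R_{G_2}$. So I would start from a witness $ug\stepgram_{G_1.G_2}^* vg$ with $u\in A_1^*$ and $v\in C_2^*$ and, invoking Prop.~\ref{prop-greedy}, assume it is greedy (hence pure and leftmost). The goal is to reorganise this derivation into an \emph{equivalent} one of the form $\pi_1.\pi_2$, where $\pi_1$ uses only rules of $P_1$ and $\pi_2$ only rules of $P_2$; the word reached between the two phases will then supply the required intermediate $w\in C_1^*$.

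First I would record the separation of the alphabet: after the renaming the five classes $A_1$, $B_1$, $C_1=A_2$, $B_2$, $C_2$ are pairwise disjoint, $g$ stays rightmost throughout (nothing is ever inserted to its right and $g$ is neither inserted nor erased), and $P_1$ only reads or writes letters of $A_1\cup B_1\cup C_1\cup\{g\}$ whereas $P_2$ only reads or writes letters of $C_1\cup B_2\cup C_2\cup\{g\}$. The pivotal acyclicity observation is that no $P_1$-step can use a letter produced by a $P_2$-step: the active and affected letters of a $P_1$-step lie in $A_1\cup B_1\cup C_1\cup\{g\}$, and none of these is ever created by $P_2$ (which only inserts $D_2$ letters and never touches $g$). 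Hence there is no causal dependency from $P_2$ to $P_1$; the only dependency is the benign one from $P_1$ to $P_2$ (a $P_2$-step may erase a $C_1=A_2$ letter that some earlier $P_1$-step produced), and this is exactly consistent with scheduling all $P_1$-steps first.

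The core of the argument is therefore a commutation analysis. I would check that any two adjacent steps commute — give the same word when swapped — \emph{unless} both are actions of the single shared active letter $g$: whenever the two active letters differ they cannot interfere, since $D_1\cap D_2=\emptyset$ forces distinct active letters to be $P_1$- versus $P_2$-active, every insertion occurs immediately to the left of its own active letter, and the only shared locus is the position just left of $g$. Using this, each non-$g$ $P_1$-step can be bubbled leftward past the preceding $P_2$-steps with no effect on the outcome. The hard part, which I expect to be the main obstacle, is the behaviour at $g$ itself: successive insertions and deletions performed by $g$ under $P_1$ and under $P_2$ do \emph{not} commute, because they compete for the single position left of $g$ and so reorder the inserted letters. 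To overcome this I would argue that $g$'s $P_1$-actions may still be moved before its $P_2$-actions with unchanged \emph{net} effect, exploiting greediness and purity: the $D_1$ letters that $g$ inserts are destined either to be consumed inside the $G_1$-computation or to be handed over as $C_1$ letters to $G_2$, while the surviving output $v\in C_2^*$ is produced and arranged solely by $P_2$-steps; since all $B_1$, $B_2$ and remaining $D_1$ letters are eventually deleted, the rearrangement leaves the multiset and the left-to-right order of the surviving $C_2$ letters untouched. This is the step where I would lean on Fact~\ref{LTr-basic-invariant} and on the closure $R_G=\invsubword_A.\approx.R_G.\approx$ of Prop.~\ref{prop-closure} to absorb any residual discrepancies, such as stuttering or extra consumed input letters.

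Once the sorted equivalent $\pi_1.\pi_2$ is in hand, the conclusion is routine. The midpoint word $wg$ reached after $\pi_1$ lies in $(A_1\cup B_1\cup C_1)^* g$, since $\pi_1$ applies only $P_1$-rules starting from $ug$ with $u\in A_1^*$. As $\pi_2$ applies only $P_2$-rules, which can delete no $A_1$ and no $B_1$ letter, any such letter in $w$ would survive into $v$; because $v\in C_2^*$ this forces $w\in C_1^*=A_2^*$. Then $\pi_1$ is a genuine $G_1$-derivation witnessing $u\mathrel{R_{G_1}}w$ and $\pi_2$ a genuine $G_2$-derivation witnessing $w\mathrel{R_{G_2}}v$, whence $u\mathrel{(R_{G_1}.R_{G_2})}v$, which is what we had to show.
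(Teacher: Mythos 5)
You have isolated the correct two inclusions and even the correct difficulty, but your resolution of that difficulty is a genuine gap, and it sits exactly at the crux of the lemma. The case you defer --- interleaved actions of the shared letter $g$ under $P_1$ and under $P_2$ --- cannot be handled by moving $g$'s $P_1$-actions earlier ``with unchanged net effect''. This is not a commutation: if $g$ first inserts $e\in D_2$ and later inserts $d\in D_1$, the word contains $\ldots e\,d\,g$, whereas the reordered sequence produces $\ldots d\,e\,g$. Because both insertion and deletion are strictly local (a deleter must stand immediately to the right of the letter it erases, an inserter writes immediately to its own left), these two configurations enable different continuations: in the first, $e$ cannot be erased until $d$ has disappeared; in the second, the constraint is reversed. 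So the reordered step sequence need not be a derivation at all, let alone one ending in $vg$, and the claim that ``the multiset and left-to-right order of the surviving $C_2$ letters'' is preserved is precisely what would have to be proved. Nor can Prop.~\ref{prop-closure} absorb the discrepancy: it only licenses extra \emph{input} letters and stuttering at the level of the arguments of $R_G$, not a reshuffling of $D_1$ versus $D_2$ letters inside intermediate words. As long as this case is open, nothing rules out the composed grammar gaining power by interleaving the two rule sets at $g$ --- which is exactly what the lemma denies.

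The paper's proof closes this case not by rearranging but by showing that in a \emph{greedy} derivation the interleaving never occurs, so no commutation machinery is needed at all. Since $u\in A_1^*$ and, while only $P_1$-rules are used, Fact~\ref{LTr-basic-invariant} keeps the word in $A_1^*D_1^*$, and all letters of $A_1\cup D_1$ are inactive in $G_2$, the first $P_2$-step must be an insertion performed by $g$; leftmostness then makes the entire word to the left of $g$ inert for the remainder of the derivation, and an induction (with purity excluding that $g$ later produces fresh $D_1$-letters that could only be erased uselessly) shows that every subsequent word is of the form $xyg$ with $x$ an inert prefix and $y\in D_2^*$, every subsequent step being a $P_2$-step. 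Thus the greedy derivation is \emph{already} of the form $\pi_1.\pi_2$. Your commutation analysis for steps with distinct active letters is essentially correct but superfluous once this is seen, and your final paragraph (deducing $w\in C_1^*$ because $P_2$-rules erase no letter of $A_1\cup B_1$) coincides with the paper's conclusion and is fine.
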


\begin{remark}[Associativity]
\label{rem-composition-associative}
The composition $(G_1.G_2).G_3$ is well-defined if and only if
$G_1.(G_2.G_3)$ is. Furthermore, the two expressions denote exactly the
same result.
\qed
\end{remark}




\section{Simple Leftist Transformations}
\label{sec-simple-transformer}

As a tool for Sections~\ref{sec-np-hard} and~\ref{sec-transitive}, 
we now introduce and study restricted
families of leftist grammars (and transformers) where deletion rules are
forbidden (resp., only allowed on $A$).
\\

An \emph{insertion grammar} is a LGr $G=(\Sigma,P,g)$ where $P$ only
contain insertion rules. See Fig.~\ref{fig-type-str} for a graphic
definition. For an arbitrary leftist grammar $G$, we denote with $G^\INS$
the insertion grammar obtained from $G$ by keeping only the insertion
rules.

The \emph{insertion relation} $I_G \subseteq \Sigma^* \times \Sigma^*$
associated with an insertion grammar $G=(\Sigma,P,g)$ is defined by $u\mathrel{I_G}v \equivdef ug \stepgram_G^* vg$. Obviously, $I_G\subseteq\;
\subword_\Sigma$. Observe that $I_G$ is not necessarily a leftist
transformation since it does not require any separation between input and
output symbols.
\\

A \emph{simple} leftist transformer is an LTr $G=(A,B,C,P,g)$ where
 $B=\emptyset$ and where no rule in $P$ erases symbols from $C$.
\begin{figure}[tbp]
\centering
{\setlength{\unitlength}{0.60mm}
\begin{picture}(150,10)(0,0)

\gasset{Nframe=n,Nw=6,Nh=6,Nmr=999,loopdiam=5}

\node(s1)(5,5){$\Sigma$}
\node(sg)(25,5){$g$}

\drawedge[syo=0,eyo=0](sg,s1){}
\drawloop[loopangle=90,loopCW=n](s1){}

\node(ss1)(105,5){$A$}
\node(ss2)(125,5){$C$}
\node(ssg)(145,5){$g$}

\drawedge(ssg,ss2){}
\drawedge[dash={0.2 0.5}0](ss2,ss1){}
\drawloop[loopangle=90,loopCW=n](ss2){}

\end{picture}}
\caption{Types of insertion grammars (left) and simple leftist transformers (right).}
\label{fig-type-str}
\end{figure}
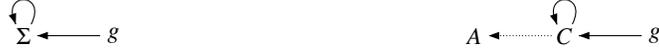
See Fig.~\ref{fig-type-str} for a graphic definition.
We give, without proof, an immediate consequence of the definition:
\begin{lemma}
\label{length-simple-transformer}
Let $G = (A,\emptyset,C,P,g)$ be a simple LTr and assume $ug \stepgram_G^k
vg$ for some $u \in A^*$ and $v \in C^*$. Then $k = \size{u} + \size{v}$.
\end{lemma}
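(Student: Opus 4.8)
The plan is to analyze the structure of a derivation $ug \stepgram_G^k vg$ in a simple LTr and count exactly how each of the $k$ steps contributes. Since $G = (A,\emptyset,C,P,g)$ is simple, its only rules are insertion rules (the $B$ component is empty) plus, possibly, deletion rules that erase symbols of $A$ — indeed by Fact~\ref{LTr-basic-invariant} every intermediate word has the shape $A^*C^*g$, and deletions may only remove $A$-symbols (no rule erases a $C$-symbol, by the definition of \emph{simple}). So there are exactly three kinds of steps available: an insertion of a $C$-symbol, an insertion of an $A$-symbol, or a deletion of an $A$-symbol. First I would classify the $k$ steps into these three categories and track how the multiset of $A$-letters and $C$-letters evolves.

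The key accounting is the following. The final word $v \in C^*$ contains no $A$-symbols, so every one of the $\size{u}$ input letters present in $ug$ must eventually be erased; since $A$-symbols are inactive and are never inserted, the $A$-letters that get deleted are precisely those $\size{u}$ initial letters, and each is deleted in exactly one step. This gives $\size{u}$ deletion steps. Symmetrically, each of the $\size{v}$ output letters in the final word must have been created by an insertion step, and since $C$-symbols are never deleted (simplicity) and the axiom $g$ inserts the first one, every $C$-letter appearing along the derivation survives to the end; hence there are exactly $\size{v}$ insertion-of-$C$ steps. The point I must nail down is that there are \emph{no other} steps — no spurious insertions of $A$-symbols and no insertion of a $C$-letter that is later unaccounted for. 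For the latter, simplicity already forbids erasing $C$, so no inserted $C$-letter can disappear. For the former, I would invoke purity: by Proposition~\ref{prop-greedy} it suffices to count steps in a greedy (hence pure) equivalent derivation, and in a pure derivation every letter is useful, so no $A$-letter is ever inserted only to sit inert and be deleted; moreover $A$-symbols cannot be inserted at all since $A$-symbols are not inserted by $P$ in an LTr. Thus the count is forced: $k = \size{u} + \size{v}$.

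The main obstacle is making the "no other steps" argument airtight rather than hand-waving, and in particular confirming that the stated equality holds for the \emph{given} derivation of length $k$ and not merely for some canonical one. Since the definition restricts $A$-symbols to be inactive and never inserted, \emph{every} derivation (not just greedy ones) inserts no $A$-letter, so the set of $A$-letters can only shrink, going from the $\size{u}$ letters of $u$ down to zero; each deletion removes exactly one, giving precisely $\size{u}$ deletion steps. Likewise $C$-letters are never deleted in a simple LTr, so their count only grows, from zero (in $ug$, since $u\in A^*$) up to $\size{v}$, one per insertion step, giving exactly $\size{v}$ insertions. Every step is either an $A$-deletion or a $C$-insertion, so $k = \size{u} + \size{v}$ for the actual derivation, with no appeal to purity needed. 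I would present it in this sharper form, using Fact~\ref{LTr-basic-invariant} to guarantee the $A^*C^*$ shape throughout and thereby to justify that the only enabled rules are exactly the two kinds counted.
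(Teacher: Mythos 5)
Your final argument is correct, and it is precisely the ``immediate consequence of the definition'' that the paper intends: the paper states this lemma \emph{without proof}, so there is no paper proof to compare against, and the direct counting in your last paragraph is exactly what fills that gap. In a simple LTr every rule either inserts a symbol of $C$ (since $B=\emptyset$ and $A$-symbols are never inserted in any LTr) or deletes a symbol of $A$ (since simplicity forbids erasing $C$-symbols and the axiom $g$ is never erased); hence along \emph{any} derivation the number of $A$-letters decreases by exactly one at each deletion step, going from $\size{u}$ to $0$, and the number of $C$-letters increases by exactly one at each insertion step, going from $0$ to $\size{v}$, which forces $k=\size{u}+\size{v}$ for the given derivation. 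One remark worth keeping in mind: the detour through Proposition~\ref{prop-greedy} in your middle paragraph would have been a genuine flaw had you kept it, because equivalence of derivations only preserves endpoints, not length --- indeed the whole point of $\mu$-minimization is that an equivalent derivation can be strictly shorter --- so a count established for a greedy equivalent says nothing about the given $k$. You spotted this yourself, and your sharper final form correctly avoids any appeal to greediness or purity.
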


Given a simple LTr $G = (A,\emptyset, C,P,g)$ and two words $u=a_1\cdots
a_n\in A^*$ and $v=c_1\cdots c_m\in C^*$, we say that a non-decreasing map
$h:\{1,\ldots,n\}\rightarrow\{1,\ldots,m\}$ is a \emph{$G$-witness} for $u$
and $v$ if $P$ contains the rules $c_{h(i)} \del a_i$ and $c_{j+1} \ins
c_j$ (for all $i=1,\ldots,n$ and $j=1,\ldots,m$, with the convention that
$c_{m+1}=g$). Finally, we write $u\mathrel{\nabla\!_G}v$ when such a
$G$-witness exists. Clearly, $\nabla\!_G\subseteq R_G$. Indeed, when $G$ is
a simple transformer, $\nabla\!_G$ can be used as a restricted version of
$R_G$ that is easier to control and reason about.


\ifthenelse{\boolean{short_version}}{
\begin{lemma}
}{
\begin{lemma}[See App.~\ref{app-lemma-nabla}]
}
\label{lemma-nabla}
Let $G = (A,\emptyset,C,P,g)$ be a simple LTr.
Then $R_G = \nabla\!_G.I_{G^\INS}$.
\end{lemma}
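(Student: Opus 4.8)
The plan is to prove the two inclusions separately. The inclusion $\nabla\!_G.I_{G^\INS}\subseteq R_G$ is immediate: the text already notes $\nabla\!_G\subseteq R_G$, and every rule of $G^\INS$ is a rule of $G$, so any $I_{G^\INS}$-derivation $wg\stepgram^*_{G^\INS}vg$ is also a $G$-derivation. If $u\mathrel{\nabla\!_G}w$ and $w\mathrel{I_{G^\INS}}v$ then $w\in C^*$ (by the type of $\nabla\!_G$), so concatenating $ug\stepgram^* wg\stepgram^* vg$ witnesses $u\mathrel{R_G}v$, the final word staying in $C^*$ because the only insertions available create $C$-letters.

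For the converse $R_G\subseteq \nabla\!_G.I_{G^\INS}$, I would take $(u,v)\in R_G$ with $u=a_1\cdots a_n$ and, using Prop.~\ref{prop-greedy}, fix a greedy derivation $\pi=(ug\stepgram^* vg)$. Two structural facts drive the argument. First, by Fact~\ref{LTr-basic-invariant} every word of $\pi$ lies in $A^*C^*$ (here $D=C$), so a deletion rule $c\del a$ can only fire on the \emph{rightmost} remaining $A$-letter, which is the only one having a $C$-letter immediately to its right; consequently the $A$-letters are erased in the order $a_n,a_{n-1},\ldots,a_1$, and each deleter is the leftmost $C$-letter (call it the \emph{head}) at the moment it fires. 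Second, by Lemma~\ref{length-simple-transformer} there are exactly $n$ deletions and $m=\size{v}$ insertions, so no $C$-letter is ever erased and every inserted $C$ survives into $v$.

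Next I would track the head over time. The head changes only when a new $C$-letter is inserted immediately to its left, and such a letter is necessarily inserted \emph{by the current head}; hence the successive heads $\eta_1,\eta_2,\ldots,\eta_t$ form a root-to-leaf path of the insertion tree, with $g\ins\eta_1$ and $\eta_k\ins\eta_{k+1}$ in $P$, and $\eta_t$ equal to the leftmost $C$-letter of $v$. I set $w\egdef\eta_t\eta_{t-1}\cdots\eta_1$, which is a subword of $v$ and, read left to right, is a \emph{tight chain}: writing $w=d_1\cdots d_{m'}$ with $d_j=\eta_{t-j+1}$ gives $d_{j+1}\ins d_j$ and $g\ins d_{m'}$ in $P$. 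Each deleter $\gamma_i$ of $a_i$ is a head, hence some $d_{h(i)}$, and $d_{h(i)}\del a_i\in P$. Because deletions occur in the order $a_n,\ldots,a_1$ while the head index only grows with time, the $a_i$ erased later use heads further down the spine; translated into positions in $w$, this makes $h$ non-decreasing, so $u\mathrel{\nabla\!_G}w$.

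It then remains to show $w\mathrel{I_{G^\INS}}v$: starting from $wg$ one re-inserts the off-spine $C$-letters following the insertion tree of $\pi$ (each node after its parent), every step placing a child immediately to the left of its already-present parent via the same insertion rule used in $\pi$, thereby rebuilding exactly $v$ using only insertion rules, i.e.\ $wg\stepgram^*_{G^\INS}vg$. The hard part will be this structural analysis: proving that the heads form a single spine, that this spine is simultaneously a tight insertion chain and carries all deletions through a non-decreasing witness $h$, and that the remaining insertions detach cleanly into a legal insertion-grammar derivation from $w$ to $v$. Greediness (via purity and the $A^*C^*$ invariant) is exactly what forbids useless letters and justifies the separation of $\pi$ into a ``$\nabla\!_G$ phase'' followed by an ``$I_{G^\INS}$ phase''.
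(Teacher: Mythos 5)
Your proof is correct, but it establishes the hard inclusion $R_G\subseteq\nabla\!_G.I_{G^\INS}$ by a genuinely different route from the paper's. The paper splits a \emph{greedy} derivation temporally at the first index $k$ where the $A$-part of the current word $w_i=u_iv_ig$ becomes empty, and proves by induction on $i\leq k$ that the already-consumed part of $u$ is $\nabla\!_G$-related to the \emph{whole} $C$-part $v_i$; leftmost-ness is essential there, because it forces every insertion performed while $A$-letters remain to happen at the head (an insertion elsewhere would leave the first letter of $v_i$ inert, and then the last $A$-letter could never be erased), so $v_i$ itself stays a $\nabla$-chain; after step $k$ the remaining derivation consists of insertions only, so it literally \emph{is} the required $I_{G^\INS}$-derivation, and no reconstruction is ever needed. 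You instead perform a global structural analysis: your middle word is the spine of heads of the whole derivation, the $\nabla\!_G$-witness comes from the head-chain and head-monotonicity facts, and the $I_{G^\INS}$-half must be rebuilt from the insertion tree. Your route buys generality: despite your appeal to Prop.~\ref{prop-greedy}, none of your steps actually needs greediness --- the $A^*C^*$ invariant, ``deleters are heads'', the head chain, the monotonicity of $h$, and the reconstruction all hold for arbitrary derivations --- whereas the paper's induction collapses without leftmost-ness. The price is precisely the reconstruction step, and there one point needs care: ordering the re-insertions merely ``each node after its parent'' is not enough, since two children of a common parent are both placed immediately to the left of that parent and would come out reversed if re-inserted against their chronological order in $\pi$; re-inserting the off-spine letters in the chronological order of $\pi$ (which refines parent-before-child) is what works, with the invariant that the reconstructed word is at every stage the restriction of $v$ to the letters already present.
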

Combining Lemma~\ref{lemma-nabla} with 
$\Id_{C^*}\subseteq I_{G^\INS}\subseteq\;\subword_C$, 
we obtain the  following weaker but simpler statement.
\begin{corollary}
\label{coro-nabla}
Let $G=(A,\emptyset,C,P,g)$ be a simple LTr. Then $\nabla\!_G \subseteq R_G \subseteq \nabla\!_G.\subword_C$.
\end{corollary}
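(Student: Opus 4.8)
The plan is to obtain the corollary as an almost immediate consequence of Lemma~\ref{lemma-nabla}, which already factors $R_G$ as $\nabla\!_G.I_{G^\INS}$. It then suffices to sandwich the relation $I_{G^\INS}$ between $\Id_{C^*}$ and $\subword_C$ and to propagate these two bounds through the composition, using that relational composition is monotone in each of its arguments.

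First I would record the two containments $\Id_{C^*}\subseteq I_{G^\INS}\subseteq\;\subword_C$. The left inclusion is pure reflexivity: for any $v\in C^*$ the empty (zero-step) derivation witnesses $vg\stepgram_{G^\INS}^* vg$, hence $v\mathrel{I_{G^\INS}}v$. For the right inclusion, observe that $G^\INS$ retains only insertion rules, so every step of a derivation $wg\stepgram^* vg$ merely adds a letter to the left of an active letter; thus $w\subword v$. It then remains to check that every inserted letter lies in $C$, and this is the one place where the simple-transformer hypothesis is genuinely used: in any LTr the input letters of $A$ are never inserted by $P$, and since $G$ is simple we have $B=\emptyset$ and hence $\Sigma=A\cup C$, so every insertable letter belongs to $C$. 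Consequently all letters deleted to recover $w$ from $v$ lie in $C$, i.e.\ $w\subword_C v$, which gives $I_{G^\INS}\subseteq\;\subword_C$.

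Next I would assemble the two inequalities. For the lower bound, I would first note that the range of $\nabla\!_G$ is contained in $C^*$ (a $G$-witness for $u$ and $v$ forces $v=c_1\cdots c_m\in C^*$), so that $\nabla\!_G.\Id_{C^*}=\nabla\!_G$; composing the inclusion $\Id_{C^*}\subseteq I_{G^\INS}$ on the right of $\nabla\!_G$ then yields $\nabla\!_G=\nabla\!_G.\Id_{C^*}\subseteq\nabla\!_G.I_{G^\INS}=R_G$. For the upper bound, composing $I_{G^\INS}\subseteq\;\subword_C$ on the right of $\nabla\!_G$ gives $R_G=\nabla\!_G.I_{G^\INS}\subseteq\nabla\!_G.\subword_C$. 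Taken together these are exactly $\nabla\!_G\subseteq R_G\subseteq\nabla\!_G.\subword_C$.

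There is no genuinely hard step here: once Lemma~\ref{lemma-nabla} is granted, the corollary is a bookkeeping consequence obtained by monotonicity of composition. The only point that deserves to be spelled out — and the one I expect to be the main (minor) obstacle — is the inclusion $I_{G^\INS}\subseteq\;\subword_C$, since it is the sole place where simplicity (the emptiness of $B$ together with the non-insertability of the input symbols $A$) is invoked to guarantee that a pure-insertion derivation can only grow a word by adding $C$-letters.
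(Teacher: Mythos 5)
Your proof is correct and follows exactly the paper's route: the paper derives the corollary in one line by combining Lemma~\ref{lemma-nabla} with the sandwich $\Id_{C^*}\subseteq I_{G^\INS}\subseteq\;\subword_C$, which is precisely what you do. Your only addition is to spell out why $I_{G^\INS}\subseteq\;\subword_C$ (simplicity forces all inserted letters into $C$), a detail the paper leaves implicit.
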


\subsection{Union of Simple Leftist Transformers}

We now consider the combination of two simple LTr's $G_1 =
(A,\emptyset,C_1,P_1,g)$ and $G_2 = (A,\emptyset,C_2,P_2,g)$ that transform
from a same $A$ to disjoint output alphabets, i.e., with $C_1 \cap C_2 =
\emptyset$. We define their \emph{union} with $G_1 + G_2 \:\egdef\:
(A,\emptyset,C_1\cup C_2,P_1\cup P_2,g)$. This is clearly a simple LTr with
$(R_{G_1}+R_{G_2})\subseteq R_{G_1+G_2}$.
It
further satisfies:
\begin{lemma}
\label{union-simple-transformer}
If $u\mathrel{R_{G_1 + G_2}}v$ then $u\mathrel{(R_{G_1} + R_{G_2})}v'$ for some $v'
\subword v$.
\end{lemma}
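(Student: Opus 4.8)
The plan is to reduce the statement, via the $\nabla$-characterisation of simple transformers, to a purely combinatorial property of witnesses: that the output of $G_1+G_2$ must be \emph{monochromatic}, i.e.\ lie entirely in $C_1^*$ or entirely in $C_2^*$. Since $G_1+G_2$ is itself a simple LTr with output alphabet $C_1\cup C_2$, I would first invoke Corollary~\ref{coro-nabla}, which gives $R_{G_1+G_2}\subseteq \nabla\!_{G_1+G_2}.\subword_{C_1\cup C_2}$. Hence from $u\mathrel{R_{G_1+G_2}}v$ we obtain some $v'\subword v$ with $u\mathrel{\nabla\!_{G_1+G_2}}v'$. It therefore suffices to prove the inclusion $\nabla\!_{G_1+G_2}\subseteq \nabla\!_{G_1}\cup\nabla\!_{G_2}$, because $\nabla\!_{G_i}\subseteq R_{G_i}$ then yields $u\mathrel{(R_{G_1}+R_{G_2})}v'$, which is exactly the claim.

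To establish this inclusion, fix a $(G_1+G_2)$-witness $h$ for $u=a_1\cdots a_n$ and $v'=c_1\cdots c_m$. The key structural observation is that, since $G_1$ and $G_2$ are simple LTr's with disjoint outputs, every rule of $P_1$ mentions only symbols in $A\cup C_1\cup\{g\}$ and every rule of $P_2$ only symbols in $A\cup C_2\cup\{g\}$; consequently no rule of $P_1\cup P_2$ relates a symbol of $C_1$ to a symbol of $C_2$. I would apply this to the insertion rules $c_{j+1}\ins c_j$ (for $j=1,\ldots,m$, with $c_{m+1}=g$) that the witness guarantees: if $c_{j+1}\in C_i$, then this rule, mentioning $c_{j+1}$, must belong to $P_i$, whence its inserted symbol $c_j$ also lies in $C_i$. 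A downward induction starting from $c_m$ then forces all of $c_1,\ldots,c_m$ into a single $C_i$, so $v'\in C_i^*$.

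Once $v'\in C_i^*$ is known, the witness is seen to live entirely in $G_i$: each insertion rule $c_{j+1}\ins c_j$ and each deletion rule $c_{h(k)}\del a_k$ mentions, besides symbols of $A$ and $g$, only $C_i$ symbols, hence belongs to $P_i$. Thus $h$ is a $G_i$-witness for $u$ and $v'$, giving $u\mathrel{\nabla\!_{G_i}}v'\subseteq R_{G_i}$, and the proof is complete.

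The main obstacle is the monochromaticity step, i.e.\ making rigorous that the chain of insertions recorded by the witness cannot switch colour; this is precisely where the disjointness $C_1\cap C_2=\emptyset$ and the absence of cross rules in $P_1\cup P_2$ are essential. The degenerate cases $m=0$ or $n=0$ should be checked separately but are routine, since they force an empty witness that is simultaneously a $G_1$- and a $G_2$-witness.
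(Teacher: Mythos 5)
Your proposal is correct and follows essentially the same route as the paper's proof: apply Corollary~\ref{coro-nabla} to extract a $(G_1+G_2)$-witness for some $v'\subword v$, then use the disjointness $C_1\cap C_2=\emptyset$ to conclude that all the witness's insertion and deletion rules lie in a single $P_i$, so that the witness is already a $G_i$-witness and $\nabla\!_{G_i}\subseteq R_{G_i}$ finishes the argument. Your downward induction along the chain $g\ins c_m\ins\cdots\ins c_1$ merely makes explicit the ``monochromaticity'' step that the paper compresses into one sentence.
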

\begin{proof}
Assume $u\mathrel{R_{G_1 + G_2}}v$. With Cor.~\ref{coro-nabla}, we obtain
$u\mathrel{\nabla\!_{G_1 + G_2}} v'$ for some $v'=c_1\cdots c_m\subword v$.
Hence $G_1 + G_2$ has insertion rules $c_{j+1}\ins c_j$ for all
$j=1,\ldots,m$, and deletion rules of the form $c_{h(i)}\del u[i]$. Since
$C_1$ and $C_2$ are disjoint, either all these rules are in $G_1$ (and
$u\mathrel{\nabla\!_{G_1}}v'$), or they are all in $G_2$ (and
$u\mathrel{\nabla\!_{G_2}}v'$). Hence $u\mathrel{(R_{G_1} + R_{G_2})}v'$.
\qed
\end{proof}


\section{Encoding \3SAT with Acyclic Leftist Transformers}
\label{sec-np-hard}

This section proves the following result.
\begin{theorem}
\label{theo-bounded-reachibility-npcomp}
\emph{Bounded Reachability} and \emph{Exact Bounded Reachability} in
leftist grammars are $\NP$-complete, even when restricting to acyclic
grammars.
\end{theorem}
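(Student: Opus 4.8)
The plan is to treat membership and hardness separately. For membership in $\NP$, observe that a positive instance of Bounded Reachability is witnessed by a derivation $ug\stepgram^* vg$ respecting the step bound. By Prop.~\ref{prop-greedy} we may assume this derivation is greedy, and since the bound on the number of steps is part of the input (given in unary), the derivation is a polynomial-size certificate whose validity --- that each step applies a rule of $G$ and that the first and last words are $ug$ and $vg$ --- is checkable in polynomial time (indeed greediness itself is decidable in $\L$ by the preceding theorem). Exact Bounded Reachability is identical, except that the guessed derivation is required to have length exactly $k$. This argument uses no acyclicity, so membership holds for arbitrary leftist grammars.

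For $\NP$-hardness I would reduce from $\3SAT$, building from a formula $\varphi$ over variables $x_1,\dots,x_n$ with clauses $C_1,\dots,C_m$ an \emph{acyclic} simple LTr $G_\varphi=(A,\emptyset,C,P,g)$ together with input/output words and a step bound, so that $\varphi$ is satisfiable iff the target is reachable within the bound. The output alphabet carries, for each variable $x_j$, two symbols $t_j,f_j$ standing for the two truth values; the insertion rules of $P$ are chosen so that the only reachable outputs are the literal sequences $\ell_1\cdots\ell_n$ with $\ell_j\in\{t_j,f_j\}$. Thus a fixed-shape output word encodes exactly a total, and necessarily \emph{consistent}, assignment, since picking $t_j$ and $f_j$ simultaneously is ruled out by the shape of the chain. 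The input word encodes the clauses, and $P$ contains a deletion rule letting a literal symbol erase a clause symbol precisely when that literal satisfies that clause; requiring every input symbol to be erased --- which is forced because the target lies in $C^*$ --- then amounts to requiring every clause to be satisfied by the chosen assignment.

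To analyse reachability I would lean on the simple-transformer toolkit: by Cor.~\ref{coro-nabla} any transformation realised by $G_\varphi$ factors through a witness $u\mathrel{\nabla\!_{G_\varphi}}v'$ with $v'\subword v$, and by Lemma~\ref{length-simple-transformer} a greedy derivation has length exactly $\size{u}+\size{v}$, which pins down the step bound $k$ and makes the Exact variant equivalent to the plain one. The $\nabla\!_{G_\varphi}$-witness is a non-decreasing map $h$ pairing each input (clause) symbol with the output (literal) symbol that erases it, so translating ``$h$ is a valid $G_\varphi$-witness'' into ``the encoded assignment satisfies every clause'' yields both directions of the reduction. Where several choices per variable must be expressed, the union construction of Lemma~\ref{union-simple-transformer} supplies the needed disjunction while keeping the grammar simple and acyclic.

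The main obstacle is the interaction between two global requirements: the assignment must be \emph{consistent} across all clauses, and the witness map $h$ must be \emph{non-decreasing}. A clause may be satisfiable by several variables sitting at unrelated positions in the literal chain, so a naive layout of the clause symbols in the input word admits no monotone covering even when $\varphi$ is satisfiable. The crux of the construction is therefore to lay out the clause symbols --- with controlled repetition and an ordering keyed to the variable order $1,\dots,n$ --- so that a monotone covering exists if and only if a single consistent assignment satisfies all clauses, all while keeping the insertion graph a downward chain $g\to\ell_n\to\cdots\to\ell_1$ and the deletions directed from $C$ into $A$, hence acyclic. Finally, since in this acyclic construction a derivation is $\mu$-minimal exactly when it is greedy, the same reduction delivers the $\coNP$-hardness of $\mu$-Minimality claimed as Coro.~\ref{coro-mu-min-hard}.
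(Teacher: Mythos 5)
Your $\NP$-membership argument is fine and matches the paper's (a derivation of length at most $n$, with $n$ given in unary, is a polynomial certificate; greediness is not even needed for this). The gap is in the hardness part. You build a \emph{single} simple LTr $G_\varphi$ and claim that the chain shape of the insertion rules forces the output to encode a consistent assignment. It does not: insertion rules may be applied repeatedly and along both branches, so a word containing both $t_j$ and $f_j$ is insertable whenever both $\ell_{j+1}\ins t_j$ and $\ell_{j+1}\ins f_j$ are rules. Worse, in Bounded Reachability the target word $v$ is part of the instance: if you take $v$ to be one specific literal sequence $\ell_1\cdots\ell_n$ you have pre-committed the valuation (so your reduction decides satisfaction by a \emph{fixed} valuation, not satisfiability), and if $v$ lists both polarities of each variable there is nothing left to enforce consistency. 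You acknowledge this and defer it to a clause-symbol layout making ``monotone covering $\Leftrightarrow$ satisfiability'', but that layout is never constructed --- and there is a principled reason it cannot be: by Lemmas~\ref{lemma-nabla} and~\ref{length-simple-transformer} and Coro.~\ref{coro-nabla}, bounded reachability for a single simple LTr amounts to the existence of a subword $v'\subword v$ supporting the insertion chain together with a non-decreasing witness map $h$, and this monotone matching structure can be decided in polynomial time by a left-to-right dynamic program over positions of $u$ and $v$. So $\NP$-hardness cannot live inside one simple transformer (unless $\P=\NP$).

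What your proposal is missing is the paper's key structural idea: the reduction is a \emph{composition} $G_\Phi=(G_1^\top+G_1^\bot).(G_2^\top+G_2^\bot)\cdots(G_n^\top+G_n^\bot)$ of $2n$ simple transformers, one union per variable. The nondeterministic choice of the valuation does not sit in the target word (which is the fixed, assignment-independent word $T_1^n\cdots T_m^n$), but in the existentially quantified \emph{intermediate} words supplied by the Composition Lemma~\ref{lem-composition}; consistency of the choice for $x_j$ across all $m$ clauses is exactly what Lemma~\ref{union-simple-transformer} provides, because $G_j^\top$ and $G_j^\bot$ have disjoint (unprimed vs.\ primed) output alphabets, so all deletions at level $j$ must come from a single one of the two grammars. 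Your clause-erasure idea is then realized level by level through the conditional rules $T_i^j\del U_i^{j-1}$. Finally, your closing claim that in an acyclic grammar a derivation is $\mu$-minimal exactly when it is greedy is false --- the paper itself shows greediness is in $\L$ while $\mu$-minimality is $\coNP$-complete even for acyclic LGr's --- and accordingly Coro.~\ref{coro-mu-min-hard} is not obtained from $G_\Phi$ alone but from a modified grammar $G'_\Phi$ with $k$ extra bypass symbols and an explicit derivation that is $\mu$-minimal if and only if $\Phi$ is unsatisfiable.
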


(Exact) Bounded Reachability is the question whether there exists a
$n$-step derivation $u\stepgram^n v$ (respectively, a derivation
$u\stepgram^{\leq n}v$ of non-exact length at most $n$) between given $u$
and $v$. These questions are among the simplest reachability questions and,
since we consider that the input $n$ is given in unary,\footnote{It is
  natural to begin with this assumption when considering fundamental
  aspects of reachability since writing $n$ more succinctly would blur the
  complexity-theoretical picture.} they are obviously in $\NP$ for leftist
grammars (and all semi-Thue systems).

Consequently, our contribution in this paper is the $\NP$-hardness part.
This is proved by encoding $\3SAT$ instances in leftist grammars where
reaching a given final $v$ amounts to guessing a valuation that satisfies
the formula. While the idea of the reduction is easy to grasp, the
technicalities involved are heavy and it would be difficult to really prove
the correctness of the reduction without relying on a compositional
framework like the one we develop in this paper. It is indeed very tempting
to ``prove'' it by just running an example.

Rather than adopting this easy way, we shall describe the reduction as a
composition of simple leftist transformers and use our composition theorems
to break down the correctness proof in smaller, manageable parts. Once the
ideas underlying the reduction are grasped, a good deal of the reasoning is
of the type-checking kind: verifying that the conditions required for
composing transformers are met.

Throughout this section we assume a generic $\3SAT$ instance $\Phi =
\bigwedge_{i=1}^m C_i$ with $m$ 3-clauses on $n$ Boolean variables in
$X=\{x_1,\ldots,x_n\}$. Each clause has the form $C_i = \bigvee_{k=1}^3
\epsilon_{i,k} x_{i,k}$ for some polarity $\epsilon_{i,k} \in \{ +, - \}$
and $x_{i,k} \in X$. (There are two additional assumptions on
  $\Phi$ that we postpone until the proof of Coro.~\ref{coro-np-red-1} for
  clarity.) We use standard model-theoretical notation like $\sat\Phi$
(validity), or $\sigma\sat\Phi$ (entailment) when $\sigma$ is a Boolean
formula or a Boolean valuation of some variables. 

We write $\sigma[x\mapsto b]$ for the extension of a valuation $\sigma$
with $(x,b)$, assuming $x\not\in\Dom(\sigma)$. Finally, for a valuation
$\theta:X\rightarrow \{\top,\bot\}$ and some $j=0,\ldots,n$, we write
$\theta_j$ to denote the restriction $\theta_{|\{x_1,\ldots,x_j\}}$ of
$\theta$ on the first $j$ variables.

\subsection{Associating an LTr $G_\Phi$ with $\Phi$}

For the encoding, we use an alphabet
$\Sigma=\{T_i^j,U_i^j,{T'}_i^j,{U'}_i^j~|~i=1,\ldots,m \:\wedge\: j =
0,\ldots,n\}$, i.e., $4(n+1)$ symbols for each clause. The choice of the
symbols is that a $U$ means ``\emph{Undetermined}'' and a $T$ means
``\emph{True}'', or determined to be valid.

For $j=0,\ldots,n$, let $V_j\egdef\{U_1^j,\ldots,U_m^j,T_1^j,\ldots,T_m^j\}$,
$V'_j\egdef\{{U'}_1^j,\ldots,{U'}_m^j,{T'}_1^j,\ldots,{T'}_m^j\}$, and
$W_j\egdef V_j\cup V'_j$, so that
$\Sigma$ is partitioned in levels with $\Sigma=\bigcup_{j=0}^{n} W_j$.
With each $x_j\in X$ we associate two intermediary LTr's:
\begin{xalignat*}{2}
       G^\top_j &\egdef (W_{j-1}, \emptyset, V_j, P_j, g),
&
       G^\bot_j &\egdef (W_{j-1}, \emptyset, V'_j, P'_j, g)
\end{xalignat*}
with sets of rules $P_j$ and $P'_j$. The rules for $G^\top_j$ are given in
Fig.~\ref{fig-np-rules}: some deletion rules are conditional, depending on
whether $x_j$ appears in the clauses $C_1,\ldots,C_m$. The rules for
$G^\bot_j$ are obtained by switching primed  and unprimed symbols, and by
having conditional rules based on whether $\neg x_j$ appears in the $C_i$'s.
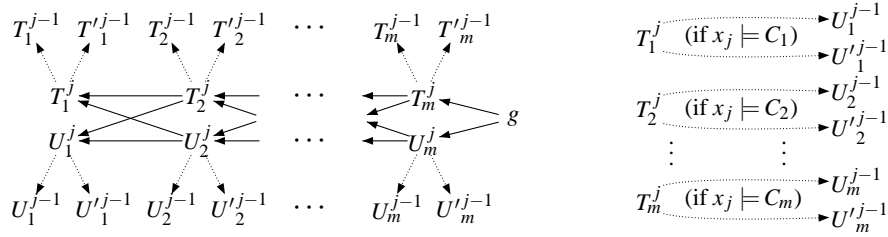
\begin{figure}[htbp]
\centering
{\setlength{\unitlength}{0.60mm}
\begin{picture}(190,40)(0,0)

\def\angfor{18.5}
\def\aol{198.5} 
\def\aou{161.5} 

\gasset{Nframe=n,Nw=7,Nh=7,Nmr=999,loopdiam=5}

\node(t1)(10,25){$T_1^j$}
\node(u1)(10,15){$U_1^j$}
\node(t2)(40,25){$T_2^j$}
\node(u2)(40,15){$U_2^j$}
\node(tdots)(65,25){\large $\cdots$}
\node(udots)(65,15){\large $\cdots$}
\node(tm)(90,25){$T_m^j$}
\node(um)(90,15){$U_m^j$}
\node(sg)(110,20){$g$}
\drawedge(sg,tm){}
\drawedge(sg,um){}
\drawedge(t2,t1){}
\drawedge(t2,u1){}
\drawedge(u2,t1){}
\drawedge(u2,u1){}

{\gasset{ilength=10,flength=10}
\imark[iangle=0](t2)	\imark[iangle=-\angfor](t2)
\imark[iangle=0](u2)	\imark[iangle=\angfor](u2)
\fmark[fangle=180](tm)	\fmark[fangle=\aol](tm)
\fmark[fangle=180](um)	\fmark[fangle=\aou](um)
}

\node(tt1)(2.5,40){$T_1^{j-1\!\!\!\!\!\!\!\!}$}
\node(tt1p)(17.5,40){${T'}_1^{j-1\!\!\!\!\!\!\!\!}$}
\node(tt2)(32.5,40){$T_2^{j-1\!\!\!\!\!\!\!\!}$}
\node(tt2p)(47.5,40){${T'}_2^{j-1\!\!\!\!\!\!\!\!}$}
\node(ttdots)(65,40){\large $\cdots$}
\node(ttm)(82.5,40){$T_m^{j-1\!\!\!\!\!\!\!\!}$}
\node(ttmp)(97.5,40){${T'}_m^{j-1\!\!\!\!\!\!\!\!}$}
\node(uu1)(2.5,0){$U_1^{j-1\!\!\!\!\!\!\!\!}$}
\node(uu1p)(17.5,0){${U'}_1^{j-1\!\!\!\!\!\!\!\!}$}
\node(uu2)(32.5,0){$U_2^{j-1\!\!\!\!\!\!\!\!}$}
\node(uu2p)(47.5,0){${U'}_2^{j-1\!\!\!\!\!\!\!\!}$}
\node(uudots)(65,0){\large $\cdots$}
\node(uum)(82.5,0){$U_m^{j-1\!\!\!\!\!\!\!\!}$}
\node(uump)(97.5,0){${U'}_m^{j-1\!\!\!\!\!\!\!\!}$}

{\gasset{dash={0.2 0.5}0}
\drawedge(t1,tt1){}\drawedge(t1,tt1p){}
\drawedge(t2,tt2){}\drawedge(t2,tt2p){}
\drawedge(tm,ttm){}\drawedge(tm,ttmp){}
\drawedge(u1,uu1){}\drawedge(u1,uu1p){}
\drawedge(u2,uu2){}\drawedge(u2,uu2p){}
\drawedge(um,uum){}\drawedge(um,uump){}
}

{\gasset{dash={0.2 0.5}0,Nframe=n,Nw=10}
\node(t1)(140,38){$T_1^j$}
\node(t2)(140,22){$T_2^j$}
\node(tdots)(145,14){$\vdots$}
\node(tdots)(170,14){$\vdots$}
\node(tm)(140,2){$T_m^j$}

{\gasset{Nw=0,Nframe=n,ExtNL=y,NLdist=0.5,NLangle=0} 
\node(u1)(180,42){$U_1^{j-1\!\!\!\!\!\!\!\!}$}
\node(up1)(180,34){${U'}_1^{j-1\!\!\!\!\!\!\!\!}$}
\node(u2)(180,26){$U_2^{j-1\!\!\!\!\!\!\!\!}$}
\node(up2)(180,18){${U'}_2^{j-1\!\!\!\!\!\!\!\!}$}
\node(um)(180,6){$U_m^{j-1\!\!\!\!\!\!\!\!}$}
\node(upm)(180,-2){${U'}_m^{j-1\!\!\!\!\!\!\!\!}$}
}

{\gasset{ExtNL=y,NLangle=0,NLdist=3}
\nodelabel(t1){(if $x_j\sat C_1$)}
\nodelabel(t2){(if $x_j\sat C_2$)}
\nodelabel(tm){(if $x_j\sat C_m$)}
\drawbpedge(t1,60,2,u1,178,45){}
\drawbpedge(t2,60,2,u2,178,45){}
\drawbpedge(tm,60,2,um,178,45){}
\drawbpedge(t1,-60,2,up1,182,45){}
\drawbpedge(t2,-60,2,up2,182,45){}
\drawbpedge(tm,-60,2,upm,182,45){}
}

}

\end{picture}}
\caption{$P_j$, the rules for $G_J^\top$: Fixed part on left, conditional
  part on right.}
\label{fig-np-rules}
\end{figure}
One easily checks that $G_j^\top$ and $G_j^\bot$ are indeed simple transformers. They have same inputs and disjoint outputs so that the union
$(G^\top_j+G^\bot_j):W_{j-1}\vdash W_j$ is well-defined. Hence the
following composition is well-formed:
\[
   G_\Phi\egdef (G^\top_1 + G^\bot_1). (G^\top_2 + G^\bot_2) \cdots (G^\top_n + G^\bot_n).
\]
We conclude the definition of $G_\Phi$ with an intuitive explanation of the
idea behind the reduction. $G_\Phi$ operates on the word $u_0=U_1^0\cdots
U_m^0$ where each $U_i^0$ stands for ``\emph{the validity of clause $C_i$
  is undetermined at step $0$ (i.e., at the beginning)}''. At step $j$,
$G^\top_j+G^\bot_j$ picks a valuation for $x_j$: $G^\top_j$ picks
``$x_j=\top$'' while $G^\bot_j$ picks ``$x_j=\bot$''. This transforms
$U_i^{j-1}$ into $U_i^j$, and $T_i^{j-1}$ into $T_i^j$, moving them to the
next level. Furthermore, an undetermined $U_i^{j-1}$ can be transformed
into $T_i^j$ if $C_i$ is satisfied by $x_j$. In addition, and because
$G^\top_j$ and $G^\bot_j$ must have disjoint output alphabets, the symbols
in the $V_j$'s come in two copies (hence the $V'_j$'s) that behave
identically when they are input in the transformer for the next step.

The reduction is concluded with the following claim that we prove by
combining Corollaries~\ref{coro-np-red-1} and~\ref{coro-np-red-2} below.
\begin{align}
\notag
\Phi \text{ is satisfiable} \text{ iff } 
	U_1^ 0U_2^0 \cdots U_m^0 g & \stepgram^{2mn}_{G_\Phi}T_1^n T_2^n \cdots T_m^n g\\[.3em]
\tag{Correctness}
\text{ iff } 
	U_1^ 0U_2^0 \cdots U_m^0 g & \stepgram^{\leq 2mn}_{G_\Phi}T_1^n T_2^n \cdots T_m^n g\\[.3em]
\notag
\text{ iff } 
	U_1^ 0U_2^0 \cdots U_m^0 g & \stepgram^{*}_{G_\Phi}T_1^n T_2^n \cdots T_m^n g.
\end{align}

Observe finally that $G_\Phi$ is an acyclic grammar in the sense
of~\cite{jurdzinski2007c}, that is to say, its rules define an acyclic
``\emph{may-act-upon}'' relation between symbols. Such grammars are much
weaker than general LGr's since, e.g., languages recognized by LGr's with
acyclic deletion rules (and arbitrary insertion rules) are
regular~\cite{jurdzinski2007c}.
\begin{remark}
The construction of $G_\Phi$ from $\Phi$, mostly amounting to
copying operations for the $G_j^\top$'s and $G_j^\bot$'s, to type-checking
and sets-joining operations for the composition of the LTr's, can be
carried out in logarithmic space.
\qed
\end{remark}

\subsection{Correctness of the Reduction}

We say that a word $u$ is \emph{$j$-clean} if it has exactly $m$ symbols
and if $u[i]\in\{T_i^j,{T'}_i^j,U_i^j,{U'}_i^j\}$ for all $i=1,\ldots,m$.
It is \emph{$\top$-homogeneous} (resp.\
\emph{$\bot$-homogeneous}) if it does not contain any (resp., only
contains) primed symbols.

Let $0\leq j\leq n$ and $\theta_j$ be a Boolean valuation of
$x_1,\ldots,x_j$: we say that a $j$-clean $u$ \emph{respects} ($\Phi$
under) $\theta_j$ when, for all $i=1,\ldots,m$, $\theta_j\sat C_i$ when
$u[i]$ is determined (i.e., $\in T_i^j+{T'}_i^j$). Finally $u$ \emph{codes}
($\Phi$ under) $\theta_j$ if additionally each $u[i]$ is determined when
$\theta_j\sat C_i$. Thus, a word $u$ that codes some $\theta_j$ exactly
lists (via determined symbols) the clauses of $\Phi$ made valid by
$\theta_j$, and the only flexibility in $u$ is in using the primed or the
unprimed copy of the symbols. Hence there is only one $j$-clean
$u$ coding $\theta_j$ that is $\top$-homogeneous, and only one that is $\bot$-homogeneous. If $u$ respects $\theta_j$ instead of coding it, more latitude exists
since symbols may be undetermined even if the corresponding clause is valid
under $\theta_j$.
\\

Assume that, for some $j\in\{1,\ldots,n\}$, $u_{j-1}$ codes $\theta_{j-1}$
and $u_{j}$ codes $\theta_{j}$. Write $b$ for $\theta(x_{j})$ (NB: $b
\in\{\top,\bot\}$).
\begin{lemma}
\label{lem-nabla-one-step}
If $u_{j}$ is $b$-homogeneous 
then $u_{j-1} \mathrel{\nabla\!_{G^b_j}} u_{j}$.
\end{lemma}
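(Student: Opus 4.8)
The plan is to exhibit the identity map as a $G^b_j$-witness. Since $u_{j-1}$ codes $\theta_{j-1}$ it is $(j-1)$-clean and hence has exactly $m$ letters, and likewise $u_j$ codes $\theta_j$ and so has exactly $m$ letters; write $u_{j-1}=a_1\cdots a_m$ and $u_j=c_1\cdots c_m$. I claim that $h=\mathrm{id}$ (i.e.\ $h(i)=i$) is a $G^b_j$-witness for $u_{j-1}$ and $u_j$. Being the identity, $h$ is trivially non-decreasing, so only two families of rules have to be verified: the insertion rules $c_{l+1}\ins c_l$ (for $l=1,\dots,m$, with $c_{m+1}=g$) and the deletion rules $c_i\del a_i$ (for $i=1,\dots,m$). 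I treat the case $b=\top$; the case $b=\bot$ is identical after swapping primed and unprimed symbols and replacing the literal $x_j$ by $\neg x_j$ throughout.

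For the insertion rules, observe that $\top$-homogeneity forces $c_l=u_j[l]\in\{T_l^j,U_l^j\}$ for every $l$, so each $c_l$ lies in the $l$-th column $\{T_l^j,U_l^j\}$ of the output alphabet $V_j$. The fixed (unconditional) insertion rules of $P_j$ contain $X_{l+1}^j\ins Y_l^j$ for all $X,Y\in\{T,U\}$, as well as $g\ins T_m^j$ and $g\ins U_m^j$; hence $c_{l+1}\ins c_l$ holds for every $l=1,\dots,m$ (and $g\ins c_m$ for $l=m$) regardless of the actual $T/U$ pattern of $u_j$. This family of rules is therefore available unconditionally.

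For the deletion rules $c_i\del a_i$ I reason letter by letter, using that $\theta_j$ extends $\theta_{j-1}$ with $x_j\mapsto\top$, so $\theta_{j-1}\sat C_i$ implies $\theta_j\sat C_i$. If $\theta_{j-1}\sat C_i$, then by coding $a_i$ is a determined letter $T_i^{j-1}$ or ${T'}_i^{j-1}$ and $c_i=T_i^j$, so the fixed deletion rules $T_i^j\del T_i^{j-1}$ and $T_i^j\del {T'}_i^{j-1}$ supply $c_i\del a_i$. If $\theta_{j-1}\not\sat C_i$, then $a_i$ is undetermined ($U_i^{j-1}$ or ${U'}_i^{j-1}$) and I split on the status of $c_i$: when $\theta_j\not\sat C_i$ as well, $c_i=U_i^j$ and the fixed rules $U_i^j\del U_i^{j-1}$, $U_i^j\del {U'}_i^{j-1}$ apply; when $\theta_j\sat C_i$, $c_i=T_i^j$ and the required rule $T_i^j\del U_i^{j-1}$ (resp.\ $T_i^j\del {U'}_i^{j-1}$) is a \emph{conditional} rule of $P_j$.

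The one point that needs real justification — and the crux of the lemma — is that this last conditional rule is actually present. Its guard is ``$x_j\sat C_i$'', while the situation that triggers it is ``$\theta_{j-1}\not\sat C_i$ but $\theta_{j-1}[x_j\mapsto\top]\sat C_i$''. These coincide because a clause turns from false to true upon setting $x_j=\top$ exactly when the positive literal $x_j$ occurs in $C_i$, i.e.\ exactly when $x_j\sat C_i$. Thus every required deletion rule is available, $h=\mathrm{id}$ is a genuine $G^\top_j$-witness, and $u_{j-1}\mathrel{\nabla\!_{G^\top_j}}u_j$ follows. Aligning this guard with the semantic ``newly-satisfied'' condition (and checking the mirror statement $\neg x_j\sat C_i$ in the $\bot$ case) is the only delicate step; everything else is a direct appeal to the fixed rules of Fig.~\ref{fig-np-rules} and to the definition of coding.
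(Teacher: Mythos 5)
Your proof is correct and follows essentially the same route as the paper's: take $h=\Id_{\{1,\ldots,m\}}$ as the $G^b_j$-witness, get the insertion chain from $b$-homogeneity, and get the deletion rules by splitting on whether $C_i$ was already satisfied under $\theta_{j-1}$, with the conditional rule's guard $x_j\sat C_i$ (resp.\ $\neg x_j\sat C_i$) justified exactly by the ``newly satisfied'' case. Your version merely spells out the case analysis and the guard-matching argument in more detail than the paper does.
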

\begin{proof}
Let $h\egdef \Id_{\{1,\ldots,m\}}$. We claim that $h$ is a $G_j^b$-witness for
$u_{j-1}$ and $u_j$, i.e., that $G^b_j$ contains the required insertion and
deletion rules.
\\
\textbf{Insertions.} $G^b_j$ has all insertion rules $g\ins u_{j}[m]\ins
u_{j}[m-1]\ins \ldots \ins u_{j}[1]$ (leftmost rules in
Fig.~\ref{fig-np-rules}) since $u_{j}$ is $b$-homogeneous.
\\  
\textbf{Deletions.} $G_j^b$ has all deletion rules $u_{j}[i]\del
u_{j-1}[i]$. Firstly, both undetermined symbols $U^i_{j}$ and ${U'}^i_{j}$
may delete their counterparts $U^i_{j-1}$ and ${U'}^i_{j-1}$, and similarly
for the determined symbols (the unconditional deletion rules in
Fig.~\ref{fig-np-rules}). This is used if $C_i$ is not more valid under
$\theta_{j}$ than under $\theta_{j-1}$. Secondly, if $C_i$ is valid under
$\theta_{j}$ but not under $\theta_{j-1}$, then $x_{j}\sat C_i$ (or $\neg
x_{j}\sat C_i$, depending on $b$) and the conditional rules in
Fig.~\ref{fig-np-rules} allow a determined $T_i^{j}$ (or ${T'}_i^{j}$
depending on $b$) to delete $U_i^{j-1}$ or ${U'}_i^{j-1}$.
\qed
\end{proof}
\begin{lemma}
\label{rules-one-step}
If $u_{j}$ is $b$-homogeneous, then $u_{j-1}g \stepgram_{G^b_{j}}^{2m}
u_{j}g$.
\end{lemma}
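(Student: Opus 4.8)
The plan is to obtain the result by bridging from the combinatorial witness relation $\nabla\!_{G^b_j}$ to an explicit step count, using the length characterization for simple transformers. First I would invoke Lemma~\ref{lem-nabla-one-step}, which under the $b$-homogeneity hypothesis already gives $u_{j-1}\mathrel{\nabla\!_{G^b_j}}u_j$. Since $\nabla\!_{G^b_j}\subseteq R_{G^b_j}$ (Cor.~\ref{coro-nabla}), this yields $u_{j-1}\mathrel{R_{G^b_j}}u_j$, i.e.\ a derivation $u_{j-1}g\stepgram^*_{G^b_j}u_j g$ realizing the transformation. Here I would check the type-level bookkeeping needed to apply the length lemma: that $u_{j-1}$ lies in the input alphabet $W_{j-1}^*$ (it is $(j-1)$-clean because it codes $\theta_{j-1}$) and that $u_j$ lies in the output alphabet of $G^b_j$, which is $V_j$ when $b=\top$ and $V'_j$ when $b=\bot$ — exactly what $b$-homogeneity guarantees.

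Second, I would pin down the length. Because $G^b_j$ is a simple LTr, Lemma~\ref{length-simple-transformer} applies to the derivation just obtained and forces its length to equal $\size{u_{j-1}}+\size{u_j}$, regardless of how the derivation actually proceeds. Since $u_{j-1}$ codes $\theta_{j-1}$ it is $(j-1)$-clean, so $\size{u_{j-1}}=m$; likewise $u_j$ codes $\theta_j$ and is $j$-clean, giving $\size{u_j}=m$. Hence the derivation has exactly $m+m=2m$ steps, which is precisely the asserted $u_{j-1}g\stepgram_{G^b_j}^{2m}u_j g$.

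There is essentially no hard step: both ingredients are already in place, and the only point requiring a little care is the alphabet check ensuring the hypotheses of Lemma~\ref{length-simple-transformer} are met, namely that the source word is purely over the input alphabet and the target purely over the output alphabet — which follows from cleanness and $b$-homogeneity as above. If one prefers to avoid invoking Lemma~\ref{length-simple-transformer}, the $2m$-step derivation can instead be made explicit from the $G^b_j$-witness $h=\Id$ built in Lemma~\ref{lem-nabla-one-step}: first perform the $m$ insertions $g\ins u_j[m]\ins\cdots\ins u_j[1]$ to reach $u_{j-1}u_j g$, then perform the $m$ deletions $u_j[i]\del u_{j-1}[i]$, for a total of $2m$ steps. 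Appealing to Lemma~\ref{length-simple-transformer} is however cleaner, as it sidesteps spelling out the interleaving and the adjacency conditions of these steps.
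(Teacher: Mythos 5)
Your proposal is correct and follows essentially the same route as the paper's proof: invoke Lemma~\ref{lem-nabla-one-step} to get $u_{j-1}\mathrel{\nabla\!_{G^b_j}}u_j$, pass from $\nabla\!_{G^b_j}$ to $R_{G^b_j}$ to obtain a derivation $u_{j-1}g\stepgram^*_{G^b_j}u_jg$, and then use Lemma~\ref{length-simple-transformer} with $\size{u_{j-1}}=\size{u_j}=m$ to fix the length at $2m$. The extra alphabet bookkeeping and the alternative explicit $2m$-step construction you sketch are fine but not needed beyond what the paper already does.
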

\begin{proof}
From $u_{j-1}\mathrel{\nabla\!_{G^b_j}}u_{j}$
(Lemma~\ref{lem-nabla-one-step}) we deduce $u_{j-1}\mathrel{R_{G^b_j}}u_{j}$,
i.e., $u_{j-1}g\stepgram^*_{G^b_j}u_{j}g$, by Lemma~\ref{lemma-nabla}, and
then $u_{j-1}g\stepgram^{2m}_{G^b_j}u_{j}g$ by
Lemma~\ref{length-simple-transformer}.
\qed
\end{proof}

\begin{corollary}
\label{coro-np-red-1}
If  $\Phi$ is satisfiable, then
  $U_1^0 \cdots U_m^0g \stepgram_{G_\Phi}^{2mn} T_1^n \cdots T_m^ng$.
\end{corollary}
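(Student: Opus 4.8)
The plan is to build the required $2mn$-step derivation explicitly, as a concatenation of $n$ blocks of $2m$ steps, one block per variable, where the satisfying valuation dictates which disjunct of each union-transformer $G^\top_j+G^\bot_j$ is fired. First I would fix a valuation $\theta$ with $\theta\sat\Phi$. This is where the two postponed assumptions on $\Phi$ come in: I would use them to guarantee (i) that no clause is a tautology, so that the empty valuation $\theta_0$ entails no $C_i$ and hence the all-undetermined start word $u_0=U_1^0\cdots U_m^0$ codes $\theta_0$, and (ii) that $\theta$ may be taken with $\theta(x_n)=\top$ (e.g.\ by padding $\Phi$ with a fresh variable $x_n$ occurring in no clause).

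Next, for each $j=0,1,\ldots,n$ I would let $u_j$ be the unique $j$-clean word that codes $\theta_j$ and is $\theta(x_j)$-homogeneous (taking $\top$-homogeneity by convention at $j=0$). Coding fixes exactly which positions are determined—namely those $i$ with $\theta_j\sat C_i$—while homogeneity fixes the primed/unprimed copy, so each $u_j$ is well-defined, and the output $u_{j-1}$ of one block is literally the input of the next. By construction $u_0=U_1^0\cdots U_m^0$; and since $\theta\sat\Phi$ makes every clause valid while $\theta(x_n)=\top$ forces unprimed symbols, $u_n=T_1^n\cdots T_m^n$, so the two endpoints already match those of the claimed derivation.

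The core step is then a per-block application of Lemma~\ref{rules-one-step}. For each $j\in\{1,\ldots,n\}$, writing $b=\theta(x_j)$, the words $u_{j-1}$ and $u_j$ code $\theta_{j-1}$ and $\theta_j$ and $u_j$ is $b$-homogeneous, so the lemma yields $u_{j-1}g\stepgram^{2m}_{G^b_j}u_jg$. Since $G_\Phi$ collects the rules of every $G^b_j$, we have $\stepgram_{G^b_j}\;\subseteq\;\stepgram_{G_\Phi}$ (the inclusion already noted for binary compositions), so each block is also a derivation $u_{j-1}g\stepgram^{2m}_{G_\Phi}u_jg$. Concatenating the $n$ blocks, which chain because consecutive ones share the word $u_jg$, produces $u_0g\stepgram^{2mn}_{G_\Phi}u_ng$, that is exactly $U_1^0\cdots U_m^0g\stepgram^{2mn}_{G_\Phi}T_1^n\cdots T_m^ng$.

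I expect the main obstacle to be endpoint bookkeeping rather than the dynamics. The transformer $G^b_j$ outputs into $V_j$ (unprimed) when $b=\top$ but into $V'_j$ (primed) when $b=\bot$, so Lemma~\ref{rules-one-step} forces $u_j$ to be \emph{primed} precisely when $\theta(x_j)=\bot$; a naive satisfying valuation with $\theta(x_n)=\bot$ would therefore terminate on ${T'}_1^n\cdots{T'}_m^n$ rather than on the target. Ruling this out is exactly the purpose of the postponed assumptions, and the delicate part of the write-up is to state them (together with the non-triviality needed for $u_0$ to code $\theta_0$) and to check that they cost no generality for the intended $\NP$-hardness reduction.
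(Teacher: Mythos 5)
Your proof is correct and follows essentially the same route as the paper's: fix a satisfying valuation $\theta$, take $u_j$ to be the unique $j$-clean $\theta(x_j)$-homogeneous word coding $\theta_j$, apply Lemma~\ref{rules-one-step} once per variable, and concatenate the $n$ blocks using $\stepgram_{G_j^{b_j}}\subseteq\stepgram_{G_\Phi}$. The only (harmless) variance is in the second postponed assumption: the paper requires that $\Phi$ be satisfiable \emph{only} with $b_n=\top$, whereas you merely require that some satisfying valuation have $\theta(x_n)=\top$, which indeed suffices for this direction of the equivalence.
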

\begin{proof}
Since $\Phi$ is satisfiable, $\theta\sat\Phi$ for some valuation $\theta$.
For $j=1,\ldots,m$, we write $b_j$ for $\theta(x_j)$ and let $u_j$ be the only
$j$-clean $b_j$-homogeneous word that codes for $\theta_j$.

We now make two assumptions on $\Phi$ that are no loss of generality. First
we require that no clause $C_i$ contains both a literal and its negation,
hence no $C_i$ is tautologically valid. Then $u_0\egdef U_1^0 \cdots U_m^0$
codes the empty valuation $\theta_0$. Second, we require that $\Phi$ is
only satisfiable with $b_n=\top$ (which can be easily ensured by adding a
few extra variables). Then necessarily $u_n=T_1^n \cdots T_m^n$.

Lemma~\ref{rules-one-step} gives
$u_0g \stepgram_{G_1^{b_1}}^{2m} u_1g  \stepgram_{G_2^{b_2}}^{2m} u_{2}g
\cdots  \stepgram_{G_n^{b_n}}^{2m}u_ng$.
Since $\stepgram_{G_j^b}\subseteq
\stepgram_{G_j}\subseteq\stepgram_{G_\Phi}$ for all $b$ and $j$, we deduce
$u_0g\stepgram_{G_\Phi}^{2mn} u_ng$ as claimed.
\qed
\end{proof}

Fix some $\theta$, some $j\in\{1,\ldots,n\}$ and let $b=\theta(x_j)$.
\begin{lemma}
\label{nabla-phi}
If $u$ respects $\theta_{j-1}$ and
 $u\mathrel{\nabla\!_{G^b_j}}v$,
then $v$ respects $\theta_j$.
\end{lemma}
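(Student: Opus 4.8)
The plan is to first exploit the very specific shape of the rules of $G^b_j$ to pin down both $v$ and the witness $h$ exactly, and only then to verify the respecting condition clause by clause. Throughout I treat the case $b=\top$ (so $G^b_j=G^\top_j$, whose output alphabet is $V_j$); the case $b=\bot$ is identical after swapping primed and unprimed symbols. Since $u$ respects $\theta_{j-1}$ it is in particular $(j-1)$-clean, so $u$ has length $m$ with $u[i]\in\{T_i^{j-1},{T'}_i^{j-1},U_i^{j-1},{U'}_i^{j-1}\}$, each symbol carrying clause-index $i$.

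First I would unravel the $G^\top_j$-witness $h$. Write $v=c_1\cdots c_{m'}$ (recall $v\in V_j^*$ by the definition of $\nabla\!_{G^\top_j}$) and let $\alpha_p$ denote the clause-index of $c_p$. The insertion rules of $G^\top_j$ (left of Fig.~\ref{fig-np-rules}) are exactly $\{T^j_{k+1},U^j_{k+1}\}\ins\{T^j_k,U^j_k\}$ together with $g\ins\{T^j_m,U^j_m\}$; since a witness requires $c_{p+1}\ins c_p$ and $g\ins c_{m'}$, this forces $\alpha_{p+1}=\alpha_p+1$ and $\alpha_{m'}=m$, i.e.\ the clause-indices along $v$ strictly increase by one and end at $m$. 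On the other hand every deletion rule of $G^\top_j$ keeps the clause-index unchanged, so the rule $c_{h(i)}\del u[i]$ forces $\alpha_{h(i)}=i$. As $\alpha$ attains each value in $\{1,\ldots,m\}$ at most once yet must realise all of $1,\ldots,m$, I get $m'=m$, $\alpha_p=p$, $h=\Id_{\{1,\ldots,m\}}$, and $v$ is $j$-clean and $\top$-homogeneous with $v[i]\in\{T_i^j,U_i^j\}$ and $v[i]\del u[i]$ for every $i$.

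With this structure in hand the respecting condition is routine. Fix $i$ such that $v[i]$ is determined, i.e.\ $v[i]=T_i^j$; I must show $\theta_j\sat C_i$. The rule $T_i^j\del u[i]$ lies in $P_j$, so inspecting the outgoing deletion edges of $T_i^j$ in Fig.~\ref{fig-np-rules} leaves two cases. If $u[i]$ is determined ($u[i]\in\{T_i^{j-1},{T'}_i^{j-1}\}$, the fixed rules) then $u$ respecting $\theta_{j-1}$ gives $\theta_{j-1}\sat C_i$, and since $\theta_j$ extends $\theta_{j-1}$ and a satisfied clause stays satisfied under extension, $\theta_j\sat C_i$. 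If $u[i]$ is undetermined ($u[i]\in\{U_i^{j-1},{U'}_i^{j-1}\}$) then only the conditional rules apply, so $x_j\sat C_i$; as $b=\top$ means $\theta(x_j)=\top$ and $x_j\in\Dom(\theta_j)$, again $\theta_j\sat C_i$. In both cases $\theta_j\sat C_i$, which is exactly what respecting $\theta_j$ demands.

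The only genuinely delicate point is the first step: arguing that the combinatorics of $G^\top_j$'s rules leave no freedom for $v$ or $h$, so that $v$ is forced to be a $j$-clean $\top$-homogeneous word aligned position-by-position with $u$. Once $h=\Id$ and the shape of $v$ are established, the clause-by-clause argument is a direct reading of the fixed versus conditional deletion edges together with the monotonicity of clause satisfaction under extension of the valuation.
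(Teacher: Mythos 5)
Your proposal is correct and follows essentially the same route as the paper's proof: first use the chain of insertion rules and the clause-index-preserving deletion rules to force $\size{v}=m$, $h=\Id_{\{1,\ldots,m\}}$, and $v$ $j$-clean and $b$-homogeneous, then check the respecting condition via the fixed versus conditional deletion edges of Fig.~\ref{fig-np-rules}. The only cosmetic difference is that you argue the second step by direct case analysis on $u[i]$ (determined vs.\ undetermined) where the paper argues by contradiction; the two are logically identical.
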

\begin{proof}
Write $l$ for $\size{v}$. From $u\mathrel{\nabla\!_{G^b_j}}v$ (witnessed by
some $h$) we deduce that $G^b_j$ has insertion rules $g\ins v[l]\ins v[l-1]
\ins\ldots\ins v[1]$. Inspecting Fig.~\ref{fig-np-rules}, we conclude that
necessarily $l\leq m$. Since deletion rules $v[h(i)]\del u[i]$ are required
for all $i=1,\ldots,m$, we further see from Fig.~\ref{fig-np-rules} that
$h$ is injective, so that $l\geq m$. Finally $l=m$,
$h=\Id_{\{1,\ldots,m\}}$, $v$ is $j$-clean and $b$-homogeneous.

Now, knowing that $G_j^b$ contains the rules $v[i]\del u[i]$, we show that
$v$ respects $\theta_j$. Suppose, by way of contradiction, that it does
not. Thus there is some $i\in\{1,\ldots,m\}$ with $v[i]=T_i^j$ (assuming
$b=\top$ w.l.o.g.) while $\theta_j\not\sat C_i$ (so that
$\theta_{j-1}\not\sat C_i$). From $\theta_j\not\sat C_i$ we deduce that
$x_j\not\sat C_i$. Hence $G^b_j$ does not have the conditional rules
$T_i^j\del U_i^{j-1}$ and $T_i^j\del {U'}_i^{j-1}$. Thus
$u[i]\not\in\{U_i^{j-1},{U'}_i^{j-1}\}$. But then $u$ does not respect
$\theta_{j-1}$, contradicting our assumption.
\qed
\end{proof}
We immediately deduce:
\begin{lemma}
\label{subword-phi}
If $x\mathrel{R_{G^b_j}}y$ and there is some $u\subword x$ that respects
$\theta_{j-1}$, then there is some $v\subword y$ that respects $\theta_j$.
\end{lemma}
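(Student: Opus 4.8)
The plan is to derive the statement from Lemma~\ref{nabla-phi} by passing through the $\nabla$-restriction of $R_{G^b_j}$ and manufacturing, out of $u$ and $y$, a \emph{clean} pair on which Lemma~\ref{nabla-phi} applies directly. First I would invoke Cor.~\ref{coro-nabla}: since $x\mathrel{R_{G^b_j}}y$ and $G^b_j$ is a simple LTr, there is a word $y_0\subword y$ with $x\mathrel{\nabla\!_{G^b_j}}y_0$. Fixing a $G^b_j$-witness $h$ for this pair, the grammar $G^b_j$ contains the insertion chain $g\ins y_0[\size{y_0}]\ins\cdots\ins y_0[1]$ together with a deletion rule $y_0[h(p)]\del x[p]$ for every position $p$ of $x$.

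The core of the argument is to promote $u\subword x$ into $u\mathrel{\nabla\!_{G^b_j}}y_0$. Because $u$ respects $\theta_{j-1}$ it is $(j-1)$-clean, so $u=u[1]\cdots u[m]$ with $u[i]$ of clause index $i$, and since $u\subword x$ each $u[i]$ is some $x[p_i]$ with $p_1<\cdots<p_m$. Inspecting Fig.~\ref{fig-np-rules}, a deletion rule can only relate two symbols sharing the same clause index, so the rule $y_0[h(p_i)]\del x[p_i]$ forces $y_0[h(p_i)]$ to have clause index $i$. Meanwhile the rigidity of the insertion chain (whose only source from $g$ is a clause-index-$m$ symbol, and which lowers the clause index by exactly one at each step leftward) makes the clause indices of $y_0$ a top segment of $\{1,\ldots,m\}$ ending at $m$, with $y_0[s]$ of clause index $m-\size{y_0}+s$. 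The existence of a clause-index-$1$ deletion target (the case $i=1$) then forces $\size{y_0}=m$, whence $y_0$ is $j$-clean and $b$-homogeneous with $y_0[s]$ of clause index $s$, and consequently $h(p_i)=i$. Thus every rule $y_0[i]\del u[i]$ belongs to $G^b_j$, and together with the insertion chain for $y_0$ the identity map is a $G^b_j$-witness establishing $u\mathrel{\nabla\!_{G^b_j}}y_0$.

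To conclude I would apply Lemma~\ref{nabla-phi} to $u$ (which respects $\theta_{j-1}$) and $u\mathrel{\nabla\!_{G^b_j}}y_0$, obtaining that $y_0$ respects $\theta_j$; since $y_0\subword y$, taking $v\egdef y_0$ settles the lemma. I expect the main obstacle to be the combinatorial bookkeeping that pins down $\size{y_0}=m$ and $h(p_i)=i$: one must combine the rigidity of the insertion chain, which fixes the clause indices of $y_0$ as a top segment, with the fact that $u$, being $(j-1)$-clean, contributes the \emph{full} index range $1,\ldots,m$ as deletion targets and thereby forbids $y_0$ from being a shorter output. Note that Prop.~\ref{prop-closure} is of no help here, since it only allows \emph{enlarging} the input by $A$-symbols, whereas we need to shrink $x$ down to its subword $u$.
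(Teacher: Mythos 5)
Your proof is correct, and it reaches the same endgame as the paper (Coro.~\ref{coro-nabla} followed by Lemma~\ref{nabla-phi}), but it handles the subword step by a genuinely different route. The paper's proof is three lines: from $u\subword x$ and $x\mathrel{R_{G^b_j}}y$ it invokes Prop.~\ref{prop-closure} to conclude $u\mathrel{R_{G^b_j}}y$ outright, then applies Coro.~\ref{coro-nabla} to $u$ and finishes with Lemma~\ref{nabla-phi}. You instead apply Coro.~\ref{coro-nabla} to $x$ and transfer the $\nabla$-witness from $x$ down to its subword $u$ by a bespoke combinatorial analysis (deletion rules preserve clause indices, the insertion chain forces the indices of $y_0$ to be a top segment ending at $m$, hence $\size{y_0}=m$ and $h(p_i)=i$). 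That analysis is sound --- it is essentially the same bookkeeping the paper performs inside Lemma~\ref{nabla-phi} --- but it amounts to re-proving, for the specific grammar $G^b_j$, an instance of the general closure property; the cost is length and reusability (the paper reuses the identical closure step again in Coro.~\ref{coro-np-red-2}), the gain is independence from Prop.~\ref{prop-closure}. Your parenthetical justification for avoiding that proposition deserves comment: read literally, $R_G=\invsubword_A.\approx.R_G.\approx$ under the paper's left-to-right composition convention indeed asserts the ``padding'' closure, which is in fact \emph{false} in general (an LTr input may contain a junk $A$-letter that no rule can erase, so superwords need not stay related). But the intended content --- as the appendix proof shows, since Fact~\ref{fact-cl1} with $n=0$ permits outright \emph{removal} of an inert, eventually-erased letter, and as both uses of the proposition (here and in Coro.~\ref{coro-np-red-2}) confirm --- is the shrinking direction: deleting $A$-letters from the input, modulo stuttering, preserves $R_G$. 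The relation symbol in the statement is simply inverted. So you have spotted a real notational slip in the paper, but the conclusion you drew from it (that the proposition is unusable and a workaround is needed) is the opposite of its intent: under the corrected reading, your entire second paragraph collapses to a single application of Prop.~\ref{prop-closure}, which is exactly the paper's proof.
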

\begin{proof}
From the Closure Property~\ref{prop-closure}, we get
$u\mathrel{R_{G^b_j}}y$. Then, from $R_{G_j^b}\subseteq
\nabla\!_{G_j^b}.\subword$ (Coro.~\ref{coro-nabla}) we deduce
$u\mathrel{\nabla\!_{G_j^b}} v$ for some $v\subword y$. Now $v$ respects
$\theta_j$ thanks to Lemma~\ref{nabla-phi}.
\qed
\end{proof}

\begin{corollary}
\label{coro-np-red-2}
If  $U_1^0 \cdots U_m^0 g\stepgram^*_{G_\Phi} T_1^n \cdots T_m^n g$, then 
 $\Phi$ is satisfiable.
\end{corollary}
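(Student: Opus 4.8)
The plan is to read this as the converse of Corollary~\ref{coro-np-red-1}: rather than building a derivation out of a satisfying valuation, I extract a satisfying valuation out of the given derivation. First I would observe that the hypothesis $U_1^0\cdots U_m^0 g\stepgram^*_{G_\Phi} T_1^n\cdots T_m^n g$, together with $U_1^0\cdots U_m^0\in W_0^*$ and $T_1^n\cdots T_m^n\in V_n^*$, says exactly that $u_0\mathrel{R_{G_\Phi}}u_n$ for $u_0=U_1^0\cdots U_m^0$ and $u_n=T_1^n\cdots T_m^n$. Applying the Composition Lemma~\ref{lem-composition} repeatedly (with associativity, Remark~\ref{rem-composition-associative}), $R_{G_\Phi}$ factors as $R_{G^\top_1+G^\bot_1}\cdots R_{G^\top_n+G^\bot_n}$, so there are intermediate words $x_0=u_0,x_1,\ldots,x_n=u_n$ with $x_{j-1}\mathrel{R_{G^\top_j+G^\bot_j}}x_j$ at every level $j$. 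The valuation is then read off level by level: Lemma~\ref{union-simple-transformer} yields, for each $j$, some $x'_j\subword x_j$ with $x_{j-1}\mathrel{(R_{G^\top_j}+R_{G^\bot_j})}x'_j$, i.e.\ $x_{j-1}\mathrel{R_{G^{b_j}_j}}x'_j$ for a choice $b_j\in\{\top,\bot\}$. Setting $\theta(x_j)\egdef b_j$ defines the valuation $\theta$ that I claim satisfies $\Phi$.

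The core of the argument is to propagate the \emph{respects} property through these levels. I would prove by induction on $j$ that there is an $j$-clean word $v_j\subword x_j$ that respects $\theta_j$. The base case takes $v_0=u_0$, which is $0$-clean and, carrying no determined symbol, vacuously respects the empty valuation $\theta_0$. For the inductive step, from $v_{j-1}\subword x_{j-1}$ respecting $\theta_{j-1}$ and from $x_{j-1}\mathrel{R_{G^{b_j}_j}}x'_j$, Lemma~\ref{subword-phi} (which already bundles the Closure Property~\ref{prop-closure}, Corollary~\ref{coro-nabla} and Lemma~\ref{nabla-phi}) hands me a $v_j\subword x'_j$ respecting $\theta_j$; since $x'_j\subword x_j$, transitivity of $\subword$ gives $v_j\subword x_j$, closing the induction.

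Finally, at the top level $v_n\subword u_n=T_1^n\cdots T_m^n$ is $n$-clean, so it has exactly $m$ symbols; as $u_n$ also has exactly $m$ symbols, the subword relation forces $v_n=u_n$. Every letter $T_i^n$ of $v_n$ is then determined, so the fact that $v_n$ respects $\theta_n=\theta$ yields $\theta\sat C_i$ for every $i$, i.e.\ $\theta\sat\Phi$, as required. The routine points are bookkeeping: that the per-level choices $b_j$ can be fixed independently (each constrains only $x_{j-1}$ and $x'_j$), so that $\theta$ is well-defined before the induction begins, and that the two subword steps ($x'_j\subword x_j$ from the Union Lemma, $v_j\subword x'_j$ from Lemma~\ref{subword-phi}) compose cleanly. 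The genuinely load-bearing step is the cleanness argument at the very end: it is the rigidity of $n$-clean words (exactly $m$ symbols) matched against the length of $T_1^n\cdots T_m^n$ that collapses all the accumulated subword slack and forces every clause to be determined, hence valid, under $\theta$.
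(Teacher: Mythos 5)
Your proof is correct and follows essentially the same route as the paper's: the Composition Lemma~\ref{lem-composition} to split the derivation into levels, Lemma~\ref{union-simple-transformer} to extract the per-level Boolean choices $b_j$, an induction via Lemma~\ref{subword-phi} to propagate the \emph{respects} property, and the $n$-cleanness/length argument to force $v_n = T_1^n\cdots T_m^n$ and conclude $\theta\sat\Phi$. The only (cosmetic) difference is that the paper first invokes Prop.~\ref{prop-closure} to replace the chain by $u'_{j-1}\mathrel{R_{G^{b_j}_j}}u'_j$ before inducting, whereas you keep the relation $x_{j-1}\mathrel{R_{G^{b_j}_j}}x'_j$ and let the subword hypothesis already built into Lemma~\ref{subword-phi} absorb the slack --- a harmless streamlining that saves one lemma invocation.
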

\begin{proof}
Write $u_0$ for $U_1^0 \cdots U_m^0$ and $u_n$ for $T_1^n \cdots T_m^n$.
From the definition of $G_\Phi$ and the Composition Lemma~\ref{lem-composition},
we deduce that there exist some words $u_1,\ldots,u_{n-1}$ such that
$u_{j-1}\mathrel{R_{G_j^\top+G_j^\bot}} u_j$ for all $j=1,\ldots,n$.

With Lemma~\ref{union-simple-transformer}, we further deduce that there
exist some words $u'_1, \ldots, u'_n$ and Boolean values $b_1,\ldots,b_n$
such that $u'_j \subword u_j$ and $u_{j-1}\mathrel{R_{G^{b_j}_j}}u'_{j}$
for all $j=1,\ldots,n$. Hence also $u'_{j-1}\mathrel{R_{G^{b_j}_j}}u'_{j}$
by Prop.~\ref{prop-closure} (and letting $u'_0=u_0$).

Write $\theta$ for $[x_1\mapsto b_1,\ldots,x_n\mapsto b_n]$. With
Lemma~\ref{subword-phi}, induction on $j$, and since $u'_0$ respects
$\theta_0$, we further deduce that there exists some words
$u''_1,\ldots,u''_n$ such that, for all $j=1,\ldots,n$, $u''_j\subword
u'_j$ and $u''_j$ respects $\theta_j$. From $\size{u''_n}=m$ (it respects
$\theta$) and $u''_n\subword u_n$, we deduce that $u''_n=u_n$. Finally,
$\theta\sat\Phi$
since $u''_n$ respects $\theta$
and $u''_n=u_n=T_1^n\cdots T_m^n$.
\qed
\end{proof}
\begin{corollary}
\label{coro-mu-min-hard}
$\mu$-Minimality of a derivation is $\coNP$-hard.
\end{corollary}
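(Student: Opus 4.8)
The plan is to derive $\coNP$-hardness by recycling the $\3SAT$-encoding $G_\Phi$ of the previous subsection. Since $\mu$-minimality is a \emph{universal} property (no equivalent derivation has strictly smaller measure), its complement is existential, so it suffices to reduce $\3SAT$ to the complement, i.e.\ to produce, in logarithmic space from $\Phi$, a leftist grammar $G'_\Phi$ together with one explicit derivation $\pi_0$ of it such that $\pi_0$ is \emph{not} $\mu$-minimal iff $\Phi$ is satisfiable. The obstruction to a naive reduction is that the only satisfiability-sensitive feature of $G_\Phi$ is whether the target $u_n=T_1^n\cdots T_m^n$ is reachable \emph{at all} (Coro.~\ref{coro-np-red-1} versus Coro.~\ref{coro-np-red-2}); hence no derivation with $u_ng$ as endpoint exists in the unsatisfiable case, and the reduction cannot simply emit such a $\pi_0$. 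I therefore plan to augment $G_\Phi$ into $G'_\Phi$ by adjoining a rigid ``conveyor'' gadget whose \emph{internal} symbols are fresh (disjoint from the $W_j$'s), and which deterministically rewrites $u_0g=U_1^0\cdots U_m^0g$ into $u_ng$ in a forced number $N$ of steps, where $N=2mn+2$ is the least even number exceeding $2mn$ (recall that, since $\size{u_0}=\size{u_n}=m$, any such derivation has an even length, as insertions and deletions must balance). This detour is unconditional, so the canonical gadget run $\pi_0$ always exists and is writable explicitly.

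With $G'_\Phi$ and $\pi_0$ fixed, the two directions are short. If $\Phi$ is satisfiable, Coro.~\ref{coro-np-red-1} supplies a genuine derivation $u_0g\stepgram_{G_\Phi}^{2mn}u_ng$; because $G'_\Phi$ retains all rules of $G_\Phi$, this is an equivalent derivation of measure $\langle 2mn,\ldots\rangle < \langle 2mn+2,\ldots\rangle = \mu(\pi_0)$, witnessing that $\pi_0$ is not $\mu$-minimal. Conversely, if $\Phi$ is unsatisfiable, Coro.~\ref{coro-np-red-2} says the efficient route through $G_\Phi$ is unavailable, and I must prove that $\pi_0$ is $\mu$-minimal, i.e.\ that every equivalent derivation of $G'_\Phi$ has measure $\geq\mu(\pi_0)$. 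By Prop.~\ref{prop-greedy} it is enough to inspect greedy equivalents, so the whole task reduces to showing that $u_0g$ cannot reach $u_ng$ in fewer than $2mn+2$ steps in the combined grammar.

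The step I expect to be the main obstacle is exactly this unsatisfiable direction, which requires global control over \emph{all} equivalent derivations rather than the local checks that make greediness lie in $\L$. Two interferences must be excluded. First, the gadget must have no internal shortcut: I plan to realise it as a chain of fresh symbols each firing exactly once and each \emph{useful} for producing $u_n$, so that a length count of the $\nabla$/simple-transformer kind (Lemma~\ref{length-simple-transformer}) pins every gadget-only run at exactly $N$. Second, and more delicately, one must rule out \emph{mixed} derivations that splice gadget steps with $P$-rules to fabricate a cheaper route to $u_ng$; this is subtle precisely because the gadget, in order to consume the level-$0$ letters $U_i^0$ and emit the level-$n$ letters $T_i^n$, must touch $W_0$ and $W_n$ at its two boundaries. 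Here I would argue that any sub-$(2mn)$ derivation ending in $u_ng\in W_n^*$ must delegate the deletion of the level-$0$ letters to the genuine level-by-level churn of $P$ (the gadget being unable to provide it cheaply), and then, after projecting away the fresh gadget letters, one extracts a $G_\Phi$-derivation $u_0g\stepgram^{\leq 2mn}u_ng$, contradicting Coro.~\ref{coro-np-red-2}; the projection and its validity would be justified through the invariant and closure machinery (Fact~\ref{LTr-basic-invariant} and Prop.~\ref{prop-closure}) together with the alphabet separation. Finally, the gadget adds only $O(mn)$ fresh symbols and rules and keeps the ``may-act-upon'' relation acyclic, so the reduction is logarithmic-space computable and stays within acyclic grammars, matching the scope of Theorem~\ref{theo-bounded-reachibility-npcomp}.
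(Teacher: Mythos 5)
Your reduction is, in essence, the paper's own. The paper defines $G'_\Phi$ by adding to $G_\Phi$ a chain of $k$ fresh symbols $a_1,\ldots,a_k$ (anchored at $a_0=T_1^n$, so not quite fully fresh) with rules $a_{i-1}\ins a_i$, $a_{i-1}\del a_i$ and $a_k\del U_i^0$, and outputs the canonical bypass run $\pi$ of length $2m+2k$; your ``rigid conveyor'' with forced length $N=2mn+2$ is exactly this gadget instantiated with the least admissible $k=m(n-1)+1$ (the paper only requires $k>m(n-1)$). Both directions of your equivalence match the paper's intent: satisfiability yields a length-$2mn$ equivalent via Coro.~\ref{coro-np-red-1}, beating $\pi_0$; and in the unsatisfiable case a greedy (hence pure) equivalent either never touches the gadget, so it is a $G_\Phi$-derivation contradicting Coro.~\ref{coro-np-red-2}, or must pay for the whole chain plus the $2m$ unavoidable steps, hence costs at least $2mn+2$.

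There is, however, one step stated too weakly, and taken literally it leaves a hole: you claim that ``the whole task reduces to showing that $u_0g$ cannot reach $u_ng$ in fewer than $2mn+2$ steps.'' The measure is $\mu(\pi)=\tuple{n,p_1,\ldots,p_n}$ ordered lexicographically, so length is only its \emph{first} component: even when no strictly shorter equivalent exists, $\pi_0$ fails to be $\mu$-minimal as soon as some equivalent derivation of the \emph{same} length $2mn+2$ has a lexicographically smaller position vector. So the unsatisfiable direction also requires proving that $\pi_0$ is lexicographically least among the length-$(2mn+2)$ equivalents. With the paper's gadget this holds because that derivation is in fact unique --- only $a_k$ can erase the $U_i^0$'s, only $a_{i-1}$ can erase $a_i$, and the chain can only start growing once $T_1^n$ is present, so the multiset of steps and their order (hence all positions $p_i$) are forced. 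Your appeal to ``rigidity'' is presumably meant to deliver exactly this, but the requirement must be made explicit and verified: what you need is uniqueness (or at least lex-minimality of $\pi_0$) among equal-length equivalent derivations, not merely forcedness of the \emph{number} of steps; without it, the reduction could output a $\pi_0$ that is non-minimal for reasons unrelated to $\Phi$ (say, a non-eager or non-leftmost ordering of its own steps), which would break the unsatisfiable direction.
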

\begin{proof}[Sketch]
We 
define $G'_\Phi$ 
by taking $G_\Phi$,
adding $k$ extra symbols
$a_1,\ldots,a_k$, and adding the following two sets of rules:\\
(1) all $a_{i-1}\ins a_{i}$ and $a_{i-1}\del a_{i}$ for $i=1,\ldots,k$
(with the convention that $a_0$ is $T^n_1$);
\\
(2) all $a_k\del U_i^0$ for $i=1,\ldots,m$.

Observe that $G'_\Phi$ is acyclic. It has a derivation $\pi:U_1^0\cdots
U_m^0g\stepgram^{2m+2k}T_1^n\cdots T_m^n g$ of the following form:
\begin{align*}
U_1^0\cdots U_m^0g
 \stepgram^m
U_1^0\cdots U_m^0 T_1^n\cdots T_m^n g
& \stepgram^k
U_1^0\cdots U_m^0 a_ka_{k-1}\cdots a_1 T_1^n\cdots T_m^n g
\\
&\stepgram^m
a_ka_{k-1}\cdots a_1 T_1^n\cdots T_m^n g
\stepgram^k
T_1^n\cdots T_m^n g.
\end{align*}
This derivation uses the extra symbols to bypass the normal behaviour of
$G_\Phi$. If $k$ is large enough, i.e., $k>m(n-1)$, $\pi$ is $\mu$-minimal
if, and only if, $\Phi$ is not satisfiable.
\qed
\end{proof}


\section{Anchored Leftist Transformers and Their Transitive Closure}
\label{sec-transitive}

When $b_1,b_2 \in B$ are two different working symbols, and $(A,B,C,P,g)$
is a LTr, we call $G = (A,B,C,b_1,b_2,P,g)$ an \emph{anchored} LTr, or
shorly an ALTr.
With an ALTr $G$ we associate
an \emph{anchored tranformation} $S_G \subseteq A^* \times C^*$ defined by
\[
       u\mathrel{S_G}v \:\equivdef\: b_1 u g \stepgram^*_G b_2 v g.
\]
Here the \emph{anchors} $b_1,b_2$ are used to control what happens at the
left-hand end of transformed words. Mostly, they ensure that the derivation
$b_1ug\stepgram^* b_2vg$ goes all the way to the left and erases $b_1$
rather than stopping earlier. One intuitive way of seeing $S_G$ is that it
is a variant of $R_G$ \emph{restricted to derivations that replace the
  anchors}.


A first difficulty for building the transitive closure of an anchored
transformation $S_G\subseteq A^*\times C^*$ is that the input and output
sets are disjoint (a requirement that allowed the developments of
Sections~\ref{sec-ltr} and~\ref{sec-simple-transformer}). To circumvent
this, we assume w.l.o.g.\  that $A$ and $C$ are two different copies of a
same set, equipped with a bijective renaming $\bar{h}:C^*\rightarrow A^*$.
Then, the closure $S_G.(\bar{h}.S_G)^*$ behaves like we would want $S_G^+$
to behave.

For the rest of this section, we assume $h$ is a bijection
between $C$ and $A$. W.l.o.g., we write $A$ and $C$ under the forms
$A=\{a_1,\ldots,a_n\}$ and $C=\{c_1,\ldots,c_n\}$ so that $h(c_i)=a_i$ for
all $i=1,\ldots,n$. Then $h$ is lifted as a (bijective) morphism
$\bar{h}:C^*\rightarrow A^*$ that we sometimes see as a relation between
words.

The exact statement we prove in this section is the following:
\begin{theorem}[Transitive Closure]
\label{theo-transitive}
Let $G:A\vdash C$ be an ALTr such that $S_G=S_G.\subword_C$. Then there
exists an ALTr $G^{(+)}:A\vdash C$ such that $S_{G^{(+)}} =
S_G.(\bar{h}.S_G)^*$.
\\
Furthermore, it is possible to build $G^{(+)}$ from $G$ using only
logarithmic space.
\end{theorem}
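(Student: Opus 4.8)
The plan is to build $G^{(+)}$ by welding together an unbounded number of copies of $G$ into a single grammar that reuses \emph{one} shared copy of the working alphabet $B\cup C$, rather than the $k$ renamed copies one would obtain by naively iterating the composition of Theorem~\ref{theo-composition}. Concretely, I would take $G^{(+)}$ to consist of all rules of $G$ together with a set of \emph{re-entry} rules obtained by renaming through $h$: for every deletion rule $d\del a_i$ of $P$ (with $a_i\in A$) I add the rule $d\del c_i$, where $c_i=h^{-1}(a_i)\in C$, so that the output letters produced by one application of $G$ can be consumed, via $\bar h$, as the input letters of the next application; and I add a constant amount of control so that the anchor can be recycled, i.e.\ so that reaching the intermediate anchor $b_2$ can relaunch a fresh $G$-phase whose input is the previous $C$-output read through $h$. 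The anchors of $G^{(+)}$ are kept equal to $b_1,b_2$, which is precisely what lets the anchors double as phase markers. This manipulates only renamed copies of the rules of $G$ plus $O(1)$ control symbols, so the logarithmic-space bound is immediate and I would dispatch it in one line at the end.

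For the inclusion $S_G.(\bar h.S_G)^*\subseteq S_{G^{(+)}}$ (soundness of the simulation) I would argue by induction on the number $k$ of iterations. Given witnessing $G$-derivations $b_1ug\stepgram^* b_2v_1g$, then $b_1\bar h(v_1)g\stepgram^* b_2 v_2 g$, and so on, I glue them inside $G^{(+)}$: each phase is a genuine $G$-derivation, hence available in $G^{(+)}$ since $G^{(+)}$ contains all rules of $G$, and it is adjoined in its surrounding context using closure of steps under adjunction; between consecutive phases the re-entry rules realise the renaming $\bar h$ and the anchor recycling. The only friction is at the interfaces, where the letter-for-letter correspondence demanded by $\bar h$ need not match the letters actually present; here I would invoke the standing hypothesis $S_G=S_G.\subword_C$ (together with the Closure Property, Prop.~\ref{prop-closure}) to absorb the subword slack, so that each phase can be chosen to expose exactly the output word required by the next.

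The substance of the proof is the reverse inclusion $S_{G^{(+)}}\subseteq S_G.(\bar h.S_G)^*$, i.e.\ showing that $G^{(+)}$ \emph{cannot cheat}. I would start from an arbitrary derivation $b_1ug\stepgram^* b_2vg$ of $G^{(+)}$ and, using Prop.~\ref{prop-greedy}, assume it greedy. The goal is to cut it into phases, each of which is a bona fide $G$-run, separated by the re-entry and anchor-recycling steps. The crux is that the single shared copy of $B\cup C$ is used across all phases, so one must rule out cross-phase interference: a working letter created for phase $i$ must not act on, nor be acted upon by, material belonging to a phase $i'\neq i$. I expect this to be the main obstacle, and I would tame it exactly as the compositional framework is designed to: the anchor discipline forces every phase to run ``all the way to the left'' and to terminate by the $b_1\mapsto b_2$ swap, so that phases are well separated and linearly ordered, never interleaved, while leftmostness and eagerness pin down the scheduling so that a $C$-letter serving as input to the next phase stays inert until it is deleted. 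Once the factorisation is established, each phase realises a step of $S_G$ up to a subword of its output, which the hypothesis $S_G=S_G.\subword_C$ turns into an exact $S_G$-step; chaining the phases through the $\bar h$-renamings yields $u\mathrel{(S_G.(\bar h.S_G)^k)}v$ for the number $k$ of recyclings, as required.
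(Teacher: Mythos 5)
Your construction differs fundamentally from the paper's, and the difference lies exactly where the real work is. You keep a \emph{single shared copy} of the working alphabet and of the anchors $b_1,b_2$, and you let consecutive phases communicate through re-entry rules $d\del c_i$ added alongside the original rules $d\del a_i$. The consequence is that letters belonging to different ``logical'' phases are literally the same symbols, so nothing in the grammar can tell them apart. In particular, a derivation of your $G^{(+)}$ can consume only part of the input $u$ in a first burst of $G$-activity, produce some $C$-letters, have fresh working letters delete those $C$-letters via re-entry rules, \emph{and} have those same fresh letters delete the remaining untouched $A$-letters of $u$ via the original rules $d\del a_i$: one ``phase'' then processes a mixture of untranslated input and translated output of the previous phase. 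Such a derivation need not factor as a chain of complete anchored $G$-runs — note that in $S_G.\bar{h}.S_G$ the first factor must consume \emph{all} of $u$, since an anchored run ends in $b_2vg$ with $v\in C^*$ — so the completeness inclusion $S_{G^{(+)}}\subseteq S_G.(\bar{h}.S_G)^*$ is exactly what is not established. Your appeal to greediness and ``anchor discipline'' cannot close this gap: Prop.~\ref{prop-greedy} only provides a scheduling normal form for a derivation with the same endpoints, and it cannot forbid the mixed derivations above; moreover, your anchor-recycling control (which you never specify) must allow a fresh $b_1$ to be inserted and to erase $b_2$ at \emph{some} point, and nothing prevents those rules from firing prematurely, in the middle of an unfinished phase.

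This phase-separation problem is precisely what the paper's proof is about, and it is solved by making phases \emph{syntactically distinguishable} rather than by scheduling arguments. The paper keeps the renaming as a separate LTr $\renaming$ (realizing $\approx.\subword.\bar{h}$), wraps both $G$ and $\renaming$ with fresh outer anchors $\B_1,\B_2$ and with primed copies $A',B',C'$ of the symbols (the grammars $F_G$ and $F_\renaming$), and glues the two wrapped grammars into a grammar $H$. Consecutive phases then use disjoint material (unprimed vs.\ primed, $\B_1$ vs.\ $\B_2$), and the bulk of the proof — the invariants $I_1$--$I_3$, the blocking lemmas, and the long case analysis of Lemma~\ref{lem-lac-lca} showing that every greedy anchored $H$-derivation stays inside $L_{AC}\cup L_{CA}$ — establishes exactly the alternation between $G$-phases and renamer-phases that you take for granted. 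Even then, further constructions are needed ($H'$, then composition with the checking transformers $T_1,T_2$ via Theorem~\ref{theo-composition}) to turn $H$ back into an ALTr. Your soundness direction and the logspace claim are fine in spirit, but without a mechanism making phases distinguishable, the hard direction of your proposal has no proof, and there is no reason to believe your $G^{(+)}$ realizes exactly $S_G.(\bar{h}.S_G)^*$ rather than a strictly larger relation.
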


Let $b_1,b_2\not\in A\cup C$. The ALTr $\renaming_{b_2,b_1} \egdef
(C,b_2,b_1,A,P_\renaming,g)$ with 
\[
P_\renaming
\:\egdef\:
\left\{
\left.
\begin{array}{c}
g\ins a_i, 
a_i\ins a_j, 
a_i\ins b_1
\\
a_i\del c_i,
b_1\del b_2
\end{array}
\right|
\text{for all $i,j=1,\ldots,n$}
\right\}
\]
is called a \emph{renamer (of $C$ to $A$)}, and often
shortly written $\renaming$.
Observe that $\renaming:C\vdash A$ is indeed an	 ALTr. It further satisfies
$
	     S_{\renaming} \: = \;\: \approx.\subword.\bar{h}.
$

We shall now glue an ALTr $G:A\vdash C$ with the renamer $\renaming:C\vdash
A$ into some larger LGr $H$. But before this can be done we need to put
some wrapping control on $G$ (and on $\renaming$) that will let us track
what comes from $G$ inside $H$'s derivations.

Formally, given an ALTr $G = (A,B,C,b_1,b_2,P,g)$ and two new anchor
symbols $\B_1,\B_2\not\in\Sigma_g$, we let $\Sigma_\B\egdef\{\B_1,\B_2\}$ and
define a new ALTr  $F_{G,\B_1,\B_2}$ (or shortly just $F_G$) for
``\emph{wrapping $G$ with $\B_1,\B_2$}'', and given by
$
  F_{G,\B_1,\B_2} \egdef
  (\overline{A},\overline{B},\overline{C},\B_1,\B_2,P',g)
$
where
\\
-- $\overline{A} \egdef A \cup A' \cup \{b_1,b_1'\}$, $A',b'_1$ being a copy of $A,b_1$,
\\
-- $\overline{B} \egdef \{\B_1,\B_2\} \cup B \setminus \{b_1\}$,
\\
-- $\overline{C} \egdef C \cup \{b_2\} \cup C' \cup B' \setminus \{b_1'\},$ $B'$ and $C'$ being copies of $B$ and $C$.
\\
Finally, let $D \egdef C \cup B$ and $D' \egdef C' \cup B'$.
(The copies are denoted by priming the original symbols, and a
primed set like $A'=\{a'~|~ a\in A\}$ is just the set of corresponding
primed symbols.)
The rules in $P'$ are derived from the rules of $P$ in the following way.
\ifthenelse{\boolean{short_version}}{
}{
(See Fig.~\ref{fig-type-F} in App.~\ref{app-transitive} for a schematic type.)
}
  \begin{description}
  \item[kept:]
    $P'$ retains all rules of $P$ that do not erase a letter in $A \cup \{b_1\}$,
  \item[replace:]
    $P'$ has a rule $d'\del a$ for each rule $d\del a$ in $P$ that erases a
    letter in $A \cup \{b_1\}$,
  \item[mirror:]
    $P'$ has a rule $d\ins d'$ for each $d\in D$,
  \item[clean:]
    $P'$ has all rules $d'\del e'$  and $\B_2 \del a'$ for  $d',e'\in D' \setminus \{b_1'\}$
    and	 $a'\in A' \cup \{b_1'\}$,
  \item[$b$-rules:]
    $P'$ has the rules $\B_2\del \B_1$ and all rules $d'\ins \B_2$ for $d'\in D' \setminus \{b_1'\}$.
  \end{description}

We now relate the derivations in $G$ and the derivations in
$F_G$. For this, assume $u \in (A + b_1)^*$ and $v \in (C +
b_2)^+$.
\ifthenelse{\boolean{short_version}}{
\begin{lemma}
}{
\begin{lemma}[See App.~\ref{app-lem-tr1}]
}
\label{lem-tr1}
1.\  If $u.g \stepgram^+_{G} v.g$ then for all words $\alpha \in (A' +
b_1')^*$ there exists a symbol $\beta \in C' \cup \{b_2'\}$ such that
$\B_1.\alpha.u.g \stepgram^+_{F_{G}} \B_1.\alpha.\beta.v.g
\stepgram^+_{F_{G}} \B_2.\beta.v.g$.
\\
2.\ Reciprocally,
 for all $\alpha \in (A' + b_1')^*$, for all $\beta \in (C' + b_2')^+$
  if $\B_1.\alpha.u.g \stepgram^+_{F_{G}} \B_2.\beta.v.g$ then
$u.g \stepgram^+_{G} v.g$.
\end{lemma}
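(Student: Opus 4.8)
\textbf{Proof plan for Lemma~\ref{lem-tr1}.}

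The plan is to analyze how the rules of $F_G$ partition into phases and to establish a tight simulation between $G$-derivations and the ``primed/unprimed'' machinery of $F_G$. The central observation is that the \textbf{mirror} rules $d\ins d'$ create a primed shadow $d'$ to the left of each working symbol $d\in D$, and that a primed letter can only be erased by another primed letter or by $\B_2$ (the \textbf{clean} and \textbf{$b$-rules}). So the primed alphabet $D'$ acts as a separate track that $F_G$ uses to record, and then discard, the work done on the $G$-track. The anchor $\B_1$ plays the role that $b_1$ played in $G$, and the transition $\B_1\to\B_2$ (via $\B_2\del\B_1$) marks the moment the derivation commits to finishing on the left, exactly mirroring the anchored semantics $b_1ug\stepgram^* b_2vg$ of $G$.

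For part~1 (\emph{soundness}, $G$ to $F_G$), I would take a greedy derivation witnessing $u.g\stepgram^+_G v.g$ (invoking Prop.~\ref{prop-greedy}) and lift it step-by-step. Each insertion rule of $G$ is \textbf{kept} in $F_G$, so it replays verbatim; each deletion rule $d\del a$ of $G$ that erases a letter in $A\cup\{b_1\}$ is replaced by $d'\del a$, so I first use a \textbf{mirror} rule to create the needed $d'$, then fire $d'\del a$. The key is to schedule the mirror insertions so that the primed copies accumulate in the correct left-to-right order, which is where the choice of $\beta\in C'\cup\{b_2'\}$ comes from: $\beta$ is the primed shadow of the leftmost output letter. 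After the full $G$-computation has been simulated, producing $\B_1.\alpha.\beta.v.g$, the second phase $\stepgram^+_{F_G}\B_2.\beta.v.g$ uses the \textbf{clean} rules to erase $\alpha$ and the primed residue sitting between $\beta$ and $v$, and finally $\B_2\del\B_1$ to flip the anchor. I must check that each of these cleanup deletions is licensed: the \textbf{clean} rules give $d'\del e'$ for all $d',e'\in D'\setminus\{b_1'\}$ and $\B_2\del a'$ for $a'\in A'\cup\{b_1'\}$, which is precisely what is needed to consume the $\alpha$-block and the primed garbage.

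For part~2 (\emph{completeness}, $F_G$ to $G$), I would again work with a greedy derivation of $\B_1.\alpha.u.g\stepgram^+_{F_G}\B_2.\beta.v.g$ and \emph{project} it onto the unprimed alphabet by deleting every step whose active or affected letter is primed, an anchor in $\Sigma_\B$, or lives in $\alpha$. The projection should yield a legal $G$-derivation $u.g\stepgram^+_G v.g$: \textbf{kept} rules project to themselves, \textbf{replace} rules $d'\del a$ project back to the original $d\del a$ of $P$ (here I use that the erased letter $a$ lies in $A\cup\{b_1\}$, on the unprimed track), and \textbf{mirror}, \textbf{clean}, \textbf{$b$-rules} steps project away entirely. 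The main obstacle—and the reason greediness and the invariant structure matter—is to verify that this projection produces \emph{consecutive, well-formed} $G$-steps: I must rule out the degenerate possibility that $F_G$ "cheats" by using a primed letter to act on an unprimed $G$-letter in a way that has no $G$-counterpart, and I must confirm that the surviving unprimed letters at the two endpoints are exactly $u.g$ and $v.g$. This follows from the structural fact that primed symbols in $D'$ only ever erase other primed symbols or are erased by $\B_2$ (never an unprimed output letter), so the unprimed track evolves autonomously under precisely the image of $P$; the anchor discipline $\B_1\to\B_2$ guarantees the projected derivation is genuinely nonempty and terminates on the left as required.
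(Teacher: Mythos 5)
Your high-level strategy (a two-way simulation between $G$ and $F_G$, working with greedy derivations) is the same as the paper's, but in both directions your plan breaks down at exactly the delicate point, namely the \emph{adjacency} requirement of leftist deletion rules. In part~1, the schedule ``first use a mirror rule to create the needed $d'$, then fire $d'\del a$'', with primed copies allowed to \emph{accumulate}, gets stuck at the second deletion of the $G$-derivation: after the first simulated deletion the spent primed letter stays in the word, sitting immediately to the right of the next $A$-letter $a$ to be erased, so when you later mirror the next deleter $d$ the word has the shape $\B_1.\alpha.v_1.a.\gamma.d'.d.v_2.g$ with old garbage $\gamma\neq\epsilon$ strictly between $a$ and the fresh $d'$, and $d'\del a$ cannot fire. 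The needed mechanism is the opposite of accumulation: before each replace step, the freshly mirrored $d'$ must itself consume the existing garbage via the clean rules $d'\del e'$, maintaining the invariant that the simulating word is always $\B_1.\alpha.v_1.\gamma.v_2.g$ with the primed garbage confined to the $A$/$D$ boundary (this is the $\mathit{clean}^{\size{\gamma}}$ phase in the paper's induction). This interleaved cleanup is also what makes the intermediate word $\B_1.\alpha.\beta.v.g$ of the statement---with a \emph{single} symbol $\beta$---reachable at all; under your deferred-cleanup schedule the derivation could never produce it.

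In part~2, a step-by-step projection is not sound for replace steps. When $d'\del a$ fires in $F_G$, the unprimed $d$ that created $d'$ by mirroring need no longer be adjacent to $a$ in the projected word: since the mirror step, $d$ may have inserted further letters between $d'$ and itself (insertions act to the immediate left of the active letter), or $d$ may even have been deleted by a kept rule. Hence the first unprimed letter to the right of $a$ at firing time may be some $e$ with no rule $e\del a$ in $P$, and the ``projected step'' is then not a legal $G$-step. The paper resolves this with a temporal reordering that your plan lacks: the $G$-counterpart of a replace step is performed \emph{earlier}, at the moment the mirror rule $d\ins d'$ fired---at that moment $d$ was the head of the unprimed $D$-part and $a$ was still present, so $d\del a$ is legal in the projection there---while all later $F_G$-steps involving $d'$ project to nothing. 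Your structural observation about which letters primed symbols may erase does not substitute for this reordering, so the claim that ``the unprimed track evolves autonomously under precisely the image of $P$'' is left unjustified.
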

Thus we can relate anchored derivations in $F_{G}$ with
anchored derivations in $G$ via:
\begin{corollary}
\label{lem-fg=g}
Let $u \in (A + b_1)^*$ and $v \in (C + b_2)^+$. Then $b_1.u.g
\stepgram_{G}^+ b_2.v.g$ if and only if there exists $\beta \in (C' \cup
\{b_2'\})$ such that $\B_1.\alpha.b_1.u.g \stepgram^+_{F_{G}}
\B_2.\beta.b_2.v.g$. In other words, $u \mathrel{S_G} v$ iff $\alpha.b_1.u
\mathrel{S_{F_{G}}} \beta.b_2.v$ for some $\beta \in (C' \cup \{b_2'\})$.
\end{corollary}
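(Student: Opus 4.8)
The plan is to read the corollary off Lemma~\ref{lem-tr1} directly, by instantiating the lemma's generic endpoints at $b_1.u$ and $b_2.v$. First I would verify that these instances meet the lemma's standing hypotheses: since $b_1 \in (A+b_1)$ and $u \in (A+b_1)^*$ we have $b_1.u \in (A+b_1)^*$, and since $b_2 \in (C+b_2)$ and $v \in (C+b_2)^+$ we have $b_2.v \in (C+b_2)^+$. Thus Lemma~\ref{lem-tr1} applies with ``$u$'' $:=b_1.u$ and ``$v$'' $:=b_2.v$, and no new combinatorial work on $F_G$ is needed.

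For the forward implication I would assume $b_1.u.g \stepgram_G^+ b_2.v.g$ and invoke Lemma~\ref{lem-tr1}(1). Its ``for all $\alpha \in (A'+b_1')^*$'' clause covers the fixed $\alpha$ of the corollary, yielding a $\beta \in C' \cup \{b_2'\}$ with $\B_1.\alpha.b_1.u.g \stepgram^+_{F_G} \B_1.\alpha.\beta.b_2.v.g \stepgram^+_{F_G} \B_2.\beta.b_2.v.g$; discarding the intermediate configuration gives exactly $\B_1.\alpha.b_1.u.g \stepgram^+_{F_G} \B_2.\beta.b_2.v.g$. For the converse I would take the hypothesised $\beta$ and its derivation and feed them to Lemma~\ref{lem-tr1}(2), observing that a single symbol $\beta \in C' \cup \{b_2'\}$ indeed belongs to $(C'+b_2')^+$, so the lemma returns $b_1.u.g \stepgram_G^+ b_2.v.g$.

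To recast the equivalence in the $S$-notation, I would note that the anchors are distinct ($b_1 \neq b_2$, $\B_1 \neq \B_2$), so in each case the source and target words differ in their leftmost symbol; hence the $\stepgram^*$ in the definitions $u \mathrel{S_G} v \equivdef b_1.u.g \stepgram^*_G b_2.v.g$ and $\alpha.b_1.u \mathrel{S_{F_G}} \beta.b_2.v \equivdef \B_1.\alpha.b_1.u.g \stepgram^*_{F_G} \B_2.\beta.b_2.v.g$ collapses onto $\stepgram^+$ between these endpoints. Unfolding these definitions turns the chain of reasoning above into the stated iff.

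The step that demands the most care is not any of the above --- they are mere instantiations --- but rather making sure the free parameter $\alpha$ is treated consistently: the universal $\alpha$ supplied by part~(1) and the $\alpha$ fixed in the hypothesis of part~(2) must both be identified with the single $\alpha$ of the corollary's statement. The genuine difficulty --- verifying that the wrapped grammar $F_G$ faithfully mirrors $G$ while shuffling primed and unprimed copies --- has already been discharged inside Lemma~\ref{lem-tr1}, which I am entitled to assume.
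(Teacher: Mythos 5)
Your proposal is correct and matches the paper's own (implicit) argument: the corollary is read off Lemma~\ref{lem-tr1} by instantiating its generic endpoints as $b_1.u$ and $b_2.v$, concatenating the two derivations in part~(1) for the forward direction, and noting that a single symbol $\beta$ lies in $(C'+b_2')^+$ for the converse. Your extra care about the $\stepgram^*$ versus $\stepgram^+$ collapse (via distinctness of the anchors) and the consistent treatment of $\alpha$ is exactly the routine bookkeeping the paper leaves unstated.
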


We may now glue the wrapped versions of $G$ and its associated $\renaming$.
Recall that $F_{G} =
(\overline{A},\overline{B},\overline{C},\B_1,\B_2,P',g)$. We denote the set
of new symbols with $\Sigma\egdef \overline{A} \cup \overline{B} \cup
\overline{C}$ and observe that $F_\renaming$ (short for
$F_{\renaming_{b_2,b_1},\B_2,\B_1}$), being some $(C \cup C' \cup
\{b_2,b'_2\},\Sigma_\B,\overline{A},\B_2,\B_1,P'_\renaming,g)$, does not
use more symbols.
Let $H\egdef (\Sigma,P_H,g)$ be the LGr such that and $P_H = P'
\cup P'_\renaming$. Essentially, $H$ is a union of the two wrapping ALTr's.
\ifthenelse{\boolean{short_version}}{
}{
(See Fig.~\ref{fig-type-H} in App.~\ref{ssec-app-H} for a schematic
description). 
}
Note that $H$ is \emph{not a LTr} since it does not respect
any distinction between input, intermediary, and output symbols.

\ifthenelse{\boolean{short_version}}{
\begin{lemma}
}{
\begin{lemma}[See App.~\ref{app-lem-H}]
}
\label{lem-H}
Let $\alpha, \beta \in A'^+$ and $u,v \in A^*$. If $\B_1.\alpha.u.g
\stepgram^*_{H} \B_1.\beta.v.g$ and $S_G = (\subword_A.S_G.\subword_C)$
then $u \mathrel{\subword_A.(S_G.\bar{h})^*} v$.
\end{lemma}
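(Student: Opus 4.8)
The plan is to analyze a derivation $\pi : \B_1.\alpha.u.g \stepgram^*_H \B_1.\beta.v.g$ in $H$ and decompose it according to which ``phase'' each step belongs to, i.e.\ whether it comes from $P'$ (the wrapped copy of $G$) or from $P'_\renaming$ (the wrapped copy of the renamer). The key observation is that $H$ is the gluing of the two wrapped ALTr's $F_G$ and $F_\renaming$, and the wrapping anchors $\B_1, \B_2$ were designed precisely so that we can track which of the two sub-grammars is ``in control'' of the leftmost region of the word. Since the derivation starts and ends with $\B_1$ at the left, and since $\B_2$ can only appear transiently (it is erased by the $b$-rule $\B_2 \del \B_1$ of $F_G$ or its analogue in $F_\renaming$), the derivation must alternate: it passes through intermediate configurations of the form $\B_1.\cdots$ and $\B_2.\cdots$, with each ``$\B_1 \to \B_2 \to \B_1$'' excursion corresponding to one application of either $F_G$ or $F_\renaming$.

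First I would invoke Prop.~\ref{prop-greedy} to replace $\pi$ by an equivalent greedy derivation, so that I may reason about where letters are erased and inserted using locality. Then I would cut $\pi$ at the successive occurrences where the leftmost anchor symbol is rewritten (i.e.\ where $\B_1$ is replaced or where $\B_2$ appears and is later erased), obtaining a factorization $\B_1.\alpha.u.g \stepgram^* \cdots \stepgram^* \B_1.\beta.v.g$ through a sequence of ``anchored segments.'' Using Corollary~\ref{lem-fg=g} (for the $F_G$ segments) and the analogous correspondence for $F_\renaming$ built from $S_\renaming = \approx.\subword.\bar{h}$, each segment that uses $F_G$ contributes a step of $S_G$ (up to the wrapping $\beta.\B_2$ bookkeeping), and each segment using $F_\renaming$ contributes a step of $\approx.\subword.\bar{h}$, i.e.\ effectively an application of $\bar h$ modulo subword/stuttering. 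Chaining these segments and absorbing the $\approx$ and $\subword_A, \subword_C$ slack using the hypothesis $S_G = \subword_A.S_G.\subword_C$ and the Closure Property~\ref{prop-closure}, I would assemble a relation of the form $\subword_A.(S_G.\bar h)^* \cdot (\text{tail})$ between $u$ and $v$, and show the tail collapses because the derivation terminates in the $\B_1$-anchored configuration $\B_1.\beta.v.g$ with $v \in A^*$.

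The main obstacle I expect is the \emph{bookkeeping of the primed copies} and the interleaving control. The wrapping construction introduces primed alphabets $A', C', B'$ and the ``mirror,'' ``clean,'' and ``$b$-rules'' phases, whose whole purpose is to guarantee that a segment of $H$'s derivation cannot ``leak'' — i.e.\ cannot begin an $F_G$-phase, partially execute it, switch to $F_\renaming$, and come back — in a way that would not correspond to a clean composition $S_G . \bar h . S_G . \bar h \cdots$. Making this airtight requires showing that once $\B_2$ is exposed at the left, the only way to restore $\B_1$ is to complete the current wrapped transformation and erase all the primed garbage via the clean-rules, so that no two phases can overlap. This is where greediness and the disjointness of the kept/replace/mirror/clean/$b$-rule sets must be used carefully: I would argue that the set of reachable configurations between two consecutive $\B_1$-anchored snapshots is forced into the shape handled by Corollary~\ref{lem-fg=g}, and that the hypothesis $S_G = \subword_A.S_G.\subword_C$ is exactly what lets the leftover subword-slack between consecutive phases be absorbed into the $S_G$ factors rather than accumulating. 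Once the phase structure is pinned down, concatenating the per-phase relations and simplifying the regular expression $\subword_A.(S_G.\bar h.\,\subword\,)^*$ down to $\subword_A.(S_G.\bar h)^*$ is routine.
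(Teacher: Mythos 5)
Your overall strategy coincides with the paper's: normalize to a greedy derivation, cut it into phases at anchored configurations, apply Corollary~\ref{lem-fg=g} to the $F_G$-phases and the renamer's specification $S_\renaming=\;\approx.\subword.\bar{h}$ to the $F_\renaming$-phases, then chain the per-phase relations and absorb the slack via the hypothesis $S_G=\subword_A.S_G.\subword_C$. That final assembly is exactly how the paper concludes (there one gets $u \subword u'_1 \mathrel{(S_G.\bar{h})^n} u'_{n+1} \subword v$).

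However, there is a genuine gap, and it sits precisely where you located the ``main obstacle'': you assert that, between two consecutive $\B_1$-anchored snapshots, greediness forces the configurations into the shape handled by Corollary~\ref{lem-fg=g}, but this is not true of an arbitrary greedy derivation, and establishing a usable substitute for it is the entire substance of the paper's proof. The paper does \emph{not} show that every greedy derivation of $H$ has the clean phase structure; it shows (Lemma~\ref{lem-lac-lca}) that among all equivalent greedy derivations one can choose one whose every intermediate word lies in two explicit regular languages $L_{AC}\cup L_{CA}$, and the proof works by taking an equivalent greedy derivation with the \emph{fewest} steps outside $L_{AC}\cup L_{CA}$ and performing surgery on it (e.g., commuting an insertion of $\B_2$ to an earlier point) to contradict minimality --- so greedy derivations genuinely can leave the good shape, and an exchange argument, not greediness alone, is what repairs them. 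That argument in turn rests on auxiliary machinery your sketch never supplies: the no-nested-$\B$ lemma (Lemma~\ref{nobb}), the invariants $I_1,I_2,I_3$ (Lemma~\ref{invaraa'}), and the blocking lemma (Lemma~\ref{lem-blocks}), which are what actually exclude the ``leaking''/interleaving behaviours you worry about. Note also that cutting ``where the leftmost anchor is rewritten'' does not by itself produce cut points satisfying the hypotheses of Corollary~\ref{lem-fg=g}: that corollary needs the \emph{whole} word at the cut to have the form $\B_1.\alpha.b_1.u.g$ with $\alpha\in(A'+b'_1)^*$ and no residual primed garbage from the previous phase; extracting such cut points is Lemma~\ref{lem-exist-ui}, again a nontrivial step. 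In short, your plan is the right one, but its decisive step is asserted rather than proved, and in the form you state it (greediness alone forces the shape) it would fail.
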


We now extend $H$ to turn it into an ALTr
$H':\dot{A}\vdash A\cup A'$, introducing again new copies, denoted
$\dot{a}$, \ldots, of previously used symbols and writing $\dot{u} = \dot{a_1} \dot{a_2} \ldots \dot{a_n}$ for the dotted copy of
some $u = a_1 a_2 \ldots a_n$.
Formally,
\[
H'\egdef (\dot{A},B \cup B' \cup C
  \cup C' \cup \{\B_1,\B_2,\dot{\B_1},\dot{\B_2}\},A \cup A',
  \dot{\B_1},\dot{\B_2},P'',g)
\]
where $P''$ extends $P_H$ by the rules $\dot{\B_2} \del
\dot{\B_1}$, $\B_1 \ins \dot{\B_2}$, and  all $a \del \dot{a}$ for $a \in
A$.

The anchored transformation $S_{H'}$ computed by $H'$ is captured by the following:
\ifthenelse{\boolean{short_version}}{
\begin{lemma}
}{
\begin{lemma}[See App.~\ref{app-lem-H'}]
}
\label{lem-H'}
Let $u,v \in A^*$. Then $\dot{u} \mathrel{S_{H'}} \B_1.\beta.v$
for some $\beta \in A'^+$ iff
$u \mathrel{[ \bar{h}.\subword_A.(S_G.\bar{h})^* ]} v$.
\end{lemma}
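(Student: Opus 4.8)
The plan is to prove the two implications by peeling off, from any $H'$-derivation, the single initialization phase contributed by the three new rule families ($\dot{\B_2}\del\dot{\B_1}$, $\B_1\ins\dot{\B_2}$, and $a\del\dot a$), leaving a derivation that uses only $P_H$ and is therefore governed by Lemma~\ref{lem-H}. Intuitively, $H'$ first renames the dotted input $\dot u$ into an undotted $A$-word equipped with a fresh $\B_1$-anchor --- this is what produces the leading $\bar h$ factor --- and thereafter behaves exactly like $H$, whose net effect is $\subword_A.(S_G.\bar h)^*$. Composing the two phases, and accounting for the output anchor $\dot{\B_2}$ ending up to the left of a word $\B_1.\beta.v$, yields $\bar h.\subword_A.(S_G.\bar h)^*$ as required. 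Throughout I use that the closure hypotheses needed by Lemma~\ref{lem-H} (namely $S_G=\subword_A.S_G.\subword_C$) follow from the ambient assumption $S_G=S_G.\subword_C$ together with Prop.~\ref{prop-closure}.

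For the ``if'' direction I would start from a witness of $u\mathrel{[\bar h.\subword_A.(S_G.\bar h)^*]}v$, split as a renaming $\bar h$ followed by a tail realizable by $H$. The tail is produced inside $H$ directly from the building blocks already available: each factor $S_G.\bar h$ comes from one $F_G$-pass (Cor.~\ref{lem-fg=g}) followed by one $F_\renaming$-pass (using $S_\renaming=\approx.\subword.\bar h$), alternating the anchors $\B_1$ and $\B_2$; chaining these gives an $H$-derivation $\B_1.\alpha.u_0.g\stepgram^*_H\B_1.\beta.v.g$. It then remains to prepend the initialization in $H'$: from $\dot{\B_1}.\dot u.g$ one installs the undotted word and the $\B_1$-anchor via the $a\del\dot a$ rules (together with the $\B_2$-to-$\B_1$ machinery of $F_\renaming$), then fires $\B_1\ins\dot{\B_2}$ to create the output anchor and $\dot{\B_2}\del\dot{\B_1}$ to erase the input anchor. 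Concatenating the two segments gives $\dot{\B_1}.\dot u.g\stepgram^*_{H'}\dot{\B_2}.\B_1.\beta.v.g$, i.e.\ $\dot u\mathrel{S_{H'}}\B_1.\beta.v$.

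For the ``only if'' direction I would, by Prop.~\ref{prop-greedy}, fix a greedy derivation $\pi\colon\dot{\B_1}.\dot u.g\stepgram^*_{H'}\dot{\B_2}.\B_1.\beta.v.g$ and determine its shape. The new rules can only fire in a stereotyped fashion: $\dot{\B_2}$ is created exactly once (by $\B_1\ins\dot{\B_2}$, which presupposes that $\B_1$ is already present), $\dot{\B_1}$ is erased exactly once (by $\dot{\B_2}\del\dot{\B_1}$), and each $\dot a$ is erased exactly once (by some $a\del\dot a$, which needs the matching undotted $a$ immediately to its right). Since $\dot{\B_1}$ remains the leftmost letter until it is deleted, leftmostness and eagerness pin down the order and let me factor $\pi$ into an initialization segment followed by a segment using only $P_H$. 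Lemma~\ref{lem-H} applied to the latter supplies the $\subword_A.(S_G.\bar h)^*$ factor, while reading off the renaming performed by the initialization segment supplies the leading $\bar h$.

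The hard part will be the structural analysis in the ``only if'' direction: showing that a greedy $H'$-derivation genuinely factors into a clean initialization phase and a pure $H$-phase rather than interleaving the dotted rules with $P_H$-steps. The two delicate points are (i) ruling out that $\dot{\B_2}$ is produced before the $H$-computation has installed a genuine $\B_1$-anchor, and (ii) controlling the interaction of the $a\del\dot a$ renaming with the mirror and clean rules of $F_G$ and $F_\renaming$. I expect both to follow from eagerness and leftmostness together with the $A^*D^*$ shape invariant (Fact~\ref{LTr-basic-invariant}) on reachable words, but this is where the bookkeeping is heaviest.
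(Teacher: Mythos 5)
Your architecture coincides with the paper's: both directions factor an $H'$-derivation into a dotted-to-undotted renaming phase followed by an $H$-phase handled by Lemma~\ref{lem-H}, and your ``if'' direction (explicit construction, then concatenation under adjunction) is unproblematic. The genuine gap sits exactly where you flag the ``heaviest bookkeeping'': the tools you propose for the ``only if'' direction would not suffice. Greediness plus the $A^*D^*$ shape invariant do not force the dotted prefix to be consumed before the $H$-computation starts. The invariant only says the surviving letters of $\dot{u}$ remain a prefix; it does not prevent a greedy derivation from erasing part of $\dot{u}$, inserting $D$-letters, and coming back later to erase more of $\dot{u}$ using fresh $A$-letters, which the renamer keeps supplying throughout the $H$-phase (the rules $g\ins a_i$ and $a_i\ins a_j$ are ``kept'' in $F_{\renaming}$). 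What the paper actually uses here is the blocking machinery, i.e.\ the analogue for $\dot{A}$ of Lemma~\ref{lem-blocks}: every letter of $\dot{A}$ \emph{blocks} $D$, so if a $D$-letter were inserted while some $\dot{a}$ survives, that $\dot{a}$ (and $\dot{\B_1}$ to its left) could never be erased afterwards, contradicting the final word $\dot{\B_2}.\B_1.\beta.v.g$. This blocking property is a bespoke structural lemma about all continuations of a configuration; it is not a consequence of leftmostness and eagerness, which only constrain the given derivation locally.

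Second, your delicate point (i) is phrased as ``ruling out that $\dot{\B_2}$ is produced before the $H$-computation has installed a genuine $\B_1$-anchor'', but this behaviour cannot be ruled out. The paper concedes that a greedy derivation may reach a word in $\dot{\B_1}.A'^*.A^+.D.g$, i.e.\ insert $D$-letters while $\dot{\B_1}$ is still in place, and handles this by \emph{surgery}: it exhibits an equivalent greedy derivation in which $\B_1$ inserts $\dot{\B_2}$ and $\dot{\B_1}$ is deleted before the first $D$-insertion. So the clean factorization holds only up to replacing the derivation by an equivalent one, not for every greedy derivation, and a proof attempting to show that every greedy derivation already factors will fail. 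A smaller slip: your parenthetical claim that $S_G=\subword_A.S_G.\subword_C$ follows from Prop.~\ref{prop-closure} has the direction wrong. That proposition yields $\invsubword_A$-closure on the input side (extra input letters are harmless), whereas Lemma~\ref{lem-H} needs $\subword_A.S_G\subseteq S_G$ (removing input letters is harmless); the latter is true, but via a Fact~\ref{fact-cl1}-type argument (its $n=0$ instance, removing an inert, eventually-deleted letter), not via Prop.~\ref{prop-closure}.
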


We are nearly done. There only remains to compose $H'$ with a LTr
that checks for the presence of $\B_1.\beta$ (and then erases it).
For this last step, we shall use further dotted copies $\ddot{\Sigma}$,
$\dddot{\Sigma}$, \ldots, of the previously used symbols.

Formally, we define two new ALTr's $T_1$ and $T_2$: see
\ifthenelse{\boolean{short_version}}{
full version.
}{
App.~\ref{app-t1-t2}. 
}
The rules of $T_1$ ensure that it satisfies
\begin{gather}
\tag{$T_1$-spec}
\label{eq-T1}
u \mathrel{S_{T_1}} v \text{ iff } u = \B_1.\alpha.b_1.u'
\text{ and }
\ddot{u} \mathrel{I_{T_1^\INS}} v.
\end{gather}
Regarding $T_2$,
let  $u \in (\ddot{A} \cup \ddot{A'} \cup \{\ddot{b_1},\ddot{b_1'}\})^*$ and
$v \in \dddot{A}^*$. If	 $\ddot{u'}$ is the largest subword of $u$ such
that $u' \in A^*$, then
\begin{gather}
\tag{$T_2$-spec}
\label{eq-T2}
u \mathrel{S_{T_2}} v \text{ iff } \dddot{u'} \subword_{\dddot{A}}v.
\end{gather}
Combining \eqref{eq-T1} and \eqref{eq-T2} we obtain
\begin{gather*}
u \mathrel{S_{T_1}.S_{T_2}} v \text{ iff } u = \B_1.\alpha.b_1.u' \text{ and } \dddot{u'}
\subword_{\dddot{A}} v.
\end{gather*}
Composing these LTr's as $H'.T_1.T_2$ yields a resulting $G^{(+)}:\dot{A}\vdash
\dddot{A}$, which,  up to a bijective change of symbols, is what we need to
build to prove Theorem~\ref{theo-transitive}.


\section{Conclusion}
\label{sec-concl}

In this paper we introduce a notion of transformations computed by leftist
grammars and define constructions showing how these transformations are
effectively closed under sequential composition and transitive
closure.

These operations require that some ``typing'' assumptions are satisfied
(e.g., we only know how to build a transitive closure on leftist
transformers that are ``\emph{anchored}'') which may be seen as a lack of
elegance and generality of the theory, but which we see as an indication that
leftist grammars are very hard to control and reason about.

Anyway, the restrictive assumptions are not a problem for our purposes: we
intend to rely on the compositional foundations for building, in a modular
way, complex leftist grammars that are able to simulate lossy channel
systems. Here the modularity is
essential not so much for \emph{building} complex grammars. Rather, it is
essential for proving their correctness by a divide-and-conquer approach,
in the way we proved the correctness of our encoding of $\3SAT$ instances
in Section~\ref{sec-np-hard}.

As another direction for future work, we would like to mention that the
proof that accessibility is decidable for LGr's (see~\cite{motwani2000})
has to be fixed and completed.

\paragraph{Acknowledgements.}
Sylvain Schmitz helped tremendously with his numerous remarks and suggestions.



\ifthenelse{\boolean{short_version}}{
\bibliographystyle{plain}
}{
\bibliographystyle{alpha}
}

\bibliography{lcs}

\ifthenelse{\boolean{short_version}}{
}{

\newpage

\appendix

\pagestyle{myheadings}
\pagenumbering{roman}
\markboth{{\bf Technical appendix, not for the proceedings version.}}{{\bf Technical appendix, not for the proceedings version.}}



\section{Greedy derivations are sufficient}
\label{app-prop-greedy}

The lexicographic ordering between derivations is denoted $\leqlex$.

\begin{lemma}
\label{lem-has-equiv-mini}
Every derivation has a $\mu$-minimal equivalent.
\end{lemma}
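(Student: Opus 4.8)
The claim is that every derivation $\pi$ has a $\mu$-minimal equivalent, where two derivations are equivalent iff they share the same first and last words, and $\mu(\pi) = \tuple{n,p_1,\ldots,p_n}$ is ordered lexicographically.

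The plan is to argue by well-foundedness of the measure ordering $\leq_\mu$ restricted to the equivalence class of $\pi$. First I would fix the extremities $u_0 \stepgram^* u_n$ and consider the set $E_\pi$ of all derivations equivalent to $\pi$. The quasi-order $\leq_\mu$ compares derivations first by their length $n$, and then lexicographically by the tuple of active positions $\tuple{p_1,\ldots,p_n}$. The key observation is that this ordering, restricted to derivations of \emph{bounded length}, is essentially an ordering on finite tuples of bounded natural numbers and hence well-founded; the only way to escape well-foundedness would be through arbitrarily long derivations, so the real content is in controlling the length.

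\begin{enumerate}
\item First I would show that within the equivalence class $E_\pi$ there is a minimal achievable length $n_{\min}$. This requires observing that a derivation of minimal length cannot contain a ``detour'' that inserts a letter and later deletes it uselessly; more precisely, among all equivalent derivations the set of attainable lengths is a nonempty subset of $\Nat$ (it contains $n$), hence has a least element $n_{\min}$ by well-ordering of $\Nat$.
\item Next, among the derivations in $E_\pi$ of length exactly $n_{\min}$, I would compare their position-tuples $\tuple{p_1,\ldots,p_{n_{\min}}}$. Since all positions $p_i$ are bounded (each lies between $1$ and the length of the corresponding word, which is bounded because the length $n_{\min}$ is fixed and each step changes length by at most one), these tuples range over a finite set. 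A finite nonempty set of tuples under the lexicographic order has a least element.
\item That least element is the desired $\mu$-minimal $\pi'$: any equivalent derivation either has length $> n_{\min}$, hence strictly greater measure on the first coordinate, or has length $n_{\min}$ and a position-tuple that is $\geq_{\lexico}$ the chosen one by minimality. Either way $\mu(\pi') \leq_\mu \mu(\sigma)$ for all $\sigma \in E_\pi$.
\end{enumerate}

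The main obstacle I expect is step~1: justifying that the set of lengths attainable by derivations equivalent to $\pi$ is bounded below in a usable way, and more subtly that once we fix the minimal length the positions are genuinely bounded so that step~2 ranges over a finite set. The boundedness of positions follows because a derivation of fixed length $n_{\min}$ starting from $u_0$ produces words whose lengths never exceed $\size{u_0} + n_{\min}$, so every $p_i$ is at most $\size{u_0}+n_{\min}$; this makes the candidate set finite and the lexicographic minimum well-defined. The cleanest phrasing is probably to note that $\leq_\mu$ is a \emph{well-quasi-order} (indeed a well-order up to the finiteness just described) on $E_\pi$, so a minimal element exists; I would present the argument as extracting the length-minimal then position-lexicographically-minimal representative, which avoids any appeal to heavier order-theoretic machinery.
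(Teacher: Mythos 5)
Your proof is correct and is essentially the paper's own argument: the paper simply observes that $\leq_\mu$ is well-founded because measures live in $\bigcup_{n\in\Nat}\{n\}\times\Nat^n$ — i.e., lengths are compared first, and then position-tuples of a fixed length are compared lexicographically — which is exactly your two-stage extraction of a length-minimal, then lexicographically minimal, representative. The only (harmless) difference is that you justify the second stage by finiteness of the candidate set (every $p_i \leq \size{u_0}+n_{\min}$), whereas the paper appeals to well-foundedness of the lexicographic order on $\Nat^n$ for fixed $n$; both are valid.
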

\begin{proof}
Direct from observing that $\leq_\mu$ is a well-founded quasi-ordering over
derivations. Indeed, while $\leqlex$ is not well-founded over the set of
tuples of natural numbers, it is well-founded over the set
$\bigcup_{n\in\Nat}\{n\}\times \Nat^n$ to which measures of derivations
belong.
\qed
\end{proof}

The proof of Proposition~\ref{prop-greedy} is concluded with the following:
\begin{lemma}
\label{lem-mini-is-greedy}
A $\mu$-minimal derivation is greedy.
\end{lemma}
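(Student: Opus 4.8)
The plan is to prove the contrapositive: I show that any derivation $\pi$ that fails to be leftmost, pure, or eager admits an equivalent derivation $\pi'$ with $\pi' \ltlex \pi$ in the measure order, so it cannot be $\mu$-minimal. The three defects are treated independently, each producing a strict decrease in one component of $\mu(\pi)=\tuple{n,p_1,\ldots,p_n}$.

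\emph{Non-purity.} If $\pi$ is not pure it inserts at some step a useless letter $a$ that stays inert and is eventually deleted. Since $a$ is inert it is never an active letter, so its only occurrences as the subject of a step are the insertion that creates it and the deletion that removes it; moreover $a$ occurs neither in the first nor the last word of $\pi$. Excising these two steps and following the remaining letters across the gap — legitimate exactly because $a$ never interacts as an active letter — yields an equivalent derivation with $n-2$ steps. As $n$ is the leading component of the measure, this strictly decreases $\mu$.

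\emph{Non-leftmostness and non-eagerness.} Both are handled by commuting steps, the decrease now occurring in the position components $p_1,\ldots,p_n$ at fixed $n$. The underlying tool is a swapping lemma: two consecutive steps whose active letters act on disjoint regions can be permuted, and because letters only ever act to their left, a step performed at position $p$ alters the word only at positions $\le p$ (deleting the letter at $p-1$, or inserting just left of $p$), every letter further left being untouched up to the index shift caused by the step. For the non-leftmost case I take the first offending step $i$: its active letter sits at some position while a letter strictly to its left acts at a later step $j>i$, chosen minimal. The steps $i,\ldots,j-1$ then act no further left than step $i$, so iterating the swapping lemma slides step $j$ back into slot $i$; this leaves $n$ and $p_1,\ldots,p_{i-1}$ unchanged and replaces $p_i$ by the strictly smaller position of the relocated step, so $\mu$ drops lexicographically. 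For the non-eager case the witness is a factor $w_1 b a w_2$ at step $i$ in which the rule $a\del b$ is available, $b$ is inert and eventually deleted, yet step $i$ is an insertion; here I relocate the deletion of $b$ so that it is performed at the earliest possible moment, again by commutation, deleting the later step at which $b$ was originally removed.

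\emph{Main obstacle.} The delicate point throughout the commutation arguments, and especially for eagerness, is the position bookkeeping: insertions and deletions reindex everything to their right, so I must verify that the permuted steps remain enabled, that the extremities of $\pi$ are preserved, and — the crux — that the first altered component of $\mu$ genuinely \emph{decreases} while all earlier components stay fixed. For eagerness in particular one must check that relocating the deletion of $b$ does lower an early active-position rather than merely reshuffling the step sequence; making precise the notion of ``acting on disjoint regions'' and confirming that leftward-only action is exactly what licenses each swap is where the real work lies, the purity reduction being comparatively routine.
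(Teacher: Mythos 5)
Your overall decomposition is the paper's: argue the contrapositive separately for purity, leftmostness and eagerness, with purity handled by excising the useless letter's insertion and deletion (length drops by $2$) and the other two by measure decreases at fixed length via a swapping lemma. The purity case coincides with the paper's. The leftmostness case is sound in substance, although your intermediate claim that steps $i,\ldots,j-1$ ``act no further left than step $i$'' is false in general: such a step may \emph{delete} the last letter of $w_1$ (deletion of a $w_1$-letter does not make it active), after which its active letter occupies a position $<p_i$. What actually licenses each swap is that none of these steps has its \emph{active} letter inside $w_1$ while the relocated step does; moreover a single swap of the adjacent pair $(j-1,j)$ already produces a strictly smaller measure, so the full relocation to slot $i$ is unnecessary work.

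The genuine gap is the eagerness case: it cannot be handled by commutation, because the derivation witnessing non-minimality is in general \emph{not a reordering of the steps of $\pi$}. Nothing in the definition of non-eagerness says that $b$ is deleted at step $j$ by $a$ via the rule $a\del b$; it may be deleted by a different letter, via a different rule, and that letter may not even exist at step $i$. Concretely, let $P$ contain $a\del b$, $a\ins c$ and $c\del b$, and let $\pi$ be the two-step derivation $w_1baw_2g \stepgram w_1bcaw_2g \stepgram w_1caw_2g$ (first $a\ins c$, then $c\del b$). This $\pi$ is pure and leftmost but not eager at its first step ($b$ sits left of $a$, $a\del b\in P$, $b$ is inert and later deleted, and the step is an insertion). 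No adjacent swap is available: both steps occur at position $|w_1|+2$ and they interfere, since $c$ does not exist before the first step; so your relocation procedure terminates with $\pi$ unchanged and certifies nothing. Yet $\pi$ is not $\mu$-minimal: the equivalent derivation $w_1baw_2g \stepgram w_1aw_2g \stepgram w_1caw_2g$ (first $a\del b$ at position $|w_1|+2$, then $a\ins c$ at position $|w_1|+1$) has lexicographically smaller measure, and it uses the rule $a\del b$, which occurs nowhere in $\pi$. This is exactly the paper's repair: \emph{insert a new deletion step} using the rule $a\del b$ guaranteed by the violation (enabled at step $i$ precisely because $b$ is adjacent to $a$ there), keep the intermediate steps (still valid because $b$ is inert, by the same reasoning as in the purity case), and \emph{drop} the original deletion step $j$, so that the length stays $n$ while an early position decreases. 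Your sketch also omits the case split this requires: when $a$ itself is the active letter of step $i$ (so $r_i$ is some $a\ins c$, after which $c$ separates $b$ from $a$), the new deletion must be placed \emph{before} step $i$ at the same position $p_i$, and the decrease then occurs at component $i+1$, relying on $p_{i+1}\geq p_i$, which follows from $\mu$-minimality and the swapping lemma because $r_i$ is an insertion. A proposal that only permutes existing steps cannot produce these derivations, so the eagerness case is not established.
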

\begin{proof}
By combining
lemmas~\ref{lem-mini-is-leftmost}, \ref{lem-mini-is-eager}
and~\ref{lem-mini-is-pure} below.
\qed
\end{proof}

\begin{lemma}
\label{lemaux1}
Assume $\pi=u_0\stepgram^{r_1,p_1}u_1\stepgram^{r_2,p_2}u_2$ is a
two-step derivation. If $p_2<p_1-1$, or if $p_2=p_1-1$ and $r_1$ is an
insertion rule, then $\pi$ is not $\mu$-minimal.
\end{lemma}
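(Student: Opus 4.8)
The plan is to show that $\pi$ is not $\mu$-minimal by exhibiting an \emph{equivalent} two-step derivation $\pi'$ with $\mu(\pi')\ltlex\mu(\pi)$. Since $\mu(\pi)=\tuple{2,p_1,p_2}$, it suffices to build an equivalent $\pi'$ whose first step fires at some position $p_2'<p_1$: the lexicographic comparison on the second coordinate then immediately gives $\mu(\pi')\ltlex\mu(\pi)$. The natural candidate for $\pi'$ is the derivation obtained by \emph{commuting} the two steps of $\pi$.

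First I would locate the two redexes inside $u_0$. Writing $r_1$ as $\alpha_1\rightarrow\beta_1$, we have $u_0=w_0\,\alpha_1\,w_1$ with $\size{w_0\alpha_1}=p_1$, so $\alpha_1$ occupies position $p_1$ when $r_1$ is an insertion (here $\size{\alpha_1}=1$) and positions $p_1-1,p_1$ when $r_1$ is a deletion (here $\size{\alpha_1}=2$). The letters touched by the second step are its active letter $u_1[p_2]$ together with $u_1[p_2-1]$ in case $r_2$ is a deletion, so they all sit at positions $\le p_2$. Under our hypotheses we have $p_2\le p_1-1$ when $r_1$ is an insertion and $p_2\le p_1-2$ when $r_1$ is a deletion, so in either case these positions lie strictly to the left of the leftmost letter of $\alpha_1$. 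Furthermore $p_2$ is strictly smaller than the first position altered by the first step---namely $p_1$ for an insertion and $p_1-1$ for a deletion---so the length-$p_2$ prefixes of $u_0$ and $u_1$ coincide; in particular $\alpha_2$ already occurs in $u_0$, ending at position $p_2$.

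Next I would perform the commutation. By the above, $u_0=w\,\alpha_2\,w'\,\alpha_1\,w_1$ for suitable $w,w'$ with $\size{w\alpha_2}=p_2$, the two factors $\alpha_1,\alpha_2$ being disjoint. Rewriting the disjoint redexes in either order yields the same word $w\,\beta_2\,w'\,\beta_1\,w_1=u_2$, so
\[
   \pi' \egdef \bigl( u_0 \stepgram^{r_2,p_2} w\,\beta_2\,w'\,\alpha_1\,w_1 \stepgram^{r_1,p_1'} u_2 \bigr)
\]
is a two-step derivation equivalent to $\pi$ whose first step fires at position $p_2$. Since $p_2<p_1$ holds under both hypotheses, $\mu(\pi')=\tuple{2,p_2,p_1'}\ltlex\tuple{2,p_1,p_2}=\mu(\pi)$, and hence $\pi$ is not $\mu$-minimal.

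The delicate point---and the reason the hypotheses are exactly these---is the boundary case $p_2=p_1-1$, which is permitted only when $r_1$ is an insertion. Then $\alpha_1$ is the single letter $u_0[p_1]$, and the active letter $u_1[p_1-1]$ of the second step is the letter immediately to its left, untouched by the insertion, so the two redexes are genuinely disjoint. Were $r_1$ instead a deletion $cd\rightarrow d$ with $p_2=p_1-1$, the active letter $u_1[p_1-1]$ of the second step would be precisely the surviving letter $d$ of that deletion; the two steps would then share a letter and could not be commuted in general. Checking that the disjointness argument goes through in each insertion/deletion combination, and that it genuinely fails in this excluded case, is the main thing I would need to get right.
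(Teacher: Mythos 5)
Your proof is correct and takes exactly the same route as the paper, whose entire argument is the one-liner that the hypothesis ensures the two steps do not interfere, so they can be swapped into an equivalent derivation of strictly smaller measure. You have simply filled in the details that the paper leaves implicit (disjointness of the redexes, coincidence of the prefixes, and the lexicographic comparison), including a correct explanation of why the excluded case $p_2=p_1-1$ with $r_1$ a deletion genuinely breaks commutation.
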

\begin{proof}
The hypothesis ensures that the two steps do not interfere. Thus they can
be swapped, yielding an equivalent derivation
$\pi'=u_0\stepgram^{r_2}\stepgram^{r_1}u_2$. Clearly, $\pi'<_\mu\pi$.
\qed
\end{proof}

In the rest of this section, we consider a generic transformation $\pi$ of the
form $u_0\stepgram^{r_1,p_1} u_1 \stepgram^{r_2,p_2} u_2
\cdots\stepgram^{r_n,p_n} u_n$ in the context of some LGr
$G=(\Sigma,P,g)$.

\begin{lemma}
\label{lem-mini-is-leftmost}
A $\mu$-minimal derivation is leftmost.
\end{lemma}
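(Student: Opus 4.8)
The plan is to argue by contradiction, converting a failure of leftmost-ness into a single pair of consecutive steps that violates $\mu$-minimality through the exchange of Lemma~\ref{lemaux1}. First I would record the basic consequence of that lemma. Since $\leq_\mu$ is compatible with concatenation, every two-step subderivation of a $\mu$-minimal derivation is again $\mu$-minimal, so no two consecutive steps can meet the hypothesis of Lemma~\ref{lemaux1}. Writing $p_k$ for the active position at step $k$, this gives the invariant
\[ (\star)\qquad p_{k+1}\geq p_k-1 \ \text{ for all } k, \ \text{ and } \ p_{k+1}=p_k-1 \ \text{ only if } r_k \text{ is a deletion.} \]
Everything after this is bookkeeping on positions under $(\star)$.

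Next, I would assume $\pi$ is $\mu$-minimal but not leftmost. Non-leftmost-ness supplies a step $i$ whose active letter $a$ has, strictly to its left, a letter $b$ that is active at some later step; I fix such $b,i$ and take $j>i$ to be the \emph{first} step at which $b$ is active, which guarantees that $b$ is inert throughout steps $i,\ldots,j-1$. I then track $b$ by its index $\beta_k$ in the word entering step $k$ and compare it with $p_k$. On $[i,j-1]$ the letter $b$ is inert, so $\beta_k\neq p_k$ (hence $b$ is strictly left, $\beta_k<p_k$, or strictly right, $\beta_k>p_k$), while $\beta_j=p_j$. Thus $b$ starts strictly left of the active letter and ends up \emph{being} the active letter.

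The key step is to locate the crossing. Let $m$ be the largest index in $[i,j-1]$ with $\beta_m<p_m$; it exists since $k=i$ qualifies. By maximality, at step $m+1$ the letter $b$ is "right" or "active", i.e.\ $\beta_{m+1}\geq p_{m+1}$. Now I compute how step $m$ moves $b$: since $b$ lies strictly left of the active letter, an insertion (before $p_m$) or a deletion (of position $p_m-1$) leaves $b$'s index untouched, so $\beta_{m+1}=\beta_m\leq p_m-1$. Combining with $\beta_{m+1}\geq p_{m+1}\geq p_m-1$ (the last inequality is $(\star)$) squeezes all inequalities to equalities: $\beta_m=p_m-1$, $p_{m+1}=p_m-1$, and $\beta_{m+1}=p_{m+1}$. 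The last identity means $b$ is active at step $m+1$, which forces $m+1=j$ (as $b$ is inert before $j$), and $\beta_m=p_m-1$ means $b$ is the immediate left neighbour of the active letter at step $m$. If $r_m$ is a deletion it deletes exactly that neighbour, i.e.\ $b$, contradicting that $b$ is active later; if $r_m$ is an insertion then $p_{m+1}=p_m-1$ with $r_m$ an insertion directly contradicts $(\star)$. Either way we contradict $\mu$-minimality, so $\pi$ is leftmost.

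I expect the only delicate points to be the choice of the "last left step" $m$ and the position arithmetic. The whole difficulty is that a non-leftmost violation is a priori \emph{long-range} (spanning steps $i$ to $j$), and the real work is to pin down the single transition $m\to m+1$ at which $b$ crosses the active frontier, since it is precisely there that $(\star)$ and the survival of $b$ become jointly impossible.
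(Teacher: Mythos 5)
Your proof is correct and takes essentially the same route as the paper's: both arguments reduce a failure of leftmostness to the exchange Lemma~\ref{lemaux1}, applied at the consecutive pair of steps where a letter lying strictly left of the active position first becomes active. The paper verifies the hypothesis of Lemma~\ref{lemaux1} directly at that pair (using that the inert prefix $w_1$ remains a prefix, so the active position must drop by at least one, or by two after a deletion), while you reach the same pair contrapositively via the invariant $(\star)$ and a crossing argument for a single tracked letter $b$ --- a reorganization of the same idea rather than a different proof.
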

\begin{proof}
Assume $\pi$ is not leftmost. Then it contains a step
$u_{i-1}=w_1 a w_2 \stepgram w'_1 a w_2=u_i$ where $w_1$ is not inert in
the rest of $\pi$. Let $j>i$ be the first step after $i$ where a letter of
$w_1$ is active: the subderivation $u_{j-2}\stepgram^{r_{j-1},p_{j-1}}
u_{j-1} \stepgram^{r_j,p_j}$ has $p_j<p_{j-1}$, and even $p_j<p_{j-1}-1$ if
$r_{j-1}$ is a deletion rule. Lemma~\ref{lemaux1} applies and entails that
this subderivation, hence also $\pi$, is not $\mu$-minimal.
\qed
\end{proof}

\begin{lemma}
\label{lem-mini-is-eager}
A $\mu$-minimal derivation is eager.
\end{lemma}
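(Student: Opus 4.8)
The plan is to argue by contraposition: assume $\pi$ is not eager and produce an equivalent derivation of strictly smaller measure, contradicting $\mu$-minimality. By the definition of eagerness, non-eagerness of $\pi$ means there is an index $i$ such that $u_{i-1}$ has the form $w_1 b a w_2$, where $b$ is inert in the rest of $\pi$ and is eventually deleted (necessarily by $a$ via the rule $a\del b$, since $b$ is inert and can only be erased by the letter immediately to its right which must be $a$), yet the step $r_i$ is \emph{not} a deletion rule---so $r_i$ is an insertion. I would first locate the step $j>i$ at which $b$ is finally deleted; between steps $i$ and $j$, the letter $b$ stays inert and $a$ remains immediately to its right of $b$ (nothing can be inserted between them, as insertions happen to the left of the active letter and $b$ is inert, while $a$ is the one eventually acting on $b$).

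The key move is to show the deletion $a\del b$ can be pulled backward to occur immediately, i.e.\ right after $u_{i-1}$, in place of (or before) the insertion step $r_i$. First I would verify that performing the deletion $w_1 b a w_2 \stepgram w_1 a w_2$ at this earlier point is legal: the rule $a\del b$ is in $P$, and $b,a$ are adjacent in $u_{i-1}$ with $a$ active. Then I would check that all the intermediate steps $r_i,\ldots,r_{j-1}$ can still be replayed after this early deletion, yielding a derivation equivalent to $\pi$. This is where the earlier swapping machinery is the natural tool: because $b$ is inert throughout $u_{i-1}\stepgram^* u_{j-1}$, it plays no active role in any of those steps, so deleting it early does not disable any of them; each of the intermediate steps acts on letters other than $b$ and commutes with $b$'s removal. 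Concretely I would either invoke Lemma~\ref{lemaux1} repeatedly to commute the deletion step leftward past the insertion $r_i$ (and past any intervening non-deletion steps), or directly exhibit the reordered derivation and compare measures.

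For the measure comparison, note that moving a deletion earlier decreases the measure lexicographically: either the total length $n$ is unchanged but the deletion now occurs at an earlier position in the sequence with a smaller (or appropriately ordered) position index, so the tuple $\mu(\pi)$ drops lexicographically; in fact, since pulling a deletion before an insertion at the same left region yields a strictly smaller position-component at the relevant coordinate (the deletion at position $p$ replacing an insertion, exactly as in Lemma~\ref{lemaux1}), we get $\pi'<_\mu\pi$. I expect the main obstacle to be the bookkeeping that justifies replaying the intermediate steps after the early deletion---specifically, confirming that the positions $p_i,\ldots,p_{j-1}$ shift consistently and that no step depended on the presence of $b$. Since $b$ is inert (never active) and is to the \emph{left} of the letters that do act, its removal shifts every position to its right down by one uniformly, which preserves the applicability and relative order of all remaining steps. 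Once this is established, equivalence of $\pi'$ and $\pi$ and the inequality $\pi'<_\mu\pi$ follow, contradicting $\mu$-minimality and completing the proof.

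\qed
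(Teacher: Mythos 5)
Your proposal has the same skeleton as the paper's proof: perform the deletion $a\del b$ immediately, replay the intermediate steps (legitimate, since $b$ is inert in them), skip the step $j$ where $b$ was deleted, and conclude that the resulting equivalent derivation is $<_\mu$-smaller, contradicting $\mu$-minimality. But two of your supporting claims are false, and they sit exactly where the paper has to do real work. The parenthetical claim that $b$ is necessarily deleted by $a$, because ``nothing can be inserted between them,'' is wrong: an insertion fired by $a$ itself (a rule $a\ins c$) places $c$ immediately to the left of $a$, i.e.\ between $b$ and $a$. This is not a corner case---the eagerness-violating step $r_i$ may be precisely such an insertion by $a$ (the definition only forbids $r_i$ to be a deletion), and afterwards $b$ is no longer adjacent to $a$ and may eventually be erased by some other letter. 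This situation is one of the two main cases of the paper's proof.

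More seriously, the measure comparison does not go through as you state it: whether the early deletion lowers $\mu$ depends on where the active letter of $r_i$ lies, and the paper's proof is exactly this case analysis. (i) If it lies to the right of $a$, the early deletion has position $<p_i$ and $\mu$ drops at coordinate $i$; this is the only case your argument covers. (ii) If the active letter is $a$ itself, the early deletion occupies the \emph{same} position as the displaced insertion, so coordinate $i$ is unchanged; the strict drop only appears at coordinate $i+1$, where one compares $p_i-1$ against $p_{i+1}$ and needs $p_{i+1}\geq p_i$---a fact the paper obtains from leftmost-ness of $\mu$-minimal derivations (Lemma~\ref{lem-mini-is-leftmost}), which you never invoke. (iii) If the active letter lies inside $w_1$, strictly to the left of $b$, the early deletion would occur at a position \emph{larger} than $p_i$, and the swap \emph{increases} the measure; the paper must explicitly exclude this case (via the asserted inertness of $w_1b$), it cannot be waved away. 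Your fallback route---commuting the deletion of $b$ leftward step by step via Lemma~\ref{lemaux1}---breaks on the same rock: intermediate steps between $i$ and $j$ may act strictly to the left of $b$, and then the position hypotheses of Lemma~\ref{lemaux1} fail (the exchange goes the wrong way lexicographically). Without the three-way case distinction and without appealing to leftmost-ness, the assertion that the modified derivation is $<_\mu$-smaller is simply not true.
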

\begin{proof}
Assume $\pi$ is not eager. Let $u_{i-1}\stepgram^{r_i,p_i}u_i$ be the first
step that violates eagerness: then $u_{i-1}$ is some $w_1 b a w_2$, $w_1b$
will remain inert in the rest of $\pi$, and $b$ will eventually be deleted
at some step $j>i$, but this is not done right now even though $P$ contains
$a\del b$.

We now consider several cases for step $i$. If the active letter occurs to
the right of $a$, then one obtains a new derivation $\pi'$ by
deleting $b$ (using $a\del b$) right now, continuing like $\pi$, and
skipping step $j$ since $b$ has already been deleted. This produces an
equivalent derivation, with
$\mu(\pi')=\tuple{n,p_1,\ldots,p_{i-1},l,\ldots}$ where
$l=\size{w_1b}<p_i$. Hence $\pi$ is not $\mu$-minimal. If $a$ is the
active letter, the step must be an insertion $a\ins c$ and $p_{i+1}\geq
p_i$: we obtain, as in the previous case, an equivalent $\pi'$ with
$\mu(\pi')=\tuple{n,p_1,\ldots,p_i,p_i-1,\ldots}\ltlex\mu(\pi)$.
Finally, the active letter cannot be to the left of $a$ since $w_1b$
remains inert.
\qed
\end{proof}

\begin{lemma}
\label{lem-mini-is-pure}
A $\mu$-minimal derivation is pure.
\end{lemma}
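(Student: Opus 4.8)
The plan is to prove the contrapositive: if $\pi$ is not pure, then it is not $\mu$-minimal. The strategy rests on the observation already recorded in Section~\ref{ssec-greedy}, namely that a non-pure derivation inserts at some step an \emph{inert useless letter} $a$ that is never active and is eventually deleted. I would fix such an occurrence of $a$, say inserted at step $i$ (so that $u_i$ is $u_{i-1}$ with this extra occurrence of $a$) and deleted at step $j>i$ (so that $u_j$ is $u_{j-1}$ with this occurrence removed). Between these two steps the occurrence of $a$ is present in each of $u_i,\ldots,u_{j-1}$ and, being inert, is never the active letter. The key idea is that both the insertion at step $i$ and the deletion at step $j$ can be elided together, producing an equivalent derivation with two fewer steps; since the number of steps is the first component of the measure, this immediately yields a strictly smaller measure.

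Concretely, I would build $\pi'$ by deleting steps $i$ and $j$ and erasing the tracked occurrence of $a$ from the intermediate words. Writing $u'_k$ for $u_k$ when $k<i$ or $k\geq j$, and for $u_k$ with the tracked $a$ removed when $i\leq k\leq j-1$, one checks that $u'_{i-1}=u_{i-1}=u'_i$ and $u'_{j-1}=u_j=u'_j$, so the derivation collapses exactly at the insertion and deletion steps. For every intermediate index $i<k\leq j-1$, the step $u_{k-1}\stepgram^{r_k}u_k$ induces a step $u'_{k-1}\stepgram^{r_k}u'_k$ using the same rule (with its position possibly shifted down by one when $a$ lies to the left of the active letter). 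Chaining these yields a derivation $\pi'$ from $u_0$ to $u_n$ of length $n-2$, hence equivalent to $\pi$.

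The main point to verify — and the only place where inertness is genuinely used — is that each induced intermediate step is legitimate. Since $a$ is inert, it is never the active letter, so removing it never removes the rewriting agent of a step. Since the tracked occurrence of $a$ survives until step $j$, no step with index in $\{i+1,\ldots,j-1\}$ deletes it; in a leftist grammar a deletion removes the letter immediately to the left of the active letter, so the letter consumed at such a step is never the removed occurrence of $a$, and an insertion is a purely local operation at its active letter that is unaffected by erasing $a$ elsewhere. Hence every $u'_{k-1}\stepgram^{r_k}u'_k$ is a valid leftist step and $\pi'\equiv\pi$. Finally, $\mu(\pi')=\tuple{n-2,\ldots}\ltlex\tuple{n,\ldots}=\mu(\pi)$ because the lexicographic comparison is decided by the strictly smaller first component, so $\pi$ is not $\mu$-minimal, as required.
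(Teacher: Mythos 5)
Your proof is correct and takes essentially the same approach as the paper: the paper's own (very terse) proof likewise removes the insertion and the eventual deletion of an inert useless letter to obtain an equivalent derivation that is two steps shorter, hence strictly $\mu$-smaller. Your write-up merely makes explicit the routine verification---left implicit in the paper---that the intermediate steps remain valid leftist steps after the tracked occurrence is erased.
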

\begin{proof}
Assume $\pi$ is not pure. Then it inserts at some step a useless letter $a$
that stays inert and is eventually deleted. By not inserting $a$ and not
deleting it later, one obtains an equivalent but shorter derivation.
\qed
\end{proof}



\section{Proof of the Closure Property (Prop.~\protect\ref{prop-closure})}
\label{app-prop-closure}

Let $G=(\Sigma,P,g)$ be some arbitrary LGr.  The following two observations
are easy.
\begin{fact}
\label{fact-cl1}
Assume $u a u'g\stepgram_G^* vg$ is a derivation where the letter $a$ is inert and
eventually erased. Then $u a^n u'g \stepgram_G vg$ for all $n\in\Nat$.
\end{fact}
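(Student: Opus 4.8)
The plan is to replay the given derivation $\pi : uau'g \stepgram_G^* vg$ almost verbatim after duplicating the tracked letter $a$, relying only on the fact that steps are closed under adjunction. First I would isolate the single step of $\pi$ at which $a$ is erased: since $a$ is eventually deleted, such a step exists, and at that moment $a$ has an immediate right neighbour $d$ and the step applies a deletion rule $d \del a$. This splits $\pi$ as $\pi_1 . (\text{delete } a) . \pi_2$, where $\pi_1 : uau'g \stepgram_G^* w_1 a d w_2$ leaves $a$ inert and untouched throughout, the middle step rewrites $w_1 a d w_2$ into $w_1 d w_2$, and $\pi_2 : w_1 d w_2 \stepgram_G^* vg$.

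The core observation concerns $\pi_1$. Every word occurring in $\pi_1$ has the form $L_i\, a\, R_i$, with the tracked $a$ splitting it into a left part $L_i$ and a right part $R_i$. Because $a$ is inert (never active) and is not deleted anywhere in $\pi_1$, each step of $\pi_1$ is confined to one side of $a$: a step whose active letter lies in $L_i$ acts entirely within $L_i$, and one whose active letter lies in $R_i$ acts entirely within $R_i$; no step can straddle $a$, since the only way to cross would be for the immediate right neighbour of $a$ to erase $a$, which does not happen in $\pi_1$. Consequently, by closure of steps under adjunction, the very same sequence of steps applies with $a$ replaced by $a^n$: a left-confined step $L_{i-1} \stepgram L_i$ is reproduced as $L_{i-1}\, a^n\, R_{i-1} \stepgram L_i\, a^n\, R_{i-1}$, and symmetrically for right-confined steps. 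Replaying $\pi_1$ on $u a^n u' g$ therefore reaches $w_1\, a^n\, d\, w_2$, with the same $w_1$, $d$, $w_2$ as in $\pi$, since the left and right regions evolve independently of the size of the $a$-block.

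Now I would erase the whole block $a^n$ with $n$ successive applications of the rule $d \del a$ already present in $G$: the letter $d$ deletes its left neighbours one at a time, $w_1\, a^n\, d\, w_2 \stepgram w_1\, a^{n-1}\, d\, w_2 \stepgram \cdots \stepgram w_1\, d\, w_2$. This recovers exactly the word $w_1 d w_2$ reached by $\pi$ just after deleting $a$, so I can append $\pi_2$ unchanged and arrive at $vg$. Concatenating the three pieces yields $u a^n u' g \stepgram_G^* vg$, as required (the case $n = 0$ is covered as well, the block being empty and the middle phase vacuous).

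The only point demanding care — and the heart of the argument — is the confinement claim for $\pi_1$: that no step before the erasure of $a$ can act across the position of $a$. This rests precisely on $a$ being inert together with leftist rules acting only to the left, so that the left and right regions do not interact until $d$ reaches across to delete $a$. Once this is granted, everything else is a routine appeal to adjunction and to the reuse of the single rule $d \del a$.
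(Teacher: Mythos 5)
Your proof is correct. The paper itself offers no argument for this fact---it is stated in the appendix as one of two observations declared ``easy''---so there is nothing to diverge from; your argument (split the derivation at the unique step erasing $a$, observe that no step can straddle an inert letter since every step involves only the active letter and the position immediately to its left, replay the prefix around the block $a^n$ by adjunction, then erase the block with $n$ applications of the rule $d\del a$, and append the suffix) is precisely the routine formalization the paper implicitly relies on, including the degenerate case $n=0$.
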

\begin{fact}
\label{fact-cl2}
Assume $a$ does not occur in $u$.  Then $ug \stepgram_G^* v a v' g$ iff
 $ug \stepgram_G^* v aa v' g$.
\end{fact}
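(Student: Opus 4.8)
My plan is to prove the two directions separately, in each case tracing the distinguished occurrence(s) of $a$ back to the insertion that creates them: since $a\notin u$, every $a$ appearing in a word derived from $ug$ was inserted by its right neighbour through some rule $d\ins a$.

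For the forward direction ($vav'g$ reachable $\Rightarrow vaav'g$ reachable) I would locate the right neighbour $d$ that inserts the distinguished $a$ via $d\ins a$ and replay the derivation, firing this insertion twice and keeping the second copy inert, placed immediately to the right of the original $a$. The point is that a leftist rule only inspects the active letter and its immediate left neighbour, so this extra inert copy is invisible: the original $a$ and everything to its left behave exactly as before, and the only letter that can ever touch the extra copy is the one to its right, whose only relevant move in the original run would have been to delete the original $a$ --- which never happens, since that $a$ survives. Thus the run reaches $vaav'g$.

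For the converse I would first use Prop.~\ref{prop-greedy} to assume $\pi$ is greedy, hence pure, and name the two adjacent output letters $a_1$ (left) and $a_2$ (right). The crux is a ``no bubble'' lemma: in a pure derivation no letter ever sits strictly between $a_1$ and $a_2$. Indeed, any such letter is trapped between the two survivors (it can cross neither, as both $a_1$ and $a_2$ are never erased), is itself eventually erased, and can only insert or erase further trapped letters; by well-foundedness of the insert-or-erase relation such a self-contained family cannot be useful, contradicting purity. This forces the two occurrences to be adjacent from the instant the second appears, leaving two cases. If $a_2$ is born after $a_1$, it appears immediately to $a_1$'s right and --- again by no-bubble --- can make no leftward move, so it is inert and I would simply delete it from the run, exactly reversing the forward construction. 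If $a_1$ is born after $a_2$, then $a_1$ must have been inserted by $a_2$ itself (rule $a\ins a$) directly to its left, and I would merge $a_1$ into $a_2$: omit the insertion of $a_1$ and reassign each of $a_1$'s leftward actions to $a_2$. Either way the run now produces $vav'g$.

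The forward direction is routine; the whole weight lies in the converse, and specifically in the no-bubble lemma, which is what pins the two copies down tightly enough to discard an inert copy or to merge an active one into its creator. The delicate bookkeeping will be in the merge case: I must check that reassigning $a_1$'s actions to $a_2$ commutes with the interleaved actions of all other letters. This works precisely because every action is leftward, so with $a_1$ removed $a_2$ borders exactly the same left context on which $a_1$ acted and fires the same rules in the same order, while the region to the right of $a_2$ evolves independently of which copy plays the active role on the left, and no step other than $a_1$'s own ever has $a_1$ as its active or its deleted letter.
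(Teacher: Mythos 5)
Your forward direction is sound and is the expected argument: replay the run, fire the insertion $d\ins a$ a second time, and keep the extra copy, sitting immediately to the right of the surviving $a$, inert; since that surviving $a$ is never deleted in the original run, no replayed step ever needs to act across the inert copy.

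The converse, however, rests on a false ``no-bubble'' lemma, so there is a genuine gap. A letter lying strictly between $a_1$ and $a_2$ can perfectly well be useful, because its useful action may occur \emph{before} the second copy is born --- in particular it may itself be the inserter of one of the two surviving copies. Concretely, take the rules $x\ins b$, $b\ins a$, $x\ins a$, $a\del b$, start from $u=x$ (so $a\notin u$), and consider $xg \stepgram bxg \stepgram abxg \stepgram abaxg \stepgram aaxg$: here $x$ inserts $b$, then $b$ inserts the left copy $a_1$, then $x$ inserts the right copy $a_2$, and finally $a_2$ deletes $b$. This derivation is leftmost, eager and pure ($b$ is useful precisely because it inserts $a_1$, which belongs to the final word), yet $b$ sits strictly between $a_1$ and $a_2$. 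Your argument breaks exactly at ``can only insert or erase further trapped letters'': that is true only of actions performed \emph{while} trapped, and the well-foundedness argument says nothing about actions performed before the trap closes, so purity is not contradicted. The case analysis built on the lemma then collapses: in this example $a_2$ is born second, is not adjacent to $a_1$ at birth, and is not inert (it must delete $b$), so neither of your two repairs applies --- here one must instead discard $a_1$, which happens to be inert. Nor can one always discard an inert copy: with rules $y\ins a$, $a\del c$, $z\ins a$, $a\del y$, the greedy derivation $cyzg \stepgram cayzg \stepgram ayzg \stepgram ayazg \stepgram aazg$ has \emph{both} copies active ($a_1$ deletes $c$, $a_2$ deletes $y$), and the witness $cyzg \stepgram cyazg \stepgram cazg \stepgram azg$ must let a single $a$ perform both deletions in the opposite order ($y$ before $c$). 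So the right-to-left implication genuinely requires a merge-and-reorder surgery --- reassigning to one occurrence both the left copy's actions and the clearing of the zone between the copies, using commutation of spatially separated steps --- and not merely the removal of an inert duplicate. (For comparison, the paper records this statement as an unproved ``easy observation''; but as written, your proof of this direction does not go through.)
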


Now, in the case where $G$ is an LTr $(A,D,P,g)$, we deduce
$\invsubword_A.\approx.R_G\subseteq R_G$ from Fact~\ref{fact-cl1}, and
$R_G.\approx \: \subseteq R_G$ from Fact~\ref{fact-cl2}. This entails $R_G
\: = \: \invsubword_A.\approx.R_G.\approx$.
\qed


\section{Proof of the Composition Lemma (Lemma~\protect\ref{lem-composition})}
\label{app-lem-composition}

There only is to prove that $R_{G_1.G_2}\subseteq R_{G_1}.R_{G_2}$. For this we
consider a greedy
derivation $\pi = (ug \stepgram_{G_1.G_2}^* vg)$
with $u\in A_1^*$ and $v\in C_2^*$, and consider two cases:
\\
1.\  If $\pi$ never uses a rule from $G_2$, then no symbols from $C_2$ are
inserted and necessarily $v=\epsilon$. We obtain $u R_{G_1}.R_{G_2}v$ by
observing that $u R_{G_1} \epsilon$ (as witnessed by $\pi$) and that
$\epsilon R_{G_2}\epsilon$ (true of all leftist transformations).
\\
2.\  Otherwise, we isolate the first ${G_2}$ step
and write $\pi$ under the form
\[
\obracew{ug\stepgram_{G_1}^*wg}{\pi_1}
	\stepgram_{G_2}
\obracew{w'g\stepgram_{G_1.G_2}^*ug}{\pi_2}.
\]
Necessarily, $w\in A_1^*D_1^*$ (Fact~\ref{LTr-basic-invariant}) and, since
symbols from $A_1\cup D_1$
are inactive in $G_2$,
the first $G_2$ step is an
insertion by $g$, i.e., some $wg\stepgram weg=w'g$ with $e\in D_2=B_2\cup
C_2$. Since $\pi$ is greedy, $w$ is inert in $\pi_2$.

Now, every word along $\pi_2$ is some $xyg$ with $x$ an inert prefix of $w$
and $y\in D_2^* = (B_2\cup C_2)^*$. This claims holds at the first step
(since $e\in D_2$) and is proved by induction for the next steps. Assume
that the $k$-th step is some $xyg\stepgram zg$: since $x$ is inert, the
active letter is in $y$, hence in $D_2$ (by ind.\  hyp.) and the step is a
$G_2$ step. If the step is a deletion step, of the last letter in $x$ or of
some letter in $y$, $zg$ satisfies the claim. If the step is an insertion
step, the claim is satisfied again since $G_2$ can only insert letters from
$D_2$ and to the right of $x$.

Finally, $\pi$ must have the form
$ug\stepgram_{G_1}^*wg\stepgram_{G_2}^*vg$. Since symbols from $A_1\cup
B_1$ cannot be erased by $G_2$ rules, then necessarily $w\in C_1^*$. Hence
$u R_{G_1}w$ and $w R_{G_2} v$, proving $u R_{G_1}.R_{G_2}v$.
\qed


\section{Proof of Lemma~\protect\ref{lemma-nabla}}
\label{app-lemma-nabla}

The inclusion $\nabla\!_G.I_{G^\INS} \subseteq R_G$ is clear in view of
$\nabla\!_G\subseteq R_G$ and since $u \mathrel{I_{G^\INS}} v$ implies
$u,v\in C^*$ and $ug\stepgram_G^* vg$.

 $I_{G^\INS} = R_{G^\INS}\subseteq R_G\:\cap\: (C^*\times C^*)$
and 

For the other inclusion, $R_G \subseteq \nabla\!_G.I_{G^\INS}$, we consider a
greedy derivation 
\[
u g = w_0 g \stepgram_G w_1 g \stepgram_G \cdots \stepgram_G w_l g = v g.
\]
Every $w_i$ is some $u_iv_i$ with $u_i\in A^*$ and $v_i\in C^*$
(Fact~\ref{LTr-basic-invariant}) and, since $A$ is inert in $G$, $u_i$ is a
prefix of $u$ (so that we can write $u$ under the form $u_i.u'_i$). Let $k$
be the first index s.t.\ $u_k=\epsilon$, so that $w_i\in C^*$ for all
$i=k,\ldots,l$ and $w_k \mathrel{I_{G^\INS}} v$. There remains to show $u
\mathrel{\nabla\!_G} w_k$, i.e., $u'_k\mathrel{\nabla\!_G} v_k$.

For this we show more generally that $u'_i\mathrel{\nabla\!_G}v_i$ for all
$i=0,\ldots,k$. We proceed by induction on $i$. The base case clearly holds
since $u'_0=v_0=\epsilon$. For the induction step, we assume
$u'_i\mathrel{\nabla\!_G}v_i$ and a witness
$h_i:\{1,\ldots,\size{u'_i}\}\rightarrow\{1,\ldots,\size{v_i}\}$ for some
$i<k$. Consider the step $u_i v_i g\stepgram^r u_{i+1}v_{i+1}g$. There are
two cases.\\
(1) If $r=c\ins c'$ is an insertion rule, then the insertion must take
place in front of $v_i$ otherwise the derivation is not leftmost (the first
letter in $v_i$ cannot be inactive since it cannot be deleted and
$u_i\not=\epsilon$ must be deleted). Hence $v_{i+1}=c'v_i$,
$u'_{i+1}=u'_i$, and  a witness $h'$ for
$u'_{i+1}\mathrel{\nabla\!_G}v_{i+1}$ is obtained from $h$ with $h'(i)\egdef
h(i)+1$.
\\
(2) If $r=c\del a$ is a deletion rule, then $u_i=u_{i+1}a$, i.e.,
$u'_{i+1}=au'_i$, and $u'_{i+1}\mathrel{\nabla\!_G}v_{i+1}(=v_i)$ is
witnessed by $h'$ defined as $h'(1)\egdef 1$ and $h'(i+1)\egdef h(i)$.
\qed


\section{Proofs for Section~\protect\ref{sec-transitive}}
\label{app-transitive}

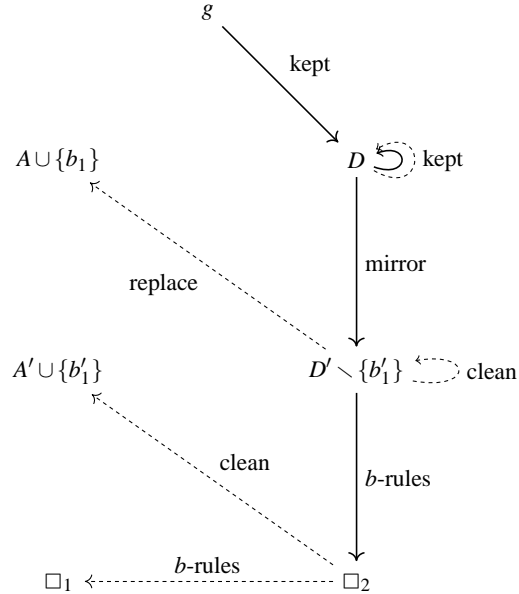
\begin{figure}[htbp]
\centering
{\setlength{\unitlength}{0.60mm}

\begin{tikzpicture}[-arcs,shorten >=1pt,auto,node distance=2.8cm,
                    semithick]

  \node (g) {$g$};
  \node (D) [below right of=g] {$D$};
  \node (D') [below of=D] {$D' \setminus \{b'_1\}$};
  \node (A) [below left of=g] {$A \cup \{b_1\}$};
  \node (A') [below of=A] {$A' \cup \{b_1'\}$};
  \node (B2) [below of=D'] {$\B_2$};
  \node (B1) [below of=A'] {$\B_1$};
  
  \path (D) edge node {mirror} (D')
        (D') edge node {$b$-rules} (B2)
        (g) edge node {kept} (D)
        (D) edge [in=20,out=340,loop] (D);

  \path[-arcs,dash pattern=on 1.5pt off 1.5pt,line width=0.4pt]
        (D') edge  node {replace} (A)
        (B2) edge node[swap] {$b$-rules} (B1)
        (B2) edge node [swap] {clean} (A')
        (D) edge [in=30,out=330,loop] node [swap] {kept} (D)
        (D') edge [in=10,out=350,loop] node [swap] {clean} (D');

\end{tikzpicture}
}
\caption{A schematic type for $F_{G,\B_1,\B_2}$.}
\label{fig-type-F}
\end{figure}

\subsection{Proof of Lemma~\protect\ref{lem-tr1}}
\label{app-lem-tr1}

\textbf{1.} We first prove, by induction on the length of a derivation
$ug\stepgram^+_G v_1 v_2 g$, with $v_1\in (A+b_1)^*$ and $v_2\in
(D+b_2)^*$, that it can be mimicked as
$\B_1.\alpha.u\stepgram^+_{F_{G}}\B_1.\alpha.v_1.\gamma.v_2.g$
for some $\gamma\in D' \setminus \{b_1'\}$. First, and since
$F_G$ contains rules ``kept'' from $G$, it can mimic any
$G$-step that does not delete a letter from $A + b_1$. For steps
$v_1.a.d.v_2 \stepgram_G v_1.d.v_2$ using $d\del a$ with $a \in A + b_1$
and $d \in D$ then, for all $\gamma \in (D' \setminus \{b_1'\})^*$, the
following derivation exists:
\begin{align*}
  & \B_1.\alpha.v_1.a.\gamma.d.v_2 
 \stepgram_{F_G}^{mirror} \B_1.\alpha.v_1.a.\gamma.d'.d.v_2 
\\
\stepgram_{F_G}^{clean^{|\gamma|}} 
& \B_1.\alpha.v_1.a.d'.d.v_2 
\stepgram_{F_G}^{replace} \B_1.\alpha.v_1.d'.d.v_2.
\end{align*}

Once $v_1=\epsilon$, there remains to show that
 $\B_1.\alpha.\gamma.v.g\stepgram^*_{F_G}\B_2.\beta.v.g$ for
 some  $\beta\in C'+b'_2$. This means erasing $\alpha$, and 
 replacing $\gamma$ (that may belong to $D'\setminus C'$), using a primed
 version of the first symbol of $v$. Formally, we use
\begin{align*}
\B_1.\alpha.\gamma.v.g 
& \stepgram_{F_G}^{mirror} \B_1.\alpha.\gamma.\beta.v.g  
  \stepgram_{F_G}^{clean} \B_1.\alpha.\beta.v.g 
\\
& \stepgram_{F_G}^{b\text{-}rules} \B_1.\alpha.\B_2.\beta.v.g
  \stepgram_{F_G}^{clean^{|\B_1.\alpha|}} \B_2.\beta.v.g.
\end{align*}

\textbf{2.}
  If $F_G$ uses a kept rule, then the same rule exists in $G$
  and is also usable. If a mirror, clean, or $b$-rule is used then $G$
  can mimic by  doing nothing. If a replace rule occurs in a step of the form
  $\B_1.\alpha.v_1.a.d'.v_2 \stepgram_G^{replace}
  \B_1.\alpha.v_1.d'.v_2, a \in A, d' \in D', v_1 \in (A + b_1)^*, v_2
  \in (D \setminus \{ b_1\})^+ $ then in a previous step there was the
  letter $d$ at the head of the $(D \setminus \{ b_1\})^+$ part of the
  word (in order to insert $d'$) and $a$ was not deleted. At that time $G$ could
  delete $a$ to finally reach $v_1.v_2$.
\qed

\subsection{Proofs for the Correctness of $H$}
\label{ssec-app-H}

\begin{figure}[htbp]
\centering
{\setlength{\unitlength}{0.60mm}

\begin{tikzpicture}[-arcs,shorten >=1pt,auto,node distance=2.8cm,
                    semithick]

  \node (g) {$g$};
  \node (D) [below right of=g] {$D$};
  \node (D') [below of=D] {$D' \setminus \{b'_1\}$};
  \node (A) [below left of=g] {$A \cup \{b_1\}$};
  \node (A') [below of=A] {$A' \cup \{b_1'\}$};
  \node (B2) [below of=D'] {$\B_2$};
  \node (B1) [below of=A'] {$\B_1$};
  
  \path (D) edge node {mirror} (D')
        (D') edge node {$b$-rules} (B2)
        (g) edge node {kept} (D)
        (D) edge [in=20,out=340,loop] (D);

  \path[-arcs,dash pattern=on 1.5pt off 1.5pt,line width=0.4pt]
        (D') edge  node {replace} (A)
        (B2) edge [bend left=10] node[swap] {$b$-rules} (B1)
        (B2) edge node [swap] {clean} (A')
        (D) edge [in=30,out=330,loop] node [swap] {kept} (D)
        (D') edge [in=10,out=350,loop] node [swap] {clean} (D');

  \path (A) edge node [swap] {mirror} (A')
        (A') edge node [swap] {$b$-rules} (B1)
        (g) edge node [swap] {kept} (A)
        (A) edge [in=173,out=187,loop] (A);

  \path[-arcs,dash pattern=on 1.5pt off 1.5pt,line width=0.4pt]
        (A') edge (D)
        (B1) edge [bend left=10] (B2)
        (B1) edge (D')
        (A) edge [in=170,out=190,loop] node {kept} (A)
        (A') edge [in=170,out=190,loop] node {clean} (A');

\end{tikzpicture}
}
\caption{Type of $H_G$.}
\label{fig-type-H}
\end{figure}
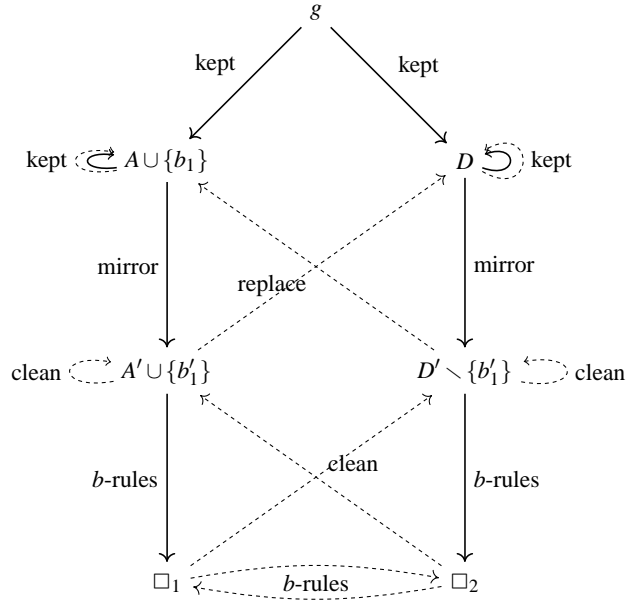

Let $\alpha, \alpha' \in (A'+b_1')^+$ and $u,v \in (A+b_1)^*$. 
\begin{lemma}
\label{nobb}
If the
  derivation $r = \B_1.\alpha.u.g \stepgram_{H}^* \B_1.\alpha'.v.g$
  is greedy, then there are no insertions of a $\B_1$ to the right of
  another $\B_1$ or of a letter  $\in(A'\cup b_1')$, and there also are no insertions of
a  $\B_2$ to the right of another $\B_2$ or a of letter  $\in(D'\setminus \{b_1\})$.
\end{lemma}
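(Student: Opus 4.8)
The plan is to derive both claims from \emph{purity} alone; greediness also gives leftmostness and eagerness, but purity is what forces the conclusion here. Everything rests on one structural fact about $H$: \textbf{no rule of $H=F_G\cup F_\renaming$ has $\B_1$ or $\B_2$ as its active (left-inserting) symbol}. To establish this I would go through the five rule groups (\emph{kept}, \emph{replace}, \emph{mirror}, \emph{clean}, \emph{$b$-rules}) of both $F_G$ and the renamer wrapping $F_\renaming$: the only insertions are the mirror rules $d\ins d'$, the insertions kept from $G$ and from the renamer, and the $b$-rules $d'\ins\B_2$ of $F_G$ and $a'\ins\B_1$ of $F_\renaming$; in every case the active symbol is an ordinary input/working/output letter (or $g$), never an anchor. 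In the same pass I would read off the deletion alphabets $\DEL(\B_1)=\{\B_2\}\cup C'\cup\{b_2'\}$ and $\DEL(\B_2)=\{\B_1\}\cup A'\cup\{b_1'\}$ from the clean- and $b$-rules.

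The consequence I would exploit is a \emph{frozen left-neighbour} property. Whenever a step inserts an anchor $\B_i$, it appears as $\dots X\,\B_i\,z\dots$, with $z$ the inserting letter and $X$ the letter immediately to its left. As long as this occurrence of $\B_i$ survives, $X$ remains its left neighbour: nothing can be inserted between them (that would require $\B_i$ to insert, but $\B_i$ has no insertion rule), and $X$ cannot be deleted, since the only letter that could delete $X$ is its right neighbour, which is exactly this $\B_i$. Hence the sole letter this $\B_i$ can ever erase is $X$, and it inserts nothing whatsoever. For it to be \emph{useful} in the sense of purity it must therefore either delete $X$ or belong to the last word $\B_1.\alpha'.v.g$ (it is not in the first word, having just been inserted).

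Now I argue by contradiction. For the first claim, suppose a $\B_1$ is inserted with $X\in\{\B_1\}\cup A'\cup\{b_1'\}$. Since $\DEL(\B_1)=\{\B_2\}\cup C'\cup\{b_2'\}$ contains none of these symbols (using $A'\cap C'=\emptyset$, $b_1'\neq b_2'$ and $\B_1\neq\B_2$), this $\B_1$ deletes nothing. Nor can it lie in the last word: if it did, its frozen neighbour $X$ would lie there too, immediately to its left, so this $\B_1$ would not be the left-most symbol, contradicting the fact that the unique $\B_1$ of $\B_1.\alpha'.v.g$ sits at the far left. Thus the inserted $\B_1$ is useless, contradicting purity. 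The second claim is even simpler: if a $\B_2$ is inserted with $X\in\{\B_2\}\cup(D'\setminus\{b_1'\})$, then $\DEL(\B_2)=\{\B_1\}\cup A'\cup\{b_1'\}$ again avoids $X$, so the $\B_2$ deletes and inserts nothing; and since no $\B_2$ occurs in either endpoint word it cannot be useful either.

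The main obstacle is the bookkeeping behind the structural fact, not the logic that follows it: one must chase the rules of $H$ through a \emph{double} wrapping --- most delicately $F_\renaming=F_{\renaming_{b_2,b_1},\B_2,\B_1}$, where the anchors are swapped and the primed copies renamed --- to be certain both that $\B_1,\B_2$ are genuinely never inserters and that the two deletion alphabets are exactly as claimed. Once that rule table is pinned down, the disjointness of the alphabets closes every case mechanically.
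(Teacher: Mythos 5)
Your proof is correct and takes essentially the same route as the paper's (sketched) proof: the inserted anchor can neither insert nor delete from that position, so it permanently keeps a letter to its left, hence it cannot be the unique leftmost $\B_1$ of the final word (and no $\B_2$ occurs there at all), making it useless and contradicting purity. Your write-up merely makes explicit the bookkeeping the paper leaves implicit, namely $\INS(\B_1)=\INS(\B_2)=\emptyset$ and the two deletion alphabets $\DEL(\B_1)$, $\DEL(\B_2)$ of $H$.
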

\begin{proof}[Idea]
  If there is that kind of insertion, then the $\B$ is kept until the
  end of the derivation or deleted. Since that letter can't insert or
  delete in that position, then if it is kept, there is always some letter
  on its left. There is only one $\B$ at the end of the derivation, so
  the letter is deleted. That $\B$ has no descendant, is not present
  at the end and did not delete anything, so it is useless, which
  contradicts the greediness hypothesis.
\end{proof}

\begin{lemma}
\label{invara'b}
\label{invaraa'}
The following three languages are invariants of $H$:
\begin{align}
\tag{$I_1$}
& \Sigma^*.(A + D).\Sigma_\B.\Sigma^*,
\\
\tag{$I_2$}
& \Sigma^*.(A+b_1).( \Sigma_\B + A' + b_1' ).\Sigma^*,
\\
\tag{$I_3$}
& \Sigma^*.(D\setminus b_1).( \Sigma_\B + D' \setminus
  b_1').\Sigma^*.
\end{align}
\end{lemma}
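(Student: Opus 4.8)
The plan is to unfold the definition of invariant and reduce each of $I_1,I_2,I_3$ to a statement about a single adjacency, then run a finite case analysis over the rule schemas of $P_H=P'\cup P'_\renaming$. First I would reformulate membership: $u\in I_k$ iff $u$ contains two adjacent letters $x,y$ with $x\in L_k$ and $y\in R_k$, where $(L_1,R_1)=(A\cup D,\ \Sigma_\B)$, $(L_2,R_2)=(A\cup\{b_1\},\ \Sigma_\B\cup A'\cup\{b_1'\})$, and $(L_3,R_3)=(D\setminus\{b_1\},\ \Sigma_\B\cup(D'\setminus\{b_1'\}))$. Call such a pair a \emph{witness}. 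Since a single step $ug\stepgram_H vg$ acts locally (the active letter only inserts or deletes immediately to its left), a witness $xy$ can be disturbed only in three ways: the active letter is $y$ and either deletes $x$ (a rule $y\del x$) or inserts a letter $z$ between $x$ and $y$ (a rule $y\ins z$); or the active letter $z$ sits immediately to the right of $y$ and deletes $y$ (a rule $z\del y$). In every other position the adjacency $xy$ survives untouched, so $v\in I_k$ holds trivially. Thus it suffices to check that each of these three local interactions either cannot occur or immediately produces a fresh witness.

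The bulk of the work is then bookkeeping over the five rule groups (kept, replace, mirror, clean, $b$-rules) of $F_G$ and of $F_\renaming$, for which I would first write out $P'_\renaming$ explicitly by instantiating the wrapping construction on $\renaming_{b_2,b_1}$. A uniform pattern then emerges. Deletions of a right member $y\in R_k$ are only ever performed either by a $\Sigma_\B$-letter (the $b$-rules $\B_2\del\B_1$, $\B_1\del\B_2$ and the clean rules $\B_i\del\cdot$) or by another primed letter of $D'\setminus\{b_1'\}$, resp.\ of $A'\cup\{b_1'\}$ (the clean rules $d'\del e'$); in each case the deleting letter $z$ again lies in $R_k$, so $xz$ is a new witness. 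Symmetrically, the only insertions whose active letter lies in $R_k$ are the $b$-rules $d'\ins\B_2$ and $d'\ins\B_1$, which insert a $\Sigma_\B$-letter and hence again yield a witness $x\B_i$. For the left member I would observe that every rule erasing an $L_k$-letter carries an active letter outside $R_k$: kept rules of $G$ and of $\renaming$ delete with an unprimed active letter $\notin R_k$; the replace rules $a_i'\del c_i$ and $b_1'\del b_2$ of $F_\renaming$ erase a $C$- resp.\ a $b_2$-letter but with active letters $a_i'\in A'$, $b_1'\notin R_3$; and the $\Sigma_\B$- and clean-deletions only erase $\Sigma_\B$-letters or primed letters, never an $L_k\subseteq A\cup D$. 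This already closes $I_1$ and $I_3$ under all rule groups.

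The one genuinely delicate point, and the main obstacle, is the left-member case for $I_2$. The replace group of $F_G$ contributes, for every $G$-rule $d\del a$ erasing a letter $a\in A\cup\{b_1\}=L_2$, a rule $d'\del a$; and since $b_1'\in B'\subseteq D'$ lies in $R_2$, a replace rule of shape $b_1'\del a$ would delete a left member with an active letter inside $R_2$, and (as the word $\B_2\,a\,b_1'$ shows) this need not leave any witness behind. I would dispatch this as the only case requiring an external ingredient, namely the passivity of the input anchor: for a well-formed ALTr the first anchor $b_1$ is only erased, never active — exactly as in $\renaming_{b_2,b_1}$, whose first anchor $b_2$ satisfies $\DEL(b_2)=\emptyset$ — so $\DEL(b_1)=\emptyset$ and no replace rule $b_1'\del a$ exists. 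With that fact in hand, no rule deletes an $L_2$-letter through an $R_2$-active letter, the three adjacencies are preserved or regenerated under every rule group, and the lemma follows.
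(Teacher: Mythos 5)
Your proof is correct and takes essentially the same approach as the paper's: reduce membership in each $I_k$ to the presence of one adjacent witness pair, observe that a leftist step can disturb it only via the three local interactions you list, and check these against the rule groups of $P'\cup P'_\renaming$ — the paper does exactly this, but only writes it out for $I_1$ and dismisses $I_2,I_3$ as ``similar arguments''. The delicate point you isolate (a replace rule $b_1'\del a$ would break $I_2$) is genuine, and your passivity assumption $\DEL(b_1)=\emptyset$ is legitimate rather than an extraneous hypothesis: the paper's wrapping construction already forces it implicitly, since $b_1$ and $b_1'$ are placed in the input alphabet $\overline{A}$ of $F_{G}$, which can only be the ALTr the paper claims it is if these symbols are inactive (so $G$ has no rule with $b_1$ active, and hence no replace rule $b_1'\del a$ ever arises).
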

\begin{proof}
We only prove the first invariant, the other two rely on similar arguments.
  Let $w = u.l.\B.v$ with $l \in A \cup D$ and $\B \in \Sigma_\B$.  The
  invariance of $I_1$
  could be violated by inserting a symbol between $l$ and $\B$, or
  deleting $l$ or $\B$. $\B$ cannot delete $l$ or insert letters, and
  $\B$ can only be deleted by a letter from $\Sigma_\B$, a situation where the
  invariant is preserved.
\qed
\end{proof}


We say that a word $a.w$ \emph{blocks} a language $L\subseteq \Sigma^*$ if
  for all $v\in L$, for all $u\in \Sigma^*$, for all derivations
$\pi = (u.a.w.v.g \stepgram^* xg)$ where $uw$ is inert, $u.a$ is not deleted.
The definition means that no $vg$ with $v\in L$ can erase anything left of $w$.
Obviously, since we only consider derivations with $uw$ inert, any $w'$
with $a.w\subword w'$ blocks $L$ when $a.w$ does.

The arguments used to prove Lemma~\ref{invaraa'}
can be reused to show
the following:
\begin{lemma}
\label{lem-blocks}
For all $a \in (A \cup \{b_1\})$ and all $d \in (D \setminus \{b_1\})$,
\begin{align*}
&\text{$a$ blocks $(A'+b_1'+\B_1+\B_2).\Sigma^*$},
\\
&\text{$d$ blocks $(D'\setminus \{b_1'\} +\B_1+\B_2).\Sigma^*$},
\\
&\text{$a.d$ blocks $(A \cup A' \cup \{b_1,b_1',\B_1\}).\Sigma^*$},
\\
&\text{$d.a$ blocks $(D \cup D'\setminus \{ b_1, b_1'\} \cup \{\B_2\}).\Sigma^*$},
\\
&\text{$a.d'$ blocks $(A' \cup \{b_1'\}).\Sigma^*$},
\\
&\text{$d.a'$ blocks $(D' \setminus \{b_1'\}).\Sigma^*$},
\\
&\text{$a.d'.d.a$ blocks $\Sigma^*$},
\\
&\text{$d.a'.a.d$ blocks $\Sigma^*$}.
\end{align*}
\end{lemma}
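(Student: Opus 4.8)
The plan is to prove all eight statements together by the same local, step-by-step analysis used for Lemma~\ref{invaraa'}, exploiting the fact that in a leftist grammar deletion is \emph{local and right-directed}: a letter can only be erased by its immediate right neighbour, via a rule carried by that neighbour. First I would invoke Prop.~\ref{prop-greedy} to restrict attention to greedy derivations $\pi=(u.a.w.v.g\stepgram^* xg)$ with $uw$ inert, so that all erasing pressure on the barrier $a.w$ comes from the right, i.e.\ from $v$ and its descendants (the only other possibly active letter is the leftmost $a$, but its own actions reach leftward into $u$ and cannot erase $a$). Consequently, ``$u.a$ is not deleted'' reduces to showing that the leftmost barrier letter is never the left target of a deletion, and for this it suffices to \emph{control the letter that sits immediately to the right of the surviving barrier prefix} throughout $\pi$.

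The two facts I would extract from the rule set $P_H=P'\cup P'_\renaming$ by direct inspection are: (i) a letter $a\in A\cup\{b_1\}$ can be erased \emph{only} by a \textbf{replace} rule $d'\del a$ with $d'\in D'=C'\cup B'$ (neither the \textbf{kept}, \textbf{clean}, \textbf{mirror} nor the $b$-rules of $F_G$, nor any rule of $F_\renaming$, erase an unprimed input letter, since the renamer never erases its own output $A$); and (ii) a letter $d\in D\setminus\{b_1\}$ can be erased only by a \textbf{kept} rule $e\del d$ with $e\in D\cup\{g\}$, or by a renamer rule acting on $C\subseteq D$. With these in hand, each blocked language $L$ is precisely a set of words whose leading letters keep the barrier's right neighbour inside a set $S$ that is \emph{stable} and \emph{deleter-free}: no letter of $S$ carries a rule erasing the current leftmost barrier letter, and---crucially---no letter of $S$ can \emph{insert} such a deleter to its left (the dangerous move for $a$ being ``bring a $d\in D$ next to $a$, fire the mirror rule $d\ins d'$, then erase by replace $d'\del a$''). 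Stability of $S$ under single steps is exactly the content of the adjacency invariants $I_1,I_2,I_3$ of Lemma~\ref{invaraa'} read locally, together with Lemma~\ref{nobb} forbidding spurious $\B$-insertions; so the single-letter barriers (that $a$ blocks $(A'+b_1'+\B_1+\B_2).\Sigma^*$ and $d$ blocks $(D'\setminus\{b_1'\}+\B_1+\B_2).\Sigma^*$) and the two-letter barriers follow almost immediately by matching $L$ against $S$.

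For the composite barriers I would argue by peeling from the right, in a simultaneous induction that lets the shorter claims feed the longer ones: to erase the leftmost letter of $a.d'.d.a$ (resp.\ $d.a'.a.d$), the cascade from $v$ must first consume the three inner letters, but after each consumption the word again exhibits a shorter barrier whose blocking property is already available, so the reachable right neighbour of the leftmost letter stays deleter-free. The main obstacle is exactly these two length-four statements blocking all of $\Sigma^*$: here $v$ is unrestricted and can itself manufacture deleters of $a$ via \textbf{mirror} rules, so the argument must show that the rigid flanked pattern $a.d'.d.a$ (a ``mirror-just-fired, not-yet-cleaned'' configuration wedged between two copies of $a$) is self-protecting---every attempt to eat through it re-creates a configuration covered by the shorter barriers---and that gluing $F_G$ with $F_\renaming$ inside the non-LTr grammar $H$ opens no side channel, which is the point where the invariants $I_1$--$I_3$ and Lemma~\ref{nobb} do the real work.
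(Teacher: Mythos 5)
Your core plan is the right one, and it is in fact all the paper itself offers (the paper gives no detailed proof of Lemma~\ref{lem-blocks}, saying only that the arguments proving Lemma~\ref{invaraa'} can be reused): inspect $P_H$ to record that letters of $A\cup\{b_1\}$ can only be erased by \textbf{replace} rules $d'\del a$ carried by primed letters of $D'$, and letters of $D\setminus\{b_1\}$ only by \textbf{kept} rules or by the renamer's replace rules carried by $A'\cup\{b_1'\}$; then show that each blocked language confines the right neighbour of the protected letter to a set that is stable under the three local events (the neighbour inserts to its left, the neighbour is deleted by its own right neighbour, the neighbour acts on the protected letter) and contains no deleter of it; finally let the composite barriers peel from the right, each deletion of the rightmost barrier letter yielding a configuration covered by a shorter blocking statement. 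This step-local analysis, run on arbitrary derivations exactly as in the proofs of the invariants $I_1$--$I_3$, proves all eight statements.

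The genuine gap is in your scaffolding, which as stated does not hold up. First, you cannot ``restrict attention to greedy derivations'' via Prop.~\ref{prop-greedy}: blocking is universally quantified over \emph{all} derivations $u.a.w.v.g\stepgram^* xg$ in which $uw$ is inert, and Prop.~\ref{prop-greedy} only supplies a greedy derivation with the same \emph{extremities}. The greedy equivalent need not keep $uw$ inert (purity does not forbid a letter of $u$ from becoming active, e.g.\ inserting a letter that performs a deletion achieved differently in the original derivation), and ``$u.a$ is deleted'' is a property of occurrences, not of extremities (a derivation may erase $a$ and later insert an identical symbol), so a counterexample to blocking does not transfer to a greedy counterexample. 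Second, Lemma~\ref{nobb} cannot be invoked at all: its hypotheses require a \emph{greedy} derivation between anchored words $\B_1.\alpha.u.g$ and $\B_1.\alpha'.v.g$, whereas in the blocking setting the left context, the word $v\in L$, and the final word $x$ are arbitrary. Since you explicitly route the ``real work'' of the hard cases through greediness and Lemma~\ref{nobb}, the proof as written fails there. Fortunately neither device is needed: $\B$-insertions are harmless because $\B_1$ and $\B_2$ lie in (and are stable inside) every safe set arising in the eight statements, so the purely local neighbour-tracking argument already closes all cases for arbitrary derivations. Excise the greedy/nobb reduction and run your local analysis directly, and the proof is sound and coincides with the paper's intended one.
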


We now have the necessary tools to prove
that a greedy derivation by $H$ alternates between two specific modes where
there is no confusion between $G$ steps and steps by the renamer:
We define $L_{AC},L_{CA}\subseteq \Sigma_g^*$ as
\begin{align*}
L_{AC} &= \{\B_1.(A' + b_1')^{n_1}.\B_2^{n_2}.(A+b_1)^{n_3}.(D' \setminus
b_1')^{n_4}.(D \setminus b_1)^{n_5}.g ~|~ \text{s.t.\  }\Ccal \},
\\
L_{CA} &= \{\B_2.(D'\setminus b_1')^{n_1}.\B_1^{n_2}.(D \setminus
b_1)^{n_3}.(A'+ b_1')^{n_4}.(A + b_1)^{n_5}.g~|~\text{s.t.\ }\Ccal \},
\end{align*}
where numbers $n_1,n_2,\ldots,n_5\in\Nat$ can take any values that respect
the $\Ccal$ constraint:
\begin{align}
\notag
                                         &n_2 \leq 1,
\\
\notag
\wedge \text{ if } n_1 = 0 \text{ then } &n_2 > 0,
\\
\tag{$\Ccal$}
\wedge \text{ if } n_2 > 0 \text{ then } &n_3 = 0,
\\
\notag
\wedge \text{ if } n_3 = 0 \text{ then } &n_4 > 0,
\\
\notag
\wedge \text{ if } n_4 > 0 \text{ then } &n_5 > 0.
\end{align}

Let $\alpha, \beta \in A'^+$, and $u,v \in A^*$.
\begin{lemma}
\label{lem-lac-lca}
If there is a derivation $ \B_1.\alpha.u.g \stepgram^*_{H} \B_1.\beta.v.g$,
then there is an equivalent greedy one such that each step is in $L_{AC}
\cup L_{CA}$.
\end{lemma}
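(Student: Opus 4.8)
The plan is to prove Lemma~\ref{lem-lac-lca} by starting from any derivation $\B_1.\alpha.u.g\stepgram^*_H \B_1.\beta.v.g$, invoking Proposition~\ref{prop-greedy} to replace it by an equivalent \emph{greedy} one (whose subderivations are then also greedy), and then showing that every intermediate word of this greedy derivation already lies in $L_{AC}\cup L_{CA}$. The bulk of the work is thus a structural analysis of the words reachable along a greedy $H$-derivation between two configurations of the stated form. First I would fix notation and observe that both endpoints $\B_1.\alpha.u.g$ and $\B_1.\beta.v.g$ have the shape prescribed by $L_{AC}$ (with $n_2=n_4=n_5=0$, and $n_1>0$ forcing the constraint $\Ccal$ to hold vacuously); so it suffices to show the class $L_{AC}\cup L_{CA}$ is closed along greedy steps and that one cannot leave it.

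The key steps, in order, are as follows. I would first apply Lemma~\ref{nobb} to the greedy derivation to rule out the ``bad'' insertions: no $\B_1$ gets inserted to the right of another $\B_1$ or of a symbol in $A'\cup\{b_1'\}$, and no $\B_2$ gets inserted to the right of another $\B_2$ or of a symbol in $D'\setminus\{b_1'\}$. This immediately bounds the anchor symbols, matching the $n_2\leq 1$ clause and the positional constraints in $\Ccal$. Next I would use the invariants $I_1,I_2,I_3$ of Lemma~\ref{invaraa'} to forbid the remaining interleavings: $I_1$ says an unprimed $A\cup D$ letter can never sit immediately left of an anchor, while $I_2$ and $I_3$ constrain which symbols may follow an unprimed $A$-letter, respectively an unprimed $D$-letter. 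Together these invariants carve out exactly the block structure $\B_1.(A'+b_1')^*.\B_2^{\leq1}.(A+b_1)^*.(D'\setminus b_1')^*.(D\setminus b_1)^*$ of $L_{AC}$ and its mirror $L_{CA}$. The blocking facts of Lemma~\ref{lem-blocks} then supply the implicational clauses of $\Ccal$ (``if $n_3=0$ then $n_4>0$'', etc.): a configuration that violated one of these would have a prefix that \emph{blocks} the suffix language, so the derivation could never reach the target $\B_1.\beta.v.g$, contradicting that we have a genuine derivation to that endpoint. Finally I would verify that every single greedy rewrite step carries a word of $L_{AC}\cup L_{CA}$ to another such word, typically toggling between the two modes precisely when an anchor is consumed or produced; this is a finite case analysis over the five rule groups (\textbf{kept}, \textbf{replace}, \textbf{mirror}, \textbf{clean}, \textbf{$b$-rules}) of $P'$ and $P'_\renaming$.

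The main obstacle I anticipate is the interface between the ``local'' facts (invariants and blocking lemmas, which each only forbid a single forbidden pattern) and the ``global'' constraint set $\Ccal$, whose implicational clauses are statements about the \emph{whole} word simultaneously. The invariants of Lemma~\ref{invaraa'} are genuinely preserved under every step, so they hand us the positional layout for free; but the conditional clauses of $\Ccal$ are not themselves invariants — they are consequences of the fact that the derivation \emph{terminates} in a specific $L_{AC}$-word. Extracting them therefore requires the blocking argument, and the delicate point is to argue that greediness (in particular eagerness and purity) forbids a configuration in which, say, a block of $(A+b_1)$ letters survives with no following $(D'\setminus b_1')$ block: such a configuration would strand letters that can neither be consumed by what lies to their right nor reach the final word, so purity is violated. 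I would organize this as a separate claim showing that the only greedy successors of an $L_{AC}$-word are again $L_{AC}$-words or $L_{CA}$-words respecting $\Ccal$, proved by inspecting which active letter and which rule group is permitted at each block boundary, and leaning on Lemma~\ref{lem-blocks} to discharge the cases where a premature deletion across a boundary would otherwise be possible.
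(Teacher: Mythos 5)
There is a genuine gap. Your plan hinges on the claim that, once you invoke Prop.~\ref{prop-greedy}, the resulting greedy derivation automatically stays inside $L_{AC}\cup L_{CA}$ --- i.e., that $L_{AC}\cup L_{CA}$ is closed under greedy steps between such endpoints. That claim is false, and the paper's own case analysis pinpoints where it fails: consider a word $\B_1.X.Z.T.g \in L_{AC}$ with $n_1>0$, $n_2=0$, $n_3=0$, $n_4>0$, $n_5>0$ (a stage where the primed block $Z$ and unprimed block $T$ of working symbols are still present), and let the active letter be $g$ inserting some $a\in A\cup\{b_1\}$ at the right end. Nothing in leftmostness, purity, or eagerness forbids this step (eagerness only forces a pending deletion in the sub-cases with $n_2>0$), yet the resulting word has an $(A+b_1)$-letter to the right of the $(D\setminus b_1)$-block and hence lies outside $L_{AC}\cup L_{CA}$. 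So greediness alone does not confine the derivation, and your ``finite case analysis showing every greedy successor of an $L_{AC}$-word is again in $L_{AC}\cup L_{CA}$'' cannot be completed; this is also why the lemma is phrased as ``there is an equivalent greedy derivation that stays inside'' rather than ``every greedy derivation stays inside''.

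The paper instead argues by an extremal argument: among all equivalent greedy derivations, pick one with the \emph{fewest} steps outside $L_{AC}\cup L_{CA}$, assume this number is positive, and examine the first offending step. Most branches of the case analysis are discharged roughly as you envisage --- via Lemma~\ref{nobb}, the invariants of Lemma~\ref{invaraa'}, and the blocking facts of Lemma~\ref{lem-blocks}, the latter being used to show that certain escapes can never reach the target word $\B_1.\beta.v.g$ (not, as you suggest, to derive the clauses of $\Ccal$ as such). But the case described above is not impossible: it is resolved by surgery on the derivation --- inserting the $\B_2$ earlier, erasing the $\B_1.X$ prefix before the offending insertion, and replaying the remaining steps --- which yields an equivalent greedy derivation with strictly fewer steps outside $L_{AC}\cup L_{CA}$, contradicting minimality. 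This rearrangement step, and the global minimality framework that makes it count as a contradiction, is exactly what your proposal is missing; the invariants and blocking lemmas only handle the cases where escaping is genuinely impossible, not the case where it is merely avoidable.
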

\begin{proof}
Consider the greedy derivation with the fewest
  number of steps out of $L_{AC} \cup L_{CA}$. 

We assume that this number is $>0$ and obtain a contradiction, thus proving
  the Lemma. For this we consider the first step that goes out of $L_{AC}
  \cup L_{CA}$. Assume that this step is $wg\stepgram_H \ldots$ for $w.g
  \in L_{AC}$ (hence $w$ can be written under the form
  $\B_1.X.\B_2^{n_2}.Y.Z.T.g$) and proceed by a case analysis of which
  letter is active in this step:
  \begin{itemize}
  \item $\B_1$ has no rule that can be applied here.
  \item some letter $x\in A' \cup b_1'$ in $X$:
    \begin{itemize}
    \item $x \del a'$, $a' \in A' \cup b_1'$ stays in $L_{AC}$,
    \item $x \del c$, $c \in C$ is not usable,
    \item $x \ins \B_1$ forbidden by
      Lemma~\ref{nobb} (whether the insertion is at the head of $X$ or
      inside it),
    \end{itemize}
  \item $\B_2$:
    \begin{itemize}
    \item $\B_2 \del \B_1$ leads to $L_{CA}$ (since necessarily in this
      case $n_1 = 0$, $n_2 = 1$, $n_3 = 0$, $n_4 > 0$, and $n_5 > 0$),
    \item $\B_2 \del a'$, $a'\in A' \cup b_1'$ stays in $L_{AC}$,
    \end{itemize}
  \item some letter  $x\in A \cup b_1$ in $Y$:
    \begin{itemize}
    \item $x \del a$ or $x \ins a$, $a \in A \cup b_1$ stays in $L_{AC}$,
    \item $x \ins a'$,  $a' \in A' \cup b_1'$ stays in $L_{AC}$ if inserted
      at the head of $Y$, is forbidden by Lemma~\ref{nobb} if inside
      of $Y$,
    \end{itemize}
  \item some letter $x\in D' \setminus b_1'$ in $Z$:
    \begin{itemize}
    \item $x \del d'$, $d' \in D' \setminus b_1'$ stays in $L_{AC}$,
    \item $x \del a$, $a \in A \cup b_1$ stays in $L_{AC}$,
    \item $x \ins \B_2$ if $n_3 >0$ or inserts inside $Z$, forbidden by
      Lemma~\ref{invara'b}, else stays in $L_{AC}$ if $n_2 = 0$,
      else forbidden by Lemma~\ref{nobb},
    \end{itemize}
  \item some letter $x\in D\setminus b_1$ in $T$:
    \begin{itemize}
    \item $x \del d$, or $x \ins d$, $d \in D\setminus b_1$ stays in $L_{AC}$,
    \item $x \ins d'$, $d' \in D'\setminus b_1'$ stays in $L_{AC}$ if
      inserts at head of $T$,  else forbidden by Lemma~\ref{invaraa'},
    \end{itemize}
  \item $g$:
    \begin{itemize}
    \item $g \ins d$, $d \in D\setminus b_1$ stays in $L_{AC}$,
    \item $g \ins a$, $a \in A\cup b_1$:
      \begin{itemize}
      \item if $n_5 = 0$ then $n_4 = 0$ and $n_3 > 0$ so the step  stays in $L_{AC}$,
      \item if $n_5 > 0$:
        \begin{itemize}
        \item if $n_4 = 0$ then $n_3 > 0, n_2 = 0$ and $n_1 > 0$:
          
          Then $w = \B_1.X.Y.T$ with $\size{T}>0$. We write $T=yv$ with $y
          \in D\setminus b_1$ and let $w' = w.a$. After the insertion of
          $a$, $w$ is inert (by leftmost). Since $y$ does not appear at the
          end of the derivation, it is deleted.

          If $y$ is deleted by a descendant of a letter of $A \cup b_1$, then
          it is by some $a'\in A'+b_1'$. The word at the step just after the
          deletion would start with $\B_1.\alpha.u.a'$  which is
          forbidden by Lemma~\ref{invaraa'}.
          
          The only other letters able to delete $y$ are letters from
          $D\setminus b_1$. When $g$ inserts the first $d \in D\setminus b_1$ after $a$, the
          letter at its left is a $a$. So there is $d.a$ as subword of
          the inert letters. Since $d.a$ blocks $(D\setminus b_1).\Sigma^*$, the $d$
          won't be deleted.

        \item if $n_4 > 0$
          \begin{enumerate}
          \item if $n_3 > 0$ then $n_2=0$ and  $w$ is some
$\B_1.X.Y.Z.T$
with non-empty $X$, $Y$ and $Z$. We write $Z=yZ'$ with $y\in D'\setminus
b'_1$ and $w'=wa$.

$\alpha.u.y'.v'.v, \alpha
            \in (A'+b_1')^+, u \in (A+b_1)^+, y' \in D'\setminus b_1',
            v' \in (D'\setminus b_1')^*, v \in (D\setminus b_1)^*$ and
            $w' = w.a$. 

After the insertion of $a$, $w$ is inert (by leftmost). Since $y$ does not
            appear at the end of the derivation, it is deleted.

            It cannot be deleted by descendants of $A \cup b_1$, which
            could be $\B_1$ because $(A+b_1).\B_1$ is a forbidden invariant
            (Lemma~\ref{invaraa'}).

            $\B_1.\alpha.u$ cannot be deleted since $u.y'.v.a$ blocks
            $\Sigma^*$ (Lemma~\ref{lem-blocks}). Hence letters from
            $D' \setminus b_1'$ are eventually deleted by some $\B_1$
            which is not the one on the left and all the accessible
            words are in $\B_1.\alpha.u.\Sigma^*$\-$.(\B_1 +
            \B_2).\Sigma^*.g$, to which $\B_1.\beta.v.g$ does not belong.

          \item if $n_3 = 0$
            \begin{enumerate}
            \item if $n_2 = 0$ then $n_1 > 0$ and $w$ is some
$\B_1.X.Z.T$ with non-empty $X$ and $Z$. We write $Z=yZ'$ with $y\in
	      D'\setminus b'_1$ and $w'=wa$.
              
              The $\B_1.\alpha$ prefix is eventually deleted: the
              letters from $D'\setminus b_1'$ cannot be deleted without introducing a
              $\B_1$ to the right of the first $\B_1$. Since in the
              last word, there is only one $\B_1$, at least one of the
              two is deleted. Since $\Sigma^*.(\B_1 + \B_2).\Sigma^*$
              is an invariant, if the leftmost $\B_1$ is not deleted,
              the word will stay in $\B_1.\Sigma^*.(\B_1 +
              \B_2).\Sigma^*$.
              
              So there is a subderivation of the form
\begin{align*}
\B_1.X.y.Z''.g \stepgram^{r_0}
&              \B_1.X.y.Z'.T.a.g \stepgram^{r_1} \B_2.x.g
\\
              \stepgram^{r_2} &\B_1.x'.g \stepgram^{r_3} \B_1.\beta.T.g
\end{align*}
              such that $\B_1.x'.g$ is the step where the $\B_2$ from
              $\B_2.x.g$ is deleted and $\B_1.X.y.Z''.g$ is the
              last step where $y$ is active or when it was inserted.

              We can transform the sub-derivation such that
              $\B_1.X.y.Z''.g \stepgram \B_1.X.\B_2.y.Z''.g
              \stepgram^{|X| + 1} \B_2.y.Z''.g \stepgram^{r_0}
              \B_2.y.Z.T.a.g \stepgram^{r_1'} \B_2.\B_2.x.g
              \stepgram^{r_2} \B_2.\B_1.x'.g \stepgram \B_1.x'.g
              \stepgram^{r_3} \B_1.\beta.T.g$,  where $r_1'$ is $r_1$
              without the steps deleting letters from $\B_1.X$.

              That derivation is still greedy, and has strictly fewer
              steps out of $L_{AC} \cup L_{CA}$ than the original.

            \item if $n_2 > 0$ and $n_1 > 0$ then the rule $\B_2 \del
              a, a'\in A' \cup b_1'$ has priority over the insertion (by eagerness),

            \item if $n_2 > 0$ and $n_1 = 0$ then the rule $\B_2 \del
              \B_1$ has priority over the insertion (by eagerness).

            \end{enumerate}
          \end{enumerate}
        \end{itemize}
      \end{itemize}
    \end{itemize}
  \end{itemize}

  The case where $w.g \in L_{CA}$ is symmetrical. The only differences
  are the reasons why some letters must be deleted. In the 
  $L_{AC}$ case, when a letter $a \in A \cup A' \cup \{ b_1,b_1'\}$ is inert
  and has some letter from $D' \cup D$ at its left, it must be deleted
  to permit the deletion of the letters from $D \cup D'\setminus
  \{b_1',b_1\}$. In the $L_{CA}$  case, the equivalent letter is in $D
  \cup D'\setminus \{b_1',b_1\}$ so is not present at the end.
\qed
\end{proof}

\subsection{Proof of Lemma~\protect\ref{lem-H}}
\label{app-lem-H}

Let $\alpha, \beta \in (A' + b_1')^+$ and $u,v \in (A+b_1)^*$.
\begin{lemma}
\label{lem-exist-ui}
If there is a derivation $
  \B_1.\alpha.u.g \stepgram_{H}^* \B_1.\beta.v.g$, then there exists
some $n$ and some words  $u_1,v_1,u_2,\ldots,u_n,v_v,u_{n+1}$ such that,
for all $i\leq n$,  $u_i \in
\B_1.(A'+b_1')^+.(A+b_1)^+$,
 $v_i \in \B_2.(D'\setminus
\{b_1\})^+.(C+b_2)^+$ and 
\begin{align*}
\left\{
\begin{array}{l}
 u_i.g \stepgram_{F_G}^+ v_i.g
  \stepgram_{F_{\renaming}}^+ u_{i+1}.g,
\\
\B_1.\alpha.u.g
  \stepgram_{F_{\renaming}}^* u_1.g,
  u_{n+1}.g \stepgram_{F_{\renaming}}^*
  \B_1.\beta.v.g.
\end{array}\right. 
\end{align*}
\end{lemma}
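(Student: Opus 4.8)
The plan is to build on Lemma~\ref{lem-lac-lca}, which already confines us to a well-behaved subclass of $H$-derivations. First I would invoke that lemma to replace the given derivation $\B_1.\alpha.u.g \stepgram^*_H \B_1.\beta.v.g$ by an equivalent greedy one, call it $\pi$, all of whose intermediate words lie in $L_{AC}\cup L_{CA}$. The key observation is then that the leading symbol of every word of $\pi$ is an anchor, namely $\B_1$ when the word is in $L_{AC}$ and $\B_2$ when it is in $L_{CA}$, and that this leading anchor can only change through a step that deletes it. Using Lemma~\ref{nobb} and the invariants of Lemma~\ref{invaraa'} to rule out spurious anchors, one checks that the only rules able to erase a leading $\B_1$ (resp.\ $\B_2$) are the $b$-rules $\B_2\del\B_1\in P'$ (resp.\ $\B_1\del\B_2\in P'_\renaming$). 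Hence $\pi$ factors into maximal blocks of constant leading anchor, hinged at these swap steps, and since both endpoints of $\pi$ are in $L_{AC}$ the block sequence alternates $\B_1,\B_2,\ldots,\B_2,\B_1$.

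Next I would attribute the rules used inside each block to a single transformer. This is where the bulk of the work lies, and it is essentially a re-reading of the exhaustive case analysis already carried out in the proof of Lemma~\ref{lem-lac-lca}: for each symbol that may be active in an $L_{AC}$ word, the applicable rule (those not excluded by greediness, by Lemma~\ref{nobb}, or by the blocking properties of Lemma~\ref{lem-blocks}) belongs to $P'$, with the single exception of the output-building rules $g\ins a$, $a_i\ins a_j$ and $a\ins a'$ (for $a,a_i,a_j\in A\cup\{b_1\}$), which are $P'_\renaming$-rules and which, by the contradiction cases of that analysis, can fire only while no $D$-symbol is present, i.e.\ only before the $F_G$ computation of the block has begun. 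Symmetrically for $L_{CA}$ and $P'_\renaming$. Consequently each $\B_1$-block splits as a (possibly empty) run of $F_\renaming$ output-insertions followed by a run of $F_G$-steps ending with the swap $\B_2\del\B_1$, and each $\B_2$-block is a run of $F_\renaming$-steps ending with $\B_1\del\B_2$; because the renamer, like $F_G$ in Lemma~\ref{lem-tr1}, produces its whole output before performing its swap, the leading $F_\renaming$ run is trivial except possibly in the very first and very last blocks.

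I would then read off the decomposition. Let $v_i$ be the word produced by the $i$-th step $\B_2\del\B_1$ and $u_{i+1}$ the word produced by the $i$-th step $\B_1\del\B_2$, let $u_1$ be the word at which $D$-symbols first appear (equivalently, where the first $F_G$-run starts), and take $u_{n+1}$ to be the word after the last swap. Membership of these words in $L_{AC}$ or $L_{CA}$, together with the constraints $\Ccal$, immediately forces the stated shapes $u_i\in\B_1.(A'+b_1')^+.(A+b_1)^+$ and $v_i\in\B_2.(D'\setminus\{b_1'\})^+.(C+b_2)^+$. The blocks then read $u_i.g\stepgram^+_{F_G}v_i.g$ and $v_i.g\stepgram^+_{F_\renaming}u_{i+1}.g$ by the rule attribution above, while the two residual output-insertion runs give the boundary factors $\B_1.\alpha.u.g\stepgram^*_{F_\renaming}u_1.g$ and $u_{n+1}.g\stepgram^*_{F_\renaming}\B_1.\beta.v.g$, which is exactly the claimed form. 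A clean way to organise the write-up is induction on the number of swap steps in $\pi$, peeling off one $F_G$-block together with the following $F_\renaming$-block at each stage.

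The main obstacle is the rule-attribution step: because $F_G$ and $F_\renaming$ share the symbols of $A$ and $C$ (each transformer's input is the other's output), membership of a rule in $P'$ versus $P'_\renaming$ cannot be decided from the active symbol alone, but only from its position within the $L_{AC}/L_{CA}$ layout and from greediness. I expect the delicate points to be precisely the $g\ins a$ cases, where one must argue --- as in Lemma~\ref{lem-lac-lca} --- that inserting an input/output symbol once the complementary computation has started would either strand a letter that can never be deleted (by the blocking Lemma~\ref{lem-blocks}) or violate an anchor invariant (Lemma~\ref{invaraa'}), thereby forcing those insertions into the boundary $F_\renaming^*$ factors and keeping each interior block pure.
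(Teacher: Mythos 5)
Your opening moves coincide with the paper's proof: invoke Lemma~\ref{lem-lac-lca} to confine a greedy derivation to $L_{AC}\cup L_{CA}$, then split it into alternating blocks and attribute each block's steps to $F_G$ or $F_{\renaming}$. The genuine gap is in where you cut. You place the cut points at the anchor-swap steps ($\B_2\del\B_1$ and $\B_1\del\B_2$), and to make the resulting blocks pure you assert that ``the renamer, like $F_G$ in Lemma~\ref{lem-tr1}, produces its whole output before performing its swap.'' Lemma~\ref{lem-tr1} cannot support this: its first part merely \emph{exhibits one} mimicking derivation of $F_G$ with that shape, and says nothing about the structure of an arbitrary greedy $H$-derivation. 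Nothing in confinement to $L_{AC}\cup L_{CA}$ forbids, after the swap $\B_1\del\B_2$ yields a word $\B_1.\delta.u'.g$, further renamer insertions $g\ins a$ (the case analysis of Lemma~\ref{lem-lac-lca} explicitly admits $g\ins a$ on $L_{AC}$ words with $n_5=0$), nor, symmetrically, further $F_G$-steps (kept, mirror or clean steps among $D\cup D'$ letters) after $\B_2\del\B_1$; ruling such steps out would require exactly the kind of eagerness/purity analysis you do not supply, and it is doubtful it can be done at all. If any such step occurs in an interior block, your claimed factorizations $u_i.g\stepgram^+_{F_G}v_i.g$ and $v_i.g\stepgram^+_{F_{\renaming}}u_{i+1}.g$ are false for your choice of $u_i,v_i$. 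The paper sidesteps the issue by cutting elsewhere: $u_i$ is the \emph{last} word of form $\B_1.(A'+b_1')^+.(A+b_1)^+$ before the first insertion of a $D$-letter, and $v_i$ the last word of form $\B_2.(D'\setminus b_1')^+.(D\setminus b_1)^+$ before the first insertion of an $(A+b_1)$-letter. With these cuts, leftmostness freezes every letter present at the cut for the rest of the derivation, so every step of each span is forced to belong to the intended transformer (the only possibly active letters are $g$ and the newly inserted letters and their descendants). You in fact use this correct cut for $u_1$ (``the word at which $D$-symbols first appear''); the repair is to use it for \emph{all} the $u_i$ and $v_i$, absorbing the post-swap residue into the preceding block, rather than to claim the residue is empty.

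A second, smaller gap: you claim the shape $v_i\in\B_2.(D'\setminus\{b_1'\})^+.(C+b_2)^+$ follows ``immediately'' from membership in $L_{CA}$ together with $\Ccal$. It does not: $L_{CA}$ only puts the unprimed segment in $(D\setminus b_1)^+$, which may a priori contain temporary letters from $B$. Excluding these requires the paper's closing argument, which propagates backwards along the renamer block: $F_{\renaming}$ has no rule deleting letters of $B\cup B'\setminus\{b_1,b_1'\}$, and $u_{i+1}$ contains none, hence $v_i$ contained none either.
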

\begin{proof}
  Let $\alpha, \beta \in (A'+b_1')^+, u,v \in (A+b_1)^*$ and a derivation $
  \B_1.\alpha.u.g \stepgram^*_{H} \B_1.\beta.v.g$. By
  Lemma~\ref{lem-lac-lca}, we know that there is a greedy derivation, say
  $w_0\stepgram w_1 \ldots w_n$,
  such that every $w_i$ is in $L_{AC} \cup L_{CA}$. 
  We first note that if $w_i.g \in L_{AC}$
  and $w_{i+1}.g \in L_{CA}$, then $w_{i+1}.g \in \B_2.(D'\setminus b_1')^+.(D\setminus b_1)^+.g$.

  Note also that if $w_i.g \in \B_2.(D' \setminus b_1')^+.(D\setminus
  b_1)^+.g$ and $w_{i+1}.g \not \in \B_2.(D' \setminus
  b_1')^+.(D\setminus b_1)^+.g$ then $w_{i+1}.g \in \B_2.(D' \setminus
  b_1')^+.(D\setminus b_1)^+.(A+b_1).g$.

  We will choose as $v_i$ all such $w_{j_i} \in \B_2.(D' \setminus
  b_1')^+.(D\setminus b_1)^+$ and $w_{j_i + 1} \in \B_2.(D' \setminus
  b_1')^+.(D\setminus b_1)^+.A$. Similarly for $u_i$, $w_{k_i} \in
  \B_1.(A'+b_1')^+.(A+b_1)^+$ and $w_{k_i + 1} \in \B_2.(D' \setminus
  b_1')^+.(D\setminus b_1)^+.(A+b_1)$.
  
  We  now see that $\forall i, j_i \leq k_i$ and $k_i \leq
  j_{i+1}$. This is directly implied by the fact that there is no way
  to go from $\B_2.(D' \setminus b_1')^+.(D\setminus b_1)^+.(A+b_1)$
  to $\B_2.(D' \setminus b_1')^+.(D\setminus b_1)^+$ in $L_{CA}$. So
  there is a step in $L_{AC}$ and the first step is in
  $\B_1.(A'+b_1')^+.(A+b_1)^+$.

  Next, we notice that since from a step $\B_2.(D' \setminus
  b_1')^+.(D\setminus b_1)^+ \stepgram \B_2.(D' \setminus
  b_1')^+.(D\setminus b_1)^+.A$ (included) to a step
  $\B_1.(A'+b_1')^+.(A+b_1)^+ \stepgram
  \B_1.(A'+b_1')^+.(A+b_1)^+.(D\setminus b_1)$ (excluded) there are
  only letters from $\B_1 + A' + A + b_1 + b_1' + g$ active and $g$
  only inserts letters from $A + b_1$. Those are rules from
  $F_{\renaming}$, i.e., $v_i.g
  \stepgram_{F_{\renaming}}^+
  u_{i+1}.g$. Conversely $u_i.g \stepgram_{F_{G}}^+ v_i.g$ by
  the same arguments.

  To conclude, we need to show that in fact $v_i \in \B_2.(C' +
  b_2')^+.(C+b_2)^+$. There are no rule in
  $F_{\renaming}$ deleting letters from $B \cup B'
  \setminus \{b_1,b_1'\}$ so if $v_i.g
  \stepgram_{F_{\renaming}}^+ u_{i+1}.g$, since
  there are no letter from $B \cup B' \setminus \{b_1,b_1'\}$ in
  $u_{i+1}.g$, then there were none in $v_{i}.g$.
\qed
\end{proof}

We can now prove Lemma~\ref{lem-H}:
consider a derivation $\B_1.\alpha.u.g
\stepgram^*_{H} \B_1.\beta.v.g$.
With Lemma~\ref{lem-exist-ui}, we can find $u_i,
  v_i$ such that $u_i \in \B_1.A'^+.A^+$, $v_i \in \B_2.D'^+.C^+$. Then
for all 
  $i \leq n$, 
\begin{align*}
& u_i.g \stepgram^+_{F_{G}} v_i.g
  \stepgram^+_{F_{\renaming}} u_{i+1}.g,
\\
\text{and }&
\B_1.\alpha.u.g
  \stepgram^*_{F_{\renaming}} u_1.g,
  u_{n+1}.g \stepgram^*_{F_{\renaming}}
  \B_1.\beta.v.g.
\end{align*}
  Let us write $u_i, v_i$ as $u_i = \B_1.\alpha.u'_i, \alpha \in A'^+,
  u'_i \in A^+$ and $v_i = \B_2.\beta.v'_i, \beta \in D'^+, v'_i \in
  C^+$.

  This means that $u_i S_{F_G} v_i$ and $v_i
  S_{F_{\renaming}} u_{i+1}$. Hence $u'_i S_{G} v'_i$ and $v'_i
  S_{\renaming} u'_{i+1}$ using
  Lemma~\ref{lem-fg=g}. With
  $S_\renaming = \: \approx.\subword.\bar{h}$ and
  $\subword_A.S_G.\subword_C \:= S_G$ we have $u'_i (S_{G}.\bar{h})
  u'_{i+1}$. So $u'_1 (S_{G}.\bar{h})^n u'_{n+1}$.

  Thus $u_1$ and $u_{n+1}$ are in $\B_1.A'^+.A^+$ and 
\[
\B_1.\alpha.u
  \stepgram^*_{F_{\renaming}}
  \B_1.\alpha'.u'_1
\;\;\;\text{and}\;\;\;
\B_1.\alpha'.u'_{n+1}
  \stepgram^*_{F_{\renaming}} v.
\]
 Since
  $\renaming$ is a simple transformer, it has no
  rules $a \del b$ for $a,b \in A$ so there are only insertion on $A$
  between $u$ and $u'_1$ and between $u'_{n+1}$ and $v$. Thus $u
  \subword u'_1$ and $u'_{n+1} \subword v$.

  This concludes the proof that $u \subword u'_1 (S_{G}.\bar{h})^n u'_{n+1} \subword
  v$.
\qed

\subsection{Proof of Lemma~\protect\ref{lem-H'}}
\label{app-lem-H'}

  Let be a greedy derivation $\dot{\B_1}.\dot{u}.g \stepgram^*_{H'}
  \dot{\B_2}.\B_1.\beta.v.g$.

  First note that $\dot{u}$ is deleted before the insertion of a
  letter from $D$ since every letter from $\dot{A}$ blocks $D$.
  
  There could be a derivation where a word from
  $\dot{\B_1}.A'^*.A^+.D.g$ is reached. But since $\dot{\B_1}$ will be
  eventually deleted by $\dot{B_2}$, it is possible to find another
  greedy derivation where a $\B_1$ is inserted which inserts
  $\dot{B_2}$ and delete $\dot{\B_1}$ before the insertion of $d$.

  The rules used before the first insertion of a letter of $D$ define
  a renaming from $\dot{A}$ to $A$. So there is $u'$ such that $u
  \subword u' (S_G.\bar{h})^* v$.
\qed

\subsection{Finals steps of the construction: $T_1$ and $T_2$}
\label{app-t1-t2}

$T_1$ is defined as
$(A \cup A', \{\B_1,b_1,b_1'\}, \{\dot{\B_2},
o_1\}, \ddot{A} \cup \ddot{A'} \cup \{\ddot{b_1},\ddot{b_1'},\ddot{\B_1}\},
\dot{\B_2}, o_1, P_1,g)$
with the following set of rules:
\begin{description}
\item all $l \del \ddot{l}$, for $l \in \{\B_1,b_1,b_1'\} \cup A \cup A'$,
\item all $g \ins \ddot{a_i}$, for $\ddot{a_i} \in \ddot{A}$,
\item all $\ddot{a_i} \ins \ddot{a_j}$, for $\ddot{a_i},\ddot{a_j} \in \ddot{A}$,
\item all $\ddot{a_i} \ins \ddot{b_1}$, for $\ddot{a_i} \in \ddot{A}$,
\item all $\ddot{b_1} \ins \ddot{a'}$, for $\ddot{a'} \in \ddot{A'} \cup \ddot{b_1'}$,
\item all $\ddot{a'} \ins \ddot{b'}$, for $\ddot{a'},\ddot{b'} \in \ddot{A'} \cup \ddot{b_1'}$,
\item all $\ddot{a'} \ins \ddot{\B_1}$, for $\ddot{a'} \in \ddot{A'} \cup \ddot{b_1'}$,
\item $\ddot{\B_1} \ins o_1$,
\item $o_1 \del \dot{\B_2}$.
\end{description}
We let the reader check that these rules ensure the satisfaction of
\eqref{eq-T1}.
\\

We further define
$T_2 \egdef (\ddot{A} \cup \ddot{A'} \cup
\{\ddot{b_1},\ddot{b_1'},\ddot{\B_1}\}, \{o_1, o_2\} , \dddot{A} ,
o_1, o_2, P_2,g)$
with the following set of rules:
\begin{description}
\item all $g \ins \dddot{a_i}$, for $\dddot{a_i} \in \dddot{A}$,
\item all $\dddot{a_i} \ins \dddot{a_j}$, for $\dddot{a_i},\dddot{a_j} \in \dddot{A}$,
\item all $\dddot{a_i} \ins o_2$, for $\dddot{a_i} \in \dddot{A}$,
\item all $\dddot{a_i} \del \ddot{a_i}$, for $\dddot{a_i} \in \dddot{A}$,
\item all $\dddot{a_i} \del l$, for $\dddot{a_i} \in \dddot{A}$ and $l \in \ddot{A'} \cup \{\ddot{b_1},\ddot{b_1'},\ddot{\B_1}\}$,
\item $o_2 \del o_1$.
\end{description}
We let the reader check that these rules ensure the satisfaction of
\eqref{eq-T2}.





}

\end{document}